\newcommand{\citet}{\cite}
\theoremstyle{plain}
\newtheorem{theorem}{Theorem}
\newtheorem{lemma}{Lemma}
\newtheorem{definition}{Definition}
\newcommand{\openssl}{OpenSSL\xspace}
\newcommand{\LightSLH}{LightSLH\xspace}
\newcommand{\Nat}{\mathbb{N}}
\newcommand{\Int}{\mathbb{Z}}
\newcommand{\Reg}{\mathit{Regs}}
\newcommand{\Val}{\mathit{Vals}}
\newcommand{\Obs}{\mathtt{Obs}}
\newcommand{\Dir}{\mathtt{Dir}}
\newcommand{\kywd}[1]{\mathbf{#1}}
\newcommand{\storeKywd}{\kywd{store}}
\newcommand{\loadKywd}{\kywd{load}}
\newcommand{\jmpKywd}{\kywd{jmp}}
\newcommand{\jzKywd}{\kywd{beqz}}
\newcommand{\pcKywd}{\kywd{pc}}
\newcommand{\fenceKywd}{\kywd{fence}}
\newcommand{\allocKywd}{\kywd{alloc}}
\newcommand{\Obsload}[3]{{\mathtt{load }\ \ensuremath{#1_{[#2,#3]}}}}
\newcommand{\Obsstore}[3]{{\mathtt{store }\ \ensuremath{#1_{[#2,#3]}}}}
\newcommand{\Obsbranch}[1]{{\mathtt{branch }\ \ensuremath{#1}}}
\newcommand{\Obsnone}{{\ensuremath{\epsilon}}}
\newcommand{\DirStep}{\mathtt{step}}
\newcommand{\DirForce}{\mathtt{force}}
\newcommand{\pc}{\pcKywd}
\newcommand{\mem}{\kywd{mem}}
\newcommand{\op}{\ensuremath{\odot}\xspace}
\newcommand{\uop}{\ensuremath{\ominus}\xspace}
\newcommand{\bop}{\ensuremath{\otimes}\xspace}
\newcommand{\addop}{\texttt{Add}\xspace}
\newcommand{\minusop}{\texttt{Minus}\xspace}
\newcommand{\mulop}{\texttt{Mul}\xspace}
\newcommand{\divop}{\texttt{Div}\xspace}
\newcommand{\modop}{\texttt{Mod}\xspace}
\newcommand{\andop}{\texttt{And}\xspace}
\newcommand{\orop}{\texttt{Or}\xspace}
\newcommand{\xorop}{\texttt{Xor}\xspace}
\newcommand{\lshiftop}{\texttt{Shl}\xspace}
\newcommand{\rshlop}{\texttt{Lshr}\xspace}
\newcommand{\rshaop}{\texttt{Ashr}\xspace}
\newcommand{\notop}{\texttt{Not}\xspace}
\newcommand{\passign}[2]{#1 \leftarrow #2}
\newcommand{\pload}[2]{\loadKywd\ #1, #2}
\newcommand{\pstore}[2]{\storeKywd\ #1, #2}
\newcommand{\pcondassign}[3]{#1 \xleftarrow{#3?} #2}
\newcommand{\pjmp}[1]{\jmpKywd\ #1}
\newcommand{\pbranch}[2]{\jzKywd\ #1, #2}
\newcommand{\pfence}{\fenceKywd}
\newcommand{\palloc}[2]{#1 \leftarrow \allocKywd\ #2} 
\newcommand{\muasm}{$\mu$\textsc{Asm}\xspace}
\newcommand{\exprEval}[2]{\ensuremath{\llbracket {#1} \rrbracket_{#2}}}
\newcommand{\conf}[1]{\langle #1 \rangle}
\newcounter{typerule}
\crefname{typerule}{}{}
\newcommand{\typeruleInt}[5]{%
	\def\thetyperule{\textsc{#1}}%
	\refstepcounter{typerule}%
	\label{tr:#4}%
  \ensuremath{\begin{array}{c}#5 \inference{#2}{#3}\end{array}}%
}
\newcommand{\typerule}[4]{%
  \typeruleInt{#1}{#2}{#3}{#4}{\text{\scriptsize[\textsc{#1}]}\\}%
}
\newcommand{\eval}[2]{\xrightarrow[{#1}]{{#2}}}
\newcommand{\updatemap}[3]{\ensuremath{#1[#2 \mapsto #3]}}
\newcommand{\trace}[4]{\ensuremath{#1(#2)\mathord\Downarrow^{#3}_{#4}}}
\newcommand{\seqtrace}[3]{\ensuremath{#1(#2)\mathord\Downarrow_{#3}}}
\newcommand{\policyeq}[3]{\ensuremath{#1\backsim_{#3} #2}}
\newcommand{\tainteq}[3]{\ensuremath{#1\backsim_{#3} #2}}
\newcommand{\num}{\text{num}}
\newcommand{\rev}[1]{\ensuremath{\text{rev}(#1)}}
\newcommand{\sni}[2]{#1 \vdash_#2 \text{SNI}}
\newcommand{\ssafety}[2]{#1 \vdash_#2 \text{SS}}
\newcommand{\tbot}{{\ensuremath{\bot}}\xspace}
\newcommand{\tzero}{\texttt{0}\xspace}
\newcommand{\tone}{\texttt{1}\xspace}
\newcommand{\tlow}{{\texttt{L}}\xspace}
\newcommand{\thigh}{{\texttt{H}}\xspace}
\newcommand{\abdpl}[2]{\ensuremath{\mathcal{#1}_{ #2}^{\sharp}}}
\newcommand{\lattice}[1]{\ensuremath{\mathcal{#1}}}
\newcommand{\prolattice}[2]{\ensuremath{\lattice{#1}_{#2}}}
\newcommand{\parlattice}[2]{\ensuremath{\langle{#1},{#2}\rangle}}
\newcommand{\simplevector}[4]{\ensuremath{({#1}_{#2},\allowbreak{#1}_{#3},\allowbreak\cdots,\allowbreak{#1}_{#4})}}
\newcommand{\uopondomain}[3]{\ensuremath{#1_{#2}({#3})}}
\newcommand{\bopondomain}[4]{\ensuremath{#1_{#2}({#3, #4})}}
\newcommand{\counttupleup}[2]{\ensuremath{\texttt{Cnt}_{#1\sqsubseteq}{#2}}}
\newcommand{\minintv}[1]{\ensuremath{\min_{#1 \sqsubseteq }}}
\newcommand{\opondomain}[2]{\ensuremath{#1_{#2}}\xspace}
\newcommand{\isinstance}[2]{\ensuremath{{#1}\vdash{#2}}}
\newcommand{\consn}{\ensuremath{\mathtt{n}}\xspace}
\newcommand{\abd}[1]{\ensuremath{#1^\sharp}}
\newcommand{\base}{\kywd{Base}}
\newcommand{\basemap}{\ensuremath{B}\xspace}
\newcommand{\di}{\ensuremath{\mathcal{DI}}\xspace}
\newcommand{\interval}{\ensuremath{\mathcal{I}}\xspace}
\newcommand{\diorder}{\ensuremath{\sqsubseteq}\xspace}
\newcommand{\powerset}[1]{\ensuremath{\mathcal{P}({#1})}\xspace}
\newcommand{\vd}{\ensuremath{\mathcal{\nu }}\xspace}
\newcommand{\VD}{\ensuremath{\mathcal{V}}\xspace}
\newcommand{\memory}{\ensuremath{\mathcal{M}}\xspace}
\newcommand{\memmap}{\ensuremath{\mathcal{M}_\mathcal{R}}\xspace}
\newcommand{\sizemap}{\ensuremath{\mathcal{M}_\mathcal{S}}\xspace}
\newcommand{\topof}[1]{\ensuremath{\top_{#1}}\xspace}
\newcommand{\abf}[1]{\ensuremath{\alpha^{#1}}\xspace}
\newcommand{\cof}[1]{\ensuremath{\gamma^{#1}}\xspace}
\newcommand{\conff}[2]{\ensuremath{\gamma^{#1}_{#2}}\xspace}
\newcommand{\abff}[2]{\ensuremath{\alpha^{#1}_{#2}}\xspace}
\newcommand{\statecell}[2]{{#1},\ {#2}\ }
\newcommand{\hardened}[2]{\ensuremath{\mathtt{L}_{#2}({#1})}\xspace}
\newcommand{\hardenp}{\hardened{p}{P}}
\lstdefinestyle{intro}{
    language=c,
    basicstyle=\ttfamily,
    escapechar=|,
    frame=single
}
\newif\ifshowdeleted
\newif\ifshowredeleted
\newcommand{\deleted}[1]{%
  \ifshowdeleted
    \textcolor{blue}{#1}%
  \else
    \kern-\fontdimen2\font 
  \fi
}
\newcommand{\redeleted}[1]{%
  \ifshowredeleted
    \textcolor{blue}{#1}%
  \else
    \kern-\fontdimen2\font 
  \fi
}
\newcommand{\curdeleted}[1]{%
  \ifshowredeleted
  \kern-\fontdimen2\font 
  \else
    \ifshowdeleted
    \textcolor{blue}{#1}%
    \else
    \kern-\fontdimen2\font 
    \fi
  \fi
}
\newcommand{\added}[1]{%
  \ifshowdeleted
    \textcolor{red}{#1}%
  \else
    #1
  \fi
}
\newcommand{\readded}[1]{%
  \ifshowredeleted
    \textcolor{red}{#1}%
  \else
    #1
  \fi
}
\newcommand{\commentcolor}{\color{gray}}
\lstdefinestyle{asmstyle}{
    language=[x86masm]Assembler, 
    basicstyle=\ttfamily\footnotesize,        
    commentstyle=\color{gray},  
    numberstyle=\tiny\color{gray},
    stepnumber=1,                
    breaklines=true
}
\lstdefinestyle{Cstyle}
{
    xleftmargin=2em,frame=single,framexleftmargin=1.5em,
	frame = tb,
    showstringspaces = false,
    breaklines = true,
    breakatwhitespace = false,
    tabsize = 3,
    numbers = left,
    stepnumber = 1,
    numberstyle = \tiny\color{gray},
    language = {[ANSI]C},
    alsoletter={.\$},
    keywordstyle=\bfseries\ttfamily\color{black}, %
    basicstyle=\ttfamily\color{black},
    morekeywords={stype,alloc},
    stringstyle=\color{orange},
    basicstyle=\small,
}
\lstdefinestyle{Cstyle2}
{
    xleftmargin=2em,frame=single,framexleftmargin=1.5em,
	frame = tb,
    showstringspaces = false,
    breaklines = true,
    breakatwhitespace = false,
    tabsize = 3,
    numbers = left,
    stepnumber = 1,
    numberstyle = \tiny\color{gray},
    language = {[ANSI]C},
    alsoletter={.\$},
    keywordstyle=\bfseries\ttfamily\color{black}, %
    basicstyle=\ttfamily\color{black},
    morekeywords={stype,alloc},
    keywordstyle=[2]{\ttfamily\underline}, 
    morekeywords = [2]{key, bits, width, window, wvalue, k},
    stringstyle=\color{orange},
    basicstyle=\small,
}
\lstdefinestyle{MUASMstyle}
{
    xleftmargin=2em,frame=single,framexleftmargin=1.5em,
	frame = tb,
  	showstringspaces = false,
  	breaklines = true,
  	breakatwhitespace = true,
  	tabsize = 3,
  	numbers = left,
    stepnumber = 1,
    numberstyle = \tiny\color{gray},
    alsoletter={.\$\%},
    basicstyle={\footnotesize\ttfamily\color{black}},
    keywordstyle={\footnotesize\ttfamily\color{Blue3}},
}
\newcommand{\musk}{\ensuremath{\mu_{s_k}}\xspace}
\newcommand{\muskone}{\ensuremath{\mu_{s_{k+1}}}\xspace}
\newcommand{\muspk}{\ensuremath{\mu_{s'_k}}\xspace}
\newcommand{\muspkone}{\ensuremath{\mu_{s'_{k+1}}}\xspace}
\newcommand{\rhosk}{\ensuremath{\rho_{s_k}}\xspace}
\newcommand{\rhoskone}{\ensuremath{\rho_{s_{k+1}}}\xspace}
\newcommand{\rhospk}{\ensuremath{\rho_{s'_k}}\xspace}
\newcommand{\rhospkone}{\ensuremath{\rho_{s'_{k+1}}}\xspace}
\newcommand{\minI}{\ensuremath{\interval_{\min}}\xspace}
\newcommand{\maxI}{\ensuremath{\interval_{\max}}\xspace}
\newcommand{\condm}{\ensuremath{\powerset{\integer}}\xspace}
\newcommand{\integer}{\ensuremath{\mathbb{Z}}\xspace}
\newcommand{\indi}[2]{\ensuremath{{#1} \vdash {#2}}\xspace}
\newcommand{\td}{\ensuremath{\abd{\prolattice{T}{\consn}}}\xspace}
\newcommand{\abrho}{\ensuremath{\abd{\rho}}\xspace}
\newcommand{\abs}{\ensuremath{\abd{s}}\xspace}
\newcommand{\abo}{\ensuremath{\abd{o}}\xspace}
\newcommand{\abS}{\ensuremath{\Omega}\xspace}
\newcommand{\abt}{\ensuremath{\abd{t}}\xspace}
\newcommand{\abmu}{\ensuremath{\abd{\mu}}\xspace}
\newcommand{\tmem}{\ensuremath{\memory^{\abd{\prolattice{T}{\consn}}}}\xspace}
\newcommand{\vmem}{\ensuremath{\memory^{\VD}}\xspace}
\newcommand{\seqtran}[3]{\ensuremath{#1\xRightarrow{{#3}}#2}\xspace}
\newcommand{\spectran}[3]{\ensuremath{#1\xrightarrow[]{{#3}}#2}\xspace}
\newcommand{\switchtran}[3]{\ensuremath{#1\xhookrightarrow[]{{#3}}#2}\xspace}
\newcommand{\abconf}[1]{\ensuremath{\conf{\abrho_{#1},\abmu_{#1}, \vmem_{#1}, \tmem_{#1}}}\xspace}
\newcommand{\abveclow}{\ensuremath{\ensuremath{\abd{\vec{\tlow}}}}\xspace}
\newcommand{\abhigh}{\ensuremath{\ensuremath{\abd{\thigh}}}\xspace}
\newcommand{\abvechigh}{\ensuremath{\ensuremath{\abd{\vec{\thigh}}}}\xspace}
\newcommand{\abload}[1]{\ensuremath{\mathcal{L}_{#1}}\xspace}
\newcommand{\abstore}[1]{\ensuremath{\mathcal{S}_{#1}}\xspace}
\newcommand{\emptysym}{\ensuremath{\varepsilon}\xspace}
\newcommand{\tdn}{\ensuremath{\prolattice{T}{\consn}}\xspace}
\newcommand{\ctd}{\ensuremath{\powerset{\prolattice{T}{\consn}}}\xspace}
\newcommand{\atd}{\td}
\newcommand{\ctde}{\ensuremath{T}\xspace}
\newcommand{\atde}{\ensuremath{\abd{t}}\xspace}
\newcommand{\cvd}{\ensuremath{\powerset{\integer}}\xspace}
\newcommand{\avd}{\VD}
\newcommand{\stateset}{\ensuremath{\mathcal{S}}\xspace}
\newcommand{\cstate}{\ensuremath{\powerset{\mathcal{S}}}\xspace}
\newcommand{\aconf}{\ensuremath{\abd{\mathcal{S}}}\xspace}
\newcommand{\astate}{\ensuremath{\varOmega }\xspace}
\newcommand{\cob}{\ensuremath{\mathcal{O}}\xspace}
\newcommand{\aob}{\ensuremath{\abd{\mathcal{O}}}\xspace}
\newcommand{\ubound}[1]{\ensuremath{\sqcup_{#1}}\xspace}
\newcommand{\lbound}[1]{\ensuremath{\sqcap_{#1}}\xspace}
\newcommand{\botof}[1]{\ensuremath{\bot_{#1}}\xspace}
\newcommand{\order}[1]{\ensuremath{\sqsubseteq_{#1}}\xspace}
\newcommand{\initmem}{\ensuremath{M_{\text{init}}}\xspace}
\newcommand{\allinterger}{\ensuremath{\updatemap{\bot_\VD}{\emptysym}{\topof{\interval}}}\xspace}
\newcommand{\abaddr}[1]{\ensuremath{\Gamma_{#1}}\xspace}
\newcommand{\conctrace}{\ensuremath{\tau=(p, s_1)\xrightarrow[d_1]{o_1} (p, s_2)\cdots \xrightarrow[d_{n-1}]{o_{n-1}}(p, s_n)}\xspace}
\newcommand{\conctraceseq}{\ensuremath{\tau=(p, s_1)\xrightarrow[]{o_1} (p, s_2)\cdots \xrightarrow[]{o_{n-1}}(p, s_n)}\xspace}
\newcommand{\abconftrace}{\ensuremath{\abd{\tau}=(p, \abd{s_1})\xrightarrow[]{\abd{o_1}} (p, \abd{s_2})\cdots \xrightarrow[]{\abd{o_{n-1}}}(p, \abd{s_n})}\xspace}
\newcommand{\abstatetrace}{\ensuremath{\abd{\Pi }=\abS_1\xrightarrow[]{} \abS_2\cdots }\xspace}
\newcommand{\confvd}{\ensuremath{\conff{\VD}{\tau}}\xspace}
\newcommand{\confs}{\ensuremath{\conff{\aconf}{\tau}}\xspace}
\newcommand{\abfs}{\ensuremath{\abff{\aconf}{\tau}}\xspace}
\newcommand{\absi}{\ensuremath{\abs_{i}}\xspace}
\newcommand{\absoi}{\ensuremath{\abo_{i}}\xspace}
\newcommand{\absione}{\ensuremath{\abs_{i+1}}\xspace}
\newcommand{\statepremise}{\ensuremath{s_i\in\confvd(\abs_i)}\xspace}
\newcommand{\seqfinal}{\ensuremath{\Omega^\text{seq}}\xspace}
\newcommand{\hardenlist}{\ensuremath{K}\xspace}
\newcommand{\currentstate}{\ensuremath{\abS_{\text{current}}}\xspace}
\newcommand{\currenthardenlist}{\ensuremath{K}'\xspace}
\newcommand{\lasthardenlist}{\ensuremath{\ensuremath{K}}\xspace}
\newcommand{\pred}{\textit{Pred}\xspace}
\newcommand{\fix}{\textit{Fix}\xspace}
\newcommand{\fixspec}{\ensuremath{\fix^{\text{spec}}}\xspace}
\newcommand{\fixseq}{\ensuremath{\fix^{\text{seq}}}\xspace}
\newcommand{\trans}{\textit{Trans}\xspace}
\newcommand{\hardenset}{\ensuremath{H}\xspace}
\newcommand{\hac}{\ensuremath{\finalharden^{\text{ac}}}\xspace}
\newcommand{\absprime}{\ensuremath{\abd{s'}}\xspace}
\newcommand{\finalharden}{\ensuremath{\mathcal{H}}\xspace}
\begin{document}
\pagestyle{empty}

\date{}

\title{\Large \bf Place Protections at the Right Place: Targeted Hardening for Cryptographic Code against Spectre v1}

\author{
{\rm Yiming Zhu, Wenchao Huang\thanks{Corresponding Authors}, Yan Xiong}\\
University of Science and Technology of China
} 

\maketitle

\begin{abstract}

Spectre v1 attacks pose a substantial threat to security-critical software, particularly cryptographic implementations.
Existing software mitigations, however, often introduce excessive overhead by indiscriminately hardening instructions without assessing their vulnerability.     
We propose an analysis framework that employs a novel fixpoint algorithm to detect Spectre vulnerabilities and apply targeted hardening.
The fixpoint algorithm accounts for program behavior changes induced by stepwise hardening, enabling precise, sound and efficient vulnerability detection.
This framework also  provides flexibility for diverse hardening strategies and attacker models, enabling customized targeted hardening.
We instantiate the framework as LightSLH, which hardens program with provable security.

    We evaluate LightSLH on cryptographic algorithms from OpenSSL, Libsodium, NaCL \readded{and PQClean}. 
    Across all experimental cases, LightSLH provides the lowest overhead among current provable protection strategies, including 0\% overhead in 50\% cases.
    Notably, the analysis of LightSLH reveals two previously unknown security issues:
    (1) The compiler can introduce risks overlooked by LLSCT, a hardening method proven secure at the LLVM IR level. 
    We successfully construct a side channel by exploiting compiler-inserted stack loads, confirming this risk.
    (2) Memory access patterns generated by the scatter-gather algorithm still depend on secrets, even for observers with cache line granularity.
    These findings and results highlight the importance of applying accurate protections to specific instructions.

\end{abstract}


\section{Introduction}\label{sec:intro}


Microarchitectural attacks \cite{survey-side-channel2019} have emerged as a critical security concern for programs handling sensitive data, particularly cryptographic software. 
These attacks capitalize on the ability of specific instructions to alter the internal state of the processor's microarchitecture, potentially leading to the leakage of sensitive information.
Mitigation strategies involve either excluding sensitive data from the operands of certain instructions (adhering to the constant-time principle \cite{ct-verif2015}) or employing techniques like scatter-gather \cite{cache-audit-rigorous2017} that ensure state modifications are independent of secret data.


The disclosure of Spectre attacks \cite{spectre2019,spectre-overflow2018,ret2spec2018,sptctre-return2018,horn2018speculative} has considerably broadened the attack surface for microarchitectural attacks. 
Spectre attacks exploit the fact that speculative execution can modify the processor's internal state even for instructions ultimately discarded. 
While mitigations have been developed for Spectre v2-5 through hardware \cite{intelspectremitigation, intel:ssb, intel:psfd, amd:psfd, intel:cet}, operating systems \cite{intel:bhi,intel:mds}, or software  \cite{retpoline:skylake, intelspectremitigation}, effectively defending against Spectre v1 remains an ongoing challenge.



While various methods have been proposed to defend against Spectre v1 attacks, they suffer from three limitations.
First, software mitigations \cite{declassified2023,declassiflow2023,serberus2024,llvm:slh,uslh2023,exorcise2021} often indiscriminately harden instructions without assessing their vulnerability, leading to unnecessary performance overhead.
Second, while detection tools based on dynamic analysis \cite{spectaint2021,kasper2022,SpecFuzz2020} and gadget scanning \cite{kleespectre} can analyze entire programs and identify potential vulnerabilities, they lack precision and security guarantees due to false positives and negatives.
Third, sound detection tools \cite{pitchfork2020,hunter2021,spectector2020,cats2022,typing2023,jasmin-spectre} can only determine whether a program is secure, instead of identifying all vulnerable instructions.

To minimize the overhead of Spectre v1 defenses while maintaining security, a natural approach combines sound analysis (ensuring security) with precise detection (minimizing unnecessary hardening) to identify vulnerabilities for targeted protection. However, this approach faces two key challenges.

\textit{Challenge 1}: Analysis paralysis issue.
Sound program analysis is challenged by speculative out-of-bounds stores during speculative execution, which significantly impact analysis precision. 
Specifically, an out-of-bounds memory store can invalidate the entire memory state, hampering the detection of other vulnerabilities. 
While hardening these stores effectively resolves the memory state invalidation, the resulting changes in program behavior necessitate re-analysis. 
This iterative process of hardening and re-analysis introduces considerable computational overhead.



\textit{Challenge 2}: Pinpointing vulnerable instructions.
Identifying vulnerable instructions generally requires analyzing whether \textit{specific bits} in the operands of certain instructions contain sensitive information according to the attacker models.
While most existing analysis methods can detect whether a memory access address is sensitive assuming a full-address observer, they fail to assess the sensitivity of the \textit{cache-line index}, a more realistic scenario for cache-based attacks. 
This limitation hinders the analysis of defenses specifically designed to mitigate cache-based attacks, such as scatter-gather techniques.

To tackle challenge 1, we propose an analysis framework called Light Hardening (LightH), which employs a fixpoint algorithm to account for the program behavior changes after hardening.
Based on the observation that memory accesses for hardened instructions are blocked during misspeculative execution, LightH directly utilizes the analysis results of \textit{sequential execution} to compute the program state during \textit{speculative execution}  when analyzing the behavior of hardened instructions.
Such utilization allows LightH to continue analyzing the program post-hardening, building upon the results from the previous iteration rather than restarting the analysis. 
Additionally, LightH offers a flexible interface for various hardening strategies and attacker models.
By independently implementing the interface without modifying any other modules, LightH can be adapted to identify vulnerable instructions under various attacker models and hardening strategies, making it effective in addressing challenge 2.

To demonstrate the feasibility of our framework, we provide LightSLH, an instantiation of LightH based on the SLH hardening strategy and the cache-based attacker model.
LightSLH employs a bit-level taint mechanism to trace sensitive data precisely. 
We establish a formalization of this mechanism and provide comprehensive taint rules based on the formalization.
Building upon this foundation, we extend the program semantics with the taint mechanism and prove that programs protected by LightSLH satisfy \textit{speculative safety} (SS).
SS essentially approximates the stricter \textit{speculative non-interference} (SNI), ensuring that the programs do not leak more information during speculative execution than during sequential execution.

We evaluate LightSLH on 14 cryptographic algorithms from OpenSSL, Libsodium, NaCL, and PQClean.
LightSLH provides the lowest overhead among current protection strategies that offer provable security, across all cases.
Compared to the conservative SSLH method, LightSLH reduces overhead by \redeleted{68.6\%} \readded{48.1\%} on average. 
Notably, in 50\% of cases, LightSLH introduces zero overhead.
Additionally, to demonstrate the flexibility and generality of LightH, we also implement LightFence \readded{and LightCut} based on fence strategy. 
\redeleted{Compared to the fence method, LightFence reduces average overhead by 74.5\%.
}

As a highlight, through analyzing the results of LightSLH, we discover two previously unknown security issues.
The first issue arises from the potential for the compiler  to introduce new speculative vulnerabilities when hardening LLVM IR codes with LLSCT strategy \cite{serberus2024}.
While LLSCT has been proven secure at the LLVM IR level, it may overlook speculative vulnerabilities introduced by the compiler when it inserts loads from stack during the transformation from LLVM IR to assembly. 
Exploiting this compiler behavior, we successfully construct a one-bit side channel with 83\% accuracy, confirming the risk. 
In contrast, LightSLH's security guarantees remain unaffected by this issue.
Second, LightSLH performs the first rigorous analysis of the security guarantees of RSA against Spectre v1. 
The analysis reveals that even for observers at the cache line granularity, the memory access patterns generated by the scatter-gather algorithm still depend on secrets, necessitating protection for such accesses.

In summary, our contributions are as follows:
\begin{itemize}[left = 0pt]
    \item We introduce LightH (\Cref{sec:overview}), an framework which employs a novel fixpoint algorithm to apply targeted hardening against Spectre v1.
    LightH offers flexibility in identifying vulnerable instructions according to various attacker models and  different hardening strategies.
    \item We instantiate LightH as LightSLH based on cache-based attacker and SLH hardening.
    LightSLH employs a formalized bit-level taint mechanism (\Cref{sec:taint-tracking}) to track sensitive data.
    We prove that programs hardened by LightSLH satisfy SS (\Cref{sec:LightSLH}), a property that safely approximates SNI (\Cref{sec:speculative-safety}).
    \readded{Based on the fence strategy, we also implement LightFence and LightCut to show the flexibility of our framework.}
    \item We implement and evaluate LightSLH on 12 cryptographic algorithms from OpenSSL, Libsodium,  NaCL, and PQClean (\Cref{sec:evaluation}).
    Our results demonstrate both the efficiency and accuracy of LightSLH's analysis, along with the lower overhead of its hardening.
    To further showcase the flexibility and generality of LightH, we additionally implement and evaluate LightFence.
    \item Through our experiments, we discover two previously unknown security issues. 
\end{itemize}


\section{Background}\label{sec:background}

\subsection{Spectre v1 Attack}\label{sec:spectre-attack}
To avoid hazards triggered by control-flow dependency, modern processors speculatively fetch or execute some instructions based on prediction mechanisms.
However, during \textit{speculative execution}, the microarchitectural state is modified, which can leak data that remains protected during \textit{sequential execution}. 
We refer to \textit{misspeculative execution} as speculative execution based on incorrect predictions. 



\begin{figure}
    \centering
    \begin{minipage}{0.40\linewidth}
        \begin{lstlisting}[basicstyle=\footnotesize,style=Cstyle, escapechar=&, captionpos=t,caption = {Spectre v1.}, label={lst:spectre-v1}]
x = 0;
if(x != 0) {   &\label{line:branch}&
    y = *secret; &\label{line:access}&
}
\end{lstlisting}   
    \end{minipage}
\begin{minipage}{0.55\linewidth}
    \begin{lstlisting}[basicstyle=\footnotesize,style=Cstyle, escapechar=&, captionpos=t,
label = {lst:slh}, caption = {Example of SLH.}]
x = 0;
if(x != 0) {
    mask = (x != 0) ? 0 : -1;
    y = *(secret | mask); &\label{line:mask}&
}
\end{lstlisting}  
\end{minipage}
\end{figure}

Our work focuses on addressing Spectre v1 attacks, which exploit speculative execution caused by branch prediction. 
\Cref{lst:spectre-v1} shows a typical code gadget vulnerable to Spectre v1.
\Cref{line:access} is never executed during sequential execution, since the branch condition at \Cref{line:branch} always evaluates to false. 
However, speculative execution might fetch and execute \Cref{line:access} based on branch prediction.
This speculative execution can leak the value of the secret variable through a cache side channel.
A security property called \emph{speculative non-interference} (SNI) \cite{declassified2023,specttre-sok2022,spectector2020} prevents such leaks by ensuring speculative execution reveals no more information than sequential execution.
We define SNI formally in \Cref{sec:operational-semantics}.

\textbf{Mitigations for Spectre v1.}\label{sec:mitigation-spectre}
A basic defense against Spectre v1 is to prevent processors from executing speculatively by inserting serializing instructions, such as \texttt{LFENCE} \cite{intel:lfence}, at every branch instruction.
\texttt{LFENCE} ensures that it will not execute until all prior instructions complete, and no following instruction will execute until \texttt{LFENCE} completes \cite{intel:reference}. 
Speculative-load-hardening (SLH) \cite{llvm:slh} offers an alternative approach, as shown in \Cref{lst:slh}. It uses a speculative flag (\texttt{mask}) that becomes -1 during misspeculative execution and 0 otherwise. By performing \orop operations between this flag and load operands (\Cref{line:mask}), SLH blocks speculative loads by forcing invalid addresses (-1). However, SLH applies protection indiscriminately and lacks formal security guarantees.
As an extension of SLH, SSLH \cite{exorcise2021} provides a provable protection by hardening all load, store and branch instructions.

\subsection{Monotone Frameworks}\label{sec:abstraction-introduction}


The monotone framework \cite{kam1977monotone} is a foundational approach for modeling and analyzing how program properties evolve during execution. In this framework, program configurations are represented as elements of a lattice. 
The program configuration, denoted by \abS, is a mapping from instructions to the domain of program states. 
Here, $\abS(n)$ denotes the program state before executing the instruction $p(n)$. 
Information about the program's behavior is propagated through a monotonic  function $\textsc{Trans}$ as follows:
$$\abS^{\text{new}}(n) = \bigsqcup\limits_{n'\in\textsc{Pred}(n)}\textsc{Trans}(\abS^{\text{old}}(n'),n)$$
Here, $\textsc{Pred}(n)$ denotes the set of predecessors of the instruction $p(n)$.
By designing the domain of program states, the framework can effectively utilize techniques such as abstract interpretation to capture and reason about the properties of the program in an abstract domain. 
This allows for the derivation of useful information regarding program behavior while maintaining efficiency in analysis.

\section{Challenges and Overview}\label{sec:mot}

In this section, we present two examples to illustrate the key challenges in accurately identifying which instructions are vulnerable to Spectre v1 attacks.
Following this, we provide an overview of our methodology.

\subsection{Analysis Paralysis}\label{sec:mov-paralysis}

\begin{lstlisting}[style=Cstyle, escapechar=|, captionpos=t,
    caption = {Speculative Out-of-Bounds Access}, label = {lst:analysis-paralysis}]
b[0] = 0;
if (x < 16)    |\label{lst:check}|
    a[x] = key; |\label{lst:store}|
z = b[0];
w = b[z];       |\label{lst:w}|
\end{lstlisting}     


The first challenge stems from out-of-bounds stores caused by Spectre vulnerabilities, which can lead to analysis paralysis and subsequently result in redundant hardening.
Consider the analysis and protection for \Cref*{lst:analysis-paralysis}.
Let \texttt{x} be a positive number, \texttt{a} be an array with  size  $16$ and \texttt{key} be sensitive data.
The boundary check in \Cref{lst:check} ensures that \Cref{lst:store} performs an in-bounds store, preventing data outside of the array \texttt{a} from being affected. 
During sequential execution, \texttt{z} is definitively assigned the value $0$.
Notably, during speculative execution, \Cref{lst:store} may perform an out-of-bounds store,  rendering it vulnerable to Spectre v1.1 attacks and necessitating hardening.
This out-of-bounds store introduces the possibility of writing \texttt{key} to any memory region, meaning that \texttt{z} could contain sensitive data.
Consequently, \texttt{key} becomes a potential value for \texttt{z}, requiring the load operation in \Cref{lst:w} to be hardened.

However, once \Cref{lst:store} is hardened, the program's behavior changes, as the memory access is blocked during misspeculative execution, and \texttt{z} can only be assigned a value of 0, even during speculative execution in the hardened program.
Therefore, hardening \Cref{lst:w} becomes redundant after hardening \Cref{lst:store}, while existing mitigation strategies \cite{llvm:slh,uslh2023,exorcise2021,declassified2023,declassiflow2023} would still apply such redundant protection.
As illustrated, an analysis that accounts for the impact of protections on program behavior can prevent analysis paralysis caused by Spectre vulnerabilities and reduce redundant protections, compared to an analysis that ignores such effects.

A straightforward approach to account for program behavior changes caused by hardening is to pause whenever a vulnerable instruction is identified, apply the necessary defenses, and restart the detection process.
However, for existing tools based on symbolic execution \cite{spectector2020, pitchfork2020}, this approach requires solving symbolic constraints for each memory access instruction, followed by regenerating the constraints and re-executing the symbolic analysis after each hardening.
For tools based on type systems \cite{typing2023} or model checking \cite{cats2022}, the entire program requires re-analysis after applying defenses, as security types are modified and models need reconstruction.
The above methods involve repeated analyses, significantly impacting efficiency.


\subsection{Cache-Aware Implementations}\label{sec:mov-tt}
\begin{lstlisting}[basicstyle=\small,style=Cstyle, escapechar=|, captionpos=t,
    caption = {Scatter-gather method from \openssl 1.0.2f.}, label = {lst:scatter-gather}]
align (char* buf ){ |\label{lst:sg-align}|
    return buf - ( buf & ( block_size - 1 )) + block_size;
}
scatter (char* buf, char* p, int k, int window ){ |\label{lst:sg-scatter}|
    for ( i = 0; i < N; i++)
        buf[k + i * window] = p[i];
}
gather ( char* buf, char* p, int k, int window ){ |\label{lst:sg-gather}|
    for ( i = 0; i < N; i++)
        p[i] = buf[k + i * window];
}
\end{lstlisting}   
Another challenge is determining which instructions need to be marked for protection, which is non-trivial, especially when dealing with cache-aware code implementations.
Such implementation aims to avoid secret-dependent memory access at the cache line level.
A notable example is the \openssl 1.0.2f's implementation of RSA  \cite{cache-audit-rigorous2017, cachebleed2017}. 

In \openssl, a  strategy to reduce the performance overhead of computing the powers of a large number $a$ is to precompute and store these powers in a lookup table \texttt{buf}. 
This allows direct retrieval of $a^k$ from \texttt{buf[$k$]}, simplifying the computation.
However, such memory access patterns (\texttt{buf[$k$]}) can inadvertently leak the value of $k$ through cache side channels. 
To mitigate this vulnerability, \openssl employs a scatter-gather technique, as shown in \Cref{lst:scatter-gather}.           

It works by first aligning (\Cref{lst:sg-align}) a buffer to a cache line boundary.
Next, it determines  a window size that can be evenly divided by the cache line size and partitions the buffer into distinct indices.
The secret value is then scattered (\Cref{lst:sg-scatter}) across different indices, e.g., the \texttt{i}-th byte of \texttt{p} is scattered into the \texttt{i}-th indices of \texttt{buf}.
During retrieval (\Cref{lst:sg-gather}), the indices are accessed in sequential order to reconstruct the complete value of \texttt{p}, independently of the specific secret value \texttt{k}.
Such access pattern is used to prevent leakage of the secret value \texttt{k} through timing analysis.
This example illustrates the necessity of accurately tracking which bits of a value might expose sensitive information when analyzing whether an instruction is at risk of side-channel attacks.

To our knowledge, \cite{cache-audit-rigorous2017} is the only work that has conducted a rigorous analysis of scatter-gather implementations. 
It employs a custom masked symbol domain to estimate the number of possible memory access patterns. 
However, it focuses exclusively on code gadgets related to scatter-gather implementations without examining the broader context in which these gadgets operate.
Other approaches, such as symbolic execution \cite{cached-2017} and software verification \cite{jasmin-ct}, while capable of analyzing cache-aware code, do not analyze scatter-gather implementations in RSA due to scalability limitations.
Overall, designing a scalable and sound analysis method for cache-aware implementations remains a significant challenge.

\begin{figure*}[h]
    \centering
    \includegraphics[width=\linewidth]{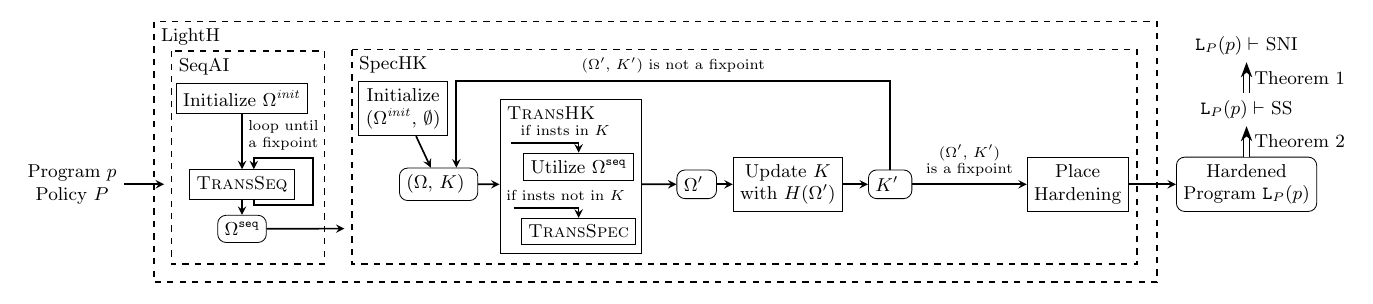}
    \caption{The Overview of LightH.}
    \label{fig:overview}
\end{figure*}

\subsection{Overview}\label{sec:overview}


To address the first challenge, we propose a framework called LightH (Light Hardening). 
To account for the changes in program behavior due to hardening, LightH iteratively identifies the instructions that require protection by employing an analysis that incorporates knowledge of which instructions have been hardened.
This approach allows LightH to continue analyzing the program post-hardening, building upon the results from the previous iteration rather than restarting the analysis. 
For the second challenge, LightH provides a flexible interface for modeling different attackers and  various protection strategies.


The workflow of LightH is illustrated in \Cref{fig:overview}.
LightH takes as input a program $p$ and a security policy $P$. 
Through a two-phase process consisting of sequential abstract interpretation (SeqAI) and speculative analysis with hardening knowledge (SpecHK), LightH produces a hardened program.
We implement LightH as LightSLH, based on the SLH protection strategy and the cache-based attacker.
LightSLH  employs a bit-level taint tracking method to analyze the program.
We also prove that \hardened{p}{P}, the program produced by LightSLH, satisfies SS, a property which soundly approximates SNI.


\textbf{SeqAI.} 
LightH first performs abstract interpretation under sequential semantics (\textsc{TranSeq}) to obtain a sound approximation of the program's behavior under sequential execution, i.e., the fixpoint $\seqfinal$. 
$\seqfinal$ is then utilized to compute the program states of the instructions that need protection in the next phase.
This utilization is based on the observation that memory accesses for hardened instructions are blocked during misspeculative execution.
Consequently, the maximum attainable program state in the domain, after executing these hardened instructions under speculative semantics, is identical to the state under sequential semantics.
Note that SeqAI does not perform vulnerability identification but instead serves as a pre-analysis for the next phase.

\textbf{SpecHK.} 
SpecHK iterates on both the program configuration $\Omega$ and the hardening set $K$, enabling on-the-fly identification and hardening of targeted instructions.
It repeats the following three steps until a fixpoint is reached:
\begin{itemize}
    \item Update $\Omega$ according to current hardening set $K$, and $\seqfinal$, which is produced by SeqAI.
    \item Compute $H(\abS')$ to identify instructions vulnerable to Spectre v1 and therefore require hardening.
    \item Assume the instructions identified in the previous step have been hardened, and incorporate this assumption in subsequent analysis.
\end{itemize}
To account for changes in program behavior after hardening, SpecHK employs \textsc{TransHK}, a transition function with hardening knowledge, to compute the transition of program configurations.
\textsc{TransHK} updates the program configuration $\Omega$ as follows:
for instructions in $K$, which are identified for hardening, SpecHK directly uses $\seqfinal$ to compute $\Omega'$.
For instructions not in $K$, \textsc{TransHK} computes $\Omega'$ using speculative semantics (\textsc{TransSpec}).

SpecHK ultimately converges to a fixpoint $(\Omega', K')$, indicating that once all instructions in $K'$ are hardened, applying the transfer function to  $\Omega'$ will not introduce any changes in $\Omega'$ or $K'$.
This confirms that no additional instructions beyond those in $K'$ require protection.

\textbf{Flexibility.} 
LightH provides flexible support for varying attacker models and protection strategies by offering an interface function $H()$, which updates the hardening set based on the current analysis results.
Given an attacker model, $H()$ determines whether a specific instruction might lead to a Spectre vulnerability by checking if its operands or parts of the operands contain sensitive information.
Additionally, by designing different $H()$ functions to update the hardening set, various protection strategies can be implemented. 
These strategies include applying SLH to identified instructions, inserting fences at the beginning of some blocks, or using previously proposed techniques such as calculating the min-cut \cite{serberus2024} or protection boundaries \cite{declassiflow2023 }.

Based on \redeleted{the SLH and fence} \readded{different} strategies, we implement LightH as \textbf{LightSLH}, \textbf{LightFence} and \textbf{LightCut}, \redeleted{both} \readded{all} of which are built on a cache-line observer model (detailed in \Cref{sec:operational-semantics}).
Based on the experimental results, we ultimately select LightSLH as our primary strategy.
Our subsequent discussion is based on LightSLH's specific $H()$ function.

\textbf{Fine-Grained Security Analysis.}
For performing a rigorous analysis of cache-aware implementations like scatter-gather method, we employ a precise bit-level taint tracking method, allowing us to trace sensitive data at the bit level.
While previous researches \cite{yadegari2014bit,cache-audit-rigorous2017} have explored bit-level taint tracking methods, these methods typically support only simple operations like addition or bitwise operations. 
For more complex operations such as multiplication or shifts, they revert to standard taint tracking, which assigns high or low labels to entire data and thus loses bit-level precision.
Our work distinguishes itself by (1) formally defining the property necessary for bit-level taint tracking, (2) providing a comprehensive tracking scheme that allows the precision of the tracking rules to be adjusted according to the formalization, and (3) enabling the fixpoint computation with guaranteed soundness based on this formalization.

The details of LightH are explained in \Cref{sec:abstract-interpretation}, and the bit-level taint tracking is described in \Cref{sec:taint-tracking}.

\section{Threat Model and Notations}\label{sec:operational-semantics}

\subsection{Threat Model}\label{sec:threat-model}

We establish a threat model where the attacker and the victim co-reside on the same hardware platform. 
This allows the attacker to exploit side channels to observe the victim's program execution. 
Specifically, the attacker can observe the targets of branch instructions and infer memory access addresses by monitoring the cache state. 
While this setting resembles prior research \cite{declassified2023,exorcise2021} on side-channel analysis, the key distinction is that we only allow the attacker to infer memory access addresses at cache line granularity, which enables us to analyze more practical attack scenarios where attackers cannot infer the full address \cite{flush-reload2014,prime-probe2015}.
\readded{Note that in our cache-based model, when the cache size is set to 1, the attacker effectively gains the ability to observe the full memory address.
}

Furthermore, the shared environment allows the attacker to fully control the branch prediction targets. 
This can be achieved through techniques like training the branch predictor. 
Our model incorporates all possible execution traces that can arise due to the influence of branch prediction.

Similar to existing work \cite{declassified2023,exorcise2021}, our attack model takes a conservative stance, aiming to identify potential information leaks within a program.
While such leaks may not immediately facilitate practical attacks, adopting a conservative model is crucial for ensuring robust security assurances in defensive strategies.

\readded{Our goal is to protect programs from leaking sensitive information during speculative execution. Leaks during sequential execution are not in the scope of our protection.
}

\subsection{Language and Security Notations}\label{sec:language}

\begin{figure}
    \small
    \begin{gather*}
        \begin{align*}
            \text{(Register) } x\in &\ \Reg\\
            \text{(Values) }  n, l \in &\ \Val = \Nat \cup \{\bot\}\\
            \text{(Unary Operator) } \ominus \coloneqq &\ \notop \\
            \text{(Binary Operators) } \otimes \coloneqq &\ \addop \mid \minusop \mid \mulop \mid \divop \mid \modop\\
            \mid &\ \andop \mid \orop \mid \xorop \mid \lshiftop \mid \rshlop \mid \rshaop \\
            \text{(Expressions) } e \coloneqq &\ n\mid x \mid \ominus e \mid e_1 \otimes e_2 \\
            \text{(Instructions) } i \coloneqq &\ \passign{x}{e} \mid \pload{x}{e} \mid \pstore{x}{e} \mid \pjmp{l}\\
            \mid &\  \pbranch{x}{l} \mid \pcondassign{x}{e}{e'} \mid \pfence \\
            \text{(Programs) } p  \coloneqq &\ n:i \mid p_1;p_2
        \end{align*}
    \end{gather*}
    
    \caption{Syntax of \muasm.}
    \label{fig:core-language}
\end{figure}

Our methodology is explained through a simplified core language called \muasm,  defined in \Cref{fig:core-language}.
For a program $p$, we denote the instruction labeled with $n$ by $p(n)$.

We use notations and operational semantics with \textit{adversarial directives} similar to  \cite{declassified2023} to model speculative execution of a program.
Specifically, the one-step execution of a program $p$ is modeled using a transition relation of the form $(p, s)\xrightarrow[d]{o} (p, s')$.
Here, $s$ represents a program state and is modeled by a tuple $\conf{\rho, f}$, 
where $\rho$ denotes a mapping from $\Reg \cup \Nat$ to $\integer$, representing the values assigned to registers and memory addresses, and $f$ is the \textit{misspeculative flag}, indicating the current state to be in misspeculative execution when $f$ evaluates to $\top$, and otherwise when $\bot$.
$o\in \Obs$ represents the attacker's observations during execution, and  $d\in \Dir$  models the attacker's ability to fully control the prediction of conditional branches in  threat model. 
$\Obs$ and $\Dir$ are taken from the following syntax:  
\begin{gather*}
    \begin{align*}
        &\Obs \coloneqq \Obsnone \mid \Obsbranch{n} \mid \Obsload{n}{a}{b} \mid \Obsstore{n}{a}{b}\\
        &\Dir \coloneqq \DirStep \mid \DirForce
    \end{align*}
\end{gather*}

Here, $\Obsload{n}{a}{b}$ and $\Obsstore{n}{a}{b}$ denote  observations of memory access: when the program reads from or writes to memory address $n$, the attacker can observe bits $a$ to $b$ of that address. 
Additionally, $\Obsbranch{n}$ denotes the exposure of a conditional branch.
$\epsilon$ means no observation is generated during this specific execution.
Directive $\DirStep$ denotes that the current instruction simply executes as intended, without any prediction or in a right prediction. 
Directive $\DirForce$, on the other hand, represents the attacker's ability to manipulate program execution at a branch instruction, forcing it into misspeculative execution.

The operational semantics is given in \Cref{app:semantics}.
We write \trace{p}{s_1}{D}{O} = $(p, s_1)\xrightarrow[d_1]{o_1} (p, s_2)\cdots$ to represent an execution trace of program $p$ that generates observations $O$ given directives $D$, starting from an initial state $s_1$ where the misspeculative flag is $\bot$.
Here, $D$ is the concatenation of $d_1,d_2\cdots$ and $O$ is the concatenation of $o_1, o_2,\cdots$.
A trace without $\DirForce$ (and thus without misspeculative execution) is referred to as a sequential trace, denoted as \seqtrace{p}{s}{O}.

Speculative leakage is formalized through \textit{SNI} \cite{declassified2023,specttre-sok2022,spectector2020}.
SNI ensures a program's speculative execution reveals no more information than its sequential execution.
A security policy $P \subseteq \Reg \cup \Val$ defines \textit{Public} registers and memory addresses, with states being \textit{equivalent} under $P$ (\policyeq{s}{s'}{P}) if they match on all values in $P$.
A program $p$ satisfies SNI ($\sni{p}{P}$) if equivalent initial states producing identical sequential observations also produce identical speculative observations under any directives $D$. Specifically,

\begin{definition}[SNI]\label{def:speculative-non-interference}
    $\sni{p}{P}$, iff
    for any pair of $ \trace{p}{s_1}{D}{O}$ and $\trace{p}{s'_1}{D}{O'}$, s.t.
    \policyeq{s_1}{s'_1}{P},
    we have
    $\overline{O} = \overline{O}' \Rightarrow O = O'$, 
    where \seqtrace{p}{s_1}{\overline{O}} and \seqtrace{p}{s'_1}{\overline{O}'}.
\end{definition}

\section{Light Hardening}\label{sec:abstract-interpretation}

\subsection{\textsc{TransSeq} and \textsc{TransSpec}}\label{sec:seq-spec-ai}
Based on the monotone framework, the transition function \textsc{Trans} can be instantiated as either \textsc{TransSeq} for sequential execution or \textsc{TransSpec} for speculative execution, allowing the fixpoint algorithm to be applied to both sequential abstract interpretation (SeqAI) and its speculative variants.
While \textsc{TransSeq} rules are well-established in existing works \cite{abstract-interpretation1977,valuedomain}, we explain how \textsc{TransSpec} works for speculative execution.

The primary distinction between \textsc{TransSeq} and \textsc{TransSpec} in our approach lies in how branch conditions are handled. 
\textsc{TransSeq} constrains the program state based on branch conditions when processing branch instructions. 
In contrast, \textsc{TransSpec} avoids computations that constrain the program state based on these conditions. 
Specifically, branch jump instructions are treated like skip instructions when applying the \textsc{TransSpec} function, except for updating the program counter.

Consider the example in \Cref{fig:seq-spec-ai}, where \textsc{TransSeq} and \textsc{TransSpec} are applied to process the statement \texttt{if (x<5) {y=x;} else {y=5;}}.
\textsc{TransSeq} uses the branch condition \texttt{x < 5} to deduce that the final value of \texttt{y} must be within [0, 5]. 
In contrast, \textsc{TransSpec} ignores the branch condition, resulting in the inferred range for \texttt{y} being [0, $\infty$).

While existing works \cite{ai-spec2019,spectector2020} primarily address the challenge of merging abstract states from sequential execution and speculative rollbacks to model speculative execution, our approach leverages a fundamental characteristic of monotone frameworks: computing the least upper bound of all predecessor states. 
When considering an infinite speculative window, semantics for rollback become unnecessary, making this scenario particularly well-suited for monotone frameworks, which inherently possesses this merging property.
Consequently, \textsc{TransSpec} seamlessly integrates the additional effects introduced by speculative execution.

\begin{figure}[h] 
    \centering
    \begin{tikzpicture}[%
        every node/.style={font=\scriptsize},
        block/.style={%
            draw, 
            rectangle split, 
            rectangle split parts=2, 
            rectangle split horizontal=false, 
            text centered,
            font=\scriptsize 
        },
        sblock/.style={%
        draw, 
        text centered,
        font=\scriptsize 
    },
        arrow/.style={->, thick}
    ]
    
    \node[block] (bb1) at (0,3)
    {
        \nodepart{one} if(x$<$5)
        \nodepart{two} x:[0,$\infty$)
    };
    \node[block] (bb2) at (-1,1.5)
    {
        \nodepart{one} y=x
        \nodepart{two} y:[0,5)
    };
    \node[block] (bb3) at (1,1.5)
    {
        \nodepart{one} y=5
        \nodepart{two} y:[5,5] 
    };
    \node[sblock] (bb4) at (0,0)
    {
        y:[0,5)$\cup$[5,5]=[0,5] 
    };
    \node (spec) at (0,-0.5) {\textsc{TransSeq}};
    
    \draw[arrow] (bb1) -- (bb2) node[midway, left] {x$<$5}; 
    \draw[arrow] (bb2) -- (bb4); 
    \draw[arrow] (bb1) -- (bb3) node[midway, right] {x$\ge$5};
    \draw[arrow] (bb3) -- (bb4);

    \node[block] (sbb1) at (4,3)
    {
        \nodepart{one} if(x$<$5)
        \nodepart{two} x:[0,$\infty$)
    };
    \node[block] (sbb2) at (3,1.5)
    {
        \nodepart{one} y=x
        \nodepart{two} y:[0,$\infty$)
    };
    \node[block] (sbb3) at (5,1.5)
    {
        \nodepart{one} y=5
        \nodepart{two} y:[5,5] 
    };
    \node[sblock] (sbb4) at (4,0)
    {
        y:[0,$\infty$)$\cup$[5,5]=[0,$\infty$)
    };
    
    \draw[arrow] (sbb1) -- (sbb2) node[midway, left] {no constraints};
    \draw[arrow] (sbb2) -- (sbb4);
    \draw[arrow] (sbb1) -- (sbb3) node[midway, right] {no constraints};
    \draw[arrow] (sbb3) -- (sbb4);
    \node (spec) at (4,-0.5) {\textsc{TransSpec}};
    \end{tikzpicture}
    \caption{\textsc{TransSeq} vs. \textsc{TransSpec}.} 
    \label{fig:seq-spec-ai} 
\end{figure}

\subsection{\textsc{TransHK}}
Accounting for the changes in program behavior caused by hardening, LightH employs a transition function with hardening knowledge, denoted by \textsc{TransHK}, to compute the transition of states.
\textsc{TransHK} leverages the observation that the memory accesses of hardened instructions are blocked during misspeculative execution. 
Consequently, the maximum attainable program state within the state domain, as determined by the speculative semantics after executing these hardened instructions, is identical to the maximum attainable state under sequential semantics.

When processing instructions not marked for hardening, \textsc{TransHK} updates the program state in the same manner as \textsc{TransSpec}. 
For instructions marked for hardening, \textsc{TransHK} utilizes the fixpoint of SeqAI (\seqfinal), which provides an over-approximation of the program's states during sequential execution, as the program state during speculative execution.

Specifically, for a load instruction $\pload{x}{e}$, let SeqAI determine the range of $x$ as $V$. 
Assume the instruction is hardened as $\pload{x}{e\ |\ \text{mask}}$, we consider two scenarios for the range of $x$.
In sequential execution,  the range of $x$ remains $V$, as previously established by the analysis.
However, during misspeculative execution, the value of $e\ |\ \text{mask}$ becomes -1, indicating an invalid address and preventing the memory access from proceeding.
Consequently, the instruction cannot complete, and any subsequent computations involving $x$ also fail to execute.
Therefore, after hardening, the range of $x$ remains $V$, even taking into account the effects of speculative execution.
Hence, in the analysis, for the $\pload{x}{e}$ instruction, \textsc{TransHK} directly reuses the previously determined result, $V$, as the range of $x$.

Similarly, for a store instruction $\pstore{x}{e}$, let SeqAI determine the range of $x$ as $V_x$, the range of $e$ as $V_e$.
After hardening, this instruction will be blocked during misspeculative execution for the same reason as for load instructions.
Therefore, LightH uses $V_x$ to update only the contents of addresses within the range of $V_e$.

\begin{algorithm}
    \begin{algorithmic}[1]
        \item[\textbf{Input}] $p$: a program, $\abs$: an initial state, \seqfinal: the result of sequential abstract interpretation
        \item[\textbf{Output}] \hardened{p}{P}: a  hardened program
        \STATE Initialize $\abS(n)$ to \botof{\aconf} for every $n$
        \STATE Initialize $\abS'(n)$ to \abs if $n=0$, and to \botof{\aconf} otherwise
        \STATE $\currenthardenlist \leftarrow \{\}$, $\lasthardenlist \leftarrow \{\}$
        \WHILE{$\exists\ n$ s.t.  $\abS(n)\neq\abS'(n)$ or $\currenthardenlist \neq \lasthardenlist$} 
        \STATE $\abS \leftarrow \abS'$, $\lasthardenlist \leftarrow \currenthardenlist$
        \FOR{every instruction $p(n)$}
            \STATE $\abS'(n) = \bigsqcup\limits_{n'\in\textsc{Pred}(n)}\textsc{TransHK}(\abS(n'),n,\seqfinal,\currenthardenlist)$ 
        \ENDFOR
        \STATE $\currenthardenlist \leftarrow  \hardenset(\abS')$  
        \ENDWHILE
        \STATE \hardened{p}{P} $\leftarrow$ Harden $p$ according $K$
        \RETURN \hardened{p}{P}
    \end{algorithmic}
    \caption{SpecHK}
    \label{algo:ai-hk}
\end{algorithm}
\subsection{Speculative Analysis with Hardening Knowledge}\label{sec:lighth}
To address the issue in \Cref{sec:mov-paralysis}, we propose a novel fixpoint algorithm called SpecHK (speculative analysis with hardening knowledge) based on the monotone framework.
The algorithm for SpecHK is detailed in \Cref{algo:ai-hk}. 
SpecHK takes as inputs a program $p$, an initial state $\abs$, and the fixpoint of SeqAI ($\seqfinal$), and returns the hardened program \hardened{p}{P}.
The algorithm iterates over both the program configuration $\abS$ and the hardened instruction set $K$. 
In each iteration,  SpecHK employs \textsc{TransHK} to compute the transition of program states, and uses $H(\abS)$ (discussed  in \Cref{sec:pinpoint}) to update the hardening set. 
This process continues until both the program configuration and the hardening set converge to a fixpoint. 
Once this fixpoint is reached, hardening is applied to the identified instruction set.

By incorporating hardening knowledge into the analysis, we address the issue discussed in  \Cref{sec:mov-paralysis}.
When analyzing \Cref{lst:store} in \Cref{lst:analysis-paralysis},  SpecHK  determines that \texttt{a[x]} might cause an out-of-bounds store, marking it as requiring protection. 
Subsequently, SpecHK directly uses the result from SeqAI, i.e., \texttt{a[x]} is an in-bound access after hardening.
Therefore, when analyzing, the range of \texttt{key} is used to update only the contents within the array \texttt{a}, without affecting other memory regions. 
As illustrated, for \textsc{TransSpec}, processing out-of-bounds stores leads to updates across every memory region to maintain the soundness of the analysis, resulting in a loss of precision. 
In contrast, \textsc{TransHK} avoids the aforementioned issue, allowing the analysis to proceed while considering the effects of protecting the instruction.

Note that SpecHK leverages the knowledge of which instructions are going to be hardened, rather than directly analyzing the instructions that have already been hardened.
To ensure the soundness of the analysis and avoid the aforementioned analysis paralysis issues, it is crucial to handle the different values of $\texttt{mask}$ (the misspeculative flag) in both sequential and misspeculative execution, along with the resulting behaviors of the hardened address $e\ |\ \texttt{mask}$.
SpecHK enables the algorithm to incorporate information from the conditional assignment, enabling the analysis to account for the varying behaviors of memory address value ranges in both sequential and speculative execution.

\section{LightSLH: Pinpoint Vulnerable Instruction}\label{sec:pinpoint}

To address the issue in \Cref{sec:mov-tt}, we employ a taint-tracking approach to mark which piece of data carries sensitive information.
According to the attacker model, $H()$ determines whether a specific instruction requires hardening by evaluating whether certain bits in the instruction contain sensitive information.

In \Cref{sec:taint-tracking}, we present our formalization of bit-level taint tracking.
In \Cref{sec:speculative-safety}, we extend the program semantics with taint tracking and provide the security properties for the extended semantics.
In  \Cref{sec:LightSLH}, we introduce the abstract domains and memory models for the analysis of LightSLH, and provide a running example to demonstrate how different transfer functions operate on the domains.

\subsection{Bit-Level Taint Tracking} \label{sec:taint-tracking}
\begin{figure}
    \begin{minipage}{0.20\linewidth}
        \begin{center}
            \begin{tikzpicture}
              \node (H) at (0,1.2) {\thigh};
              \node (L) at (0,0.6)  {\tlow};
              \node (0) at (-0.6,0)  {\tzero};
              \node (1) at (0.6,0)  {\tone};
              \node (bottom) at (0,-0.6) {\tbot};
              \draw (H)--(L)--(0)--(bottom);
              \draw (L)--(1)--(bottom);
            \end{tikzpicture} 
        \end{center}
    \end{minipage}
    \begin{minipage}{0.80\linewidth}
        \begin{itemize}
            \item \tbot: an undefined value. \tbot serves as the bottom in the lattice and represents the value read from an invalid address.
            \item \tzero: a bit that is guaranteed to be zero.
            \item \tone: a bit that is guaranteed to be one.
            \item \tlow: a bit that is independent of secrets.
            \item \thigh: a bit that relates to some secrets.
        \end{itemize}
    \end{minipage}
    \vspace{1em}
    \caption{Lattice of Bit-Level Taint Labels.}
    \label{fig:taint-label-lattice}
    \end{figure}

\textbf{Bit-Level Taint Label.}
In our taint tracking method, every bit carries one of the labels in the lattice \lattice{T} in \Cref{fig:taint-label-lattice}.
We refer to \tzero and \tone as \textit{concrete labels}, and \tbot, \tlow, \thigh as \textit{non-concrete labels}.
For an $n$-bit value $v$ labeled with a taint label vector $t$ of $n$ elements, let $t[i]$ represent the taint label of the $i$-th least significant bit of $v$, and $v[i]$ denote the corresponding bit of $v$.
We can derive a lattice on $n$-bit labels, i.e., the product of \lattice{T} taken $n$ times, denoted by \prolattice{T}{n}.
For $l\in\lattice{T}$, we use $\vec{l}$ as shorthand for a vector with all elements evaluating to $l$.

\textbf{Formalized Property.}\label{sec:taint:wd}
We denote by \bopondomain{\op}{\prolattice{T}{n}}{a}{b} the operations for two taint vectors $a$ and $b$ on lattice \prolattice{T}{n}, where $\op$ is an operator.
We say a label vector $t$ of length $n$ is \textit{legal} for an $n$-bit value $v$ if all concrete labels in $t$ match the value of corresponding bits in $v$, denoted by \isinstance{v}{t}.
For $v_1$ and $v_2$ that satisfy $v_1 \vdash t$ and  $v_2 \vdash t$, we define \tainteq{v_1}{v_2}{t} if $v_1[i] = v_2[i]$ holds for any $0\le i \le n-1$ such that $t[i]\neq \thigh$.
For example, $(\tone, \tzero, \thigh, \tlow)$ is legal for $\texttt{b1001}=5$, $(\tone, \tone, \thigh, \tlow)$ is not legal for $\texttt{b1001}=5$, and \tainteq{\texttt{b1001}}{\texttt{b1000}}{(\tlow,\tzero,\tlow,\thigh)}.
We present the definition of \textit{well-defined} as below:

\begin{definition}[Well-Defined]\label{def:well-defined}
An operator \op{} on \prolattice{T}{n} is  well-defined iff for any $ t_1 , t_2 \in \prolattice{T}{n}$, and \isinstance{v_1}{t_1}, \isinstance{v_2}{t_2}, then \bopondomain{\op}{\prolattice{T}{n}}{t_1}{t_2} satisfies
\begin{itemize}[left=0pt] 
    \item \textbf{Legality}: \isinstance{v_1\op v_2}{\bopondomain{\op}{\prolattice{T}{n}}{t_1}{t_2}}. 
    \item \textbf{Non-Interference}: For any $v'_1$ s.t. \tainteq{v_1}{v'_1}{t_1}, and any $v'_2$ s.t. \tainteq{v_2}{v_2'}{t_2}, $\tainteq{v_1\op v_2}{v'_1\op v'_2}{\bopondomain{\op}{\prolattice{T}{n}}{t_1}{t_2}} $ holds.
\end{itemize}
    
\end{definition}

The first requirement stipulates the result of the operator on \prolattice{T}{n} should be legal for the natural result of its application to concrete values, indicating that the bit with a \tzero (or \tone) label should always evaluate to 0 (resp. 1).
The second provides a formal description for \tlow and \thigh. 
It states that when any operand (e.g., $v_1$) is replaced with a new one (e.g., $v_1'$) having the same values on all non-\thigh bits, the result of the computation will remain unchanged in the corresponding non-\thigh bits of the resulting taint label vector. 
This essentially enforces a non-interference property, ensuring that modifications to \thigh bits do not influence the non-\thigh bits.


\textbf{Operator Rules.}
Based on the definition of the well-defined operator, we can build establish the operator rules for complex computations such as multiplication. 
For $\mulop$, a conservative approach involves reverting to the traditional taint tracking method, where each bit in a taint vector is assigned the same label.
A more refined approach leverages the fact that the $i$-th bit of $v_1 \cdot v_2$ is influenced only by the 0-th to $i$-th bits of $v_1$ and $v_2$. 
Consequently, the $i$-th bit of the product is set to a concrete value if all lower bits of $v_1$ and $v_2$ are labeled as concrete, or it is set to \tlow if none of the lower bits of $v_1$ and $v_2$ are labeled as \thigh.
It can be proven that both operators designed for $\mulop$ are well-defined. 
In fact, as long as the well-defined property is satisfied, we can design bit-level tracking methods with different levels of precision, tailored to the specific needs of the analysis.


We provide  complete rules in \Cref{app:sec:taint-operation} for each operator in \Cref{fig:core-language}, and demonstrate that they are well-defined.




\subsection{Reducing Security to Safety}\label{sec:speculative-safety}

Similar to previous work \cite{ct-verif-ai2017,exorcise2021}, we extend taint tracking into the speculative semantics to track sensitive information. 
The full semantics are detailed in \Cref{app:sec::full-semantics}.
Specifically, for the state transition $(p, s)\xrightarrow[d]{o} (p, s')$, we use $t(o)$ to represent the taint label of the observation.
Using the taint tracking method, we can trace data that carries sensitive information and apply protections accordingly. 

We propose a new property parametric in the taint tracking method defined in \Cref{sec:taint-tracking}, called \textit{speculative safety (SS)}. 
In short, SS imposes a restriction on the taint labels in observations generated during misspeculative execution, prohibiting \thigh labels.
The technique of reducing a hyperproperty to a safety property parametric in taint tracking is common in previous work  \cite{ct-verif-ai2017,exorcise2021,caches-2019}.
However, SS defined in our work distinguishes itself in the following two ways.
First, it features the bit-level taint tracking mechanism, facilitating reevaluation of the security guarantee of speculative safety.
Second, by requiring that observations generated during misspeculative execution carry no \thigh labels, we can circumvent the requirement to track implicit information flows, while maintaining security guarantees.

A program $p$ satisfies SS w.r.t $P$, written $\ssafety{p}{P}$, if no observation with a taint vector containing \thigh labels is generated during misspeculative execution in any execution trace.
\begin{definition}[SS]\label{def:speculative-safety}
    $\ssafety{p}{P}$, iff
    for any $\trace{p}{s_1}{D}{O}=(p, s_1)\xrightarrow[d_1]{o_1} (p, s_2)\cdots(p, s_n)$,
     s.t. $s_1 \vdash P $,
     we have,
    $$\forall 1\le i\le n, f_{s_i} = \top \Rightarrow\thigh\notin t(o_i) $$
\end{definition}

The following theorem (proved in \Cref{app:sec:sni-ss}) describes the security guarantee of SS.
Specifically, a program $p$ that satisfies SS w.r.t $P$, also satisfies SNI w.r.t $P$.
\begin{restatable}{theorem}{SStoSNI}\label{the:sni-ss}
    $\ssafety{p}{P} \Rightarrow \sni{p}{P}$.
\end{restatable}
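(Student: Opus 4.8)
The plan is to prove the contrapositive: assuming $p$ violates SNI with respect to $P$, I will construct an execution trace witnessing a violation of SS. By \Cref{def:speculative-non-interference}, a violation of SNI gives two initial states $s_1 \backsim_P s_1'$ and directives $D$ such that the speculative traces $\trace{p}{s_1}{D}{O}$ and $\trace{p}{s_1'}{D}{O'}$ satisfy $\overline{O} = \overline{O}'$ (the induced sequential traces agree) yet $O \neq O'$. I would then lift both states to taint-annotated states by picking $\mu$ and $\mu'$ that agree on $P$, i.e. $\conf{\rho_{s_1},\mu,f_{s_1}} \vdash P$, and run the extended semantics of \Cref{fig:semantics-taint} on both. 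The goal is to locate the first index $i$ where $o_i \neq o_i'$ and argue that (a) $f_{s_i} = \top$ at that point, and (b) the taint vector $t(o_i)$ must contain a \thigh label, contradicting SS for $p$ (under policy $P$, on the trace from $s_1$).

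The core of the argument is an invariant maintained along the two synchronized runs, proved by induction on the trace length: for every prefix of length $k$, either (i) the two states have produced identical observations so far and are "taint-synchronized" — meaning $\rho_{s_k}$ and $\rho_{s_k'}$ agree on every register/address whose taint vector (which is the same in $\mu_{s_k}$ and $\mu_{s_k'}$) has no \thigh bit, in the sense captured by $\backsim$ from \Cref{sec:taint:wd} lifted to stores — or (ii) a \thigh-tainted observation has already been emitted in a state with $f=\top$. While we are in case (i) and the two runs are still in sequential mode ($f=\bot$), the premise $\overline{O}=\overline{O}'$ forces the observations to stay equal, so no divergence can occur there; divergence can only first occur after at least one $\DirForce$ directive has been taken, hence in a state with $f=\top$. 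At the divergence step, the instruction is a load, store, or branch whose address/condition expression $e$ evaluates differently under $\rho_{s_i}$ and $\rho_{s_i'}$; by the taint-synchronization invariant and the well-definedness of the taint operators (\Cref{thm:well-defined}), if $e$'s taint vector had no \thigh bit on the observed slice $[a,b]$, the evaluations would have to agree on those bits, contradicting $o_i \neq o_i'$. Therefore $\thigh \in t(o_i)$, which is exactly the SS violation.

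The key lemma I would isolate and prove first is the propagation of taint-synchronization through a single execution step: if $\rho \backsim_\mu \rho'$ (agreement on all non-\thigh bits, per the store-level lifting of \Cref{def:well-defined}) and both states take the same directive on the same instruction, then the successor register maps are again related by the successor taint map, and the emitted observations are either both \thigh-tainted or identical on their observed slices. This is where \Cref{thm:well-defined} does the real work: for \assignKywd, \loadKywd, \storeKywd and \pcondassign{x}{e}{e'}, the new value's taint comes from $\exprEval{e}{\mu}$, and the second clause of well-definedness guarantees that perturbing \thigh bits of the inputs perturbs only \thigh bits of $\exprEval{e}{\rho}$. One subtlety is memory: a \storeKywd to a \thigh-tainted address can write to different locations in the two runs; I handle this by noting that such a store emits a \thigh-tainted observation (since $\thigh \in t$ of the address expression), so we immediately land in case (ii) and are done — we never need to reason about the divergent memory states afterwards. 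This is precisely the design benefit the paper highlights: requiring misspeculative observations to be \thigh-free lets us sidestep implicit-flow and post-divergence tracking.

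The main obstacle I anticipate is the bookkeeping around the \emph{first} point of divergence together with the role of the $\DirForce$ directive: I must argue carefully that as long as both runs remain in sequential execution the hypothesis $\overline{O} = \overline{O}'$ keeps them in lockstep (this requires relating the sequential projection of a speculative trace to an actual sequential trace — essentially that before the first $\DirForce$, a trace under directives $D$ coincides with the sequential trace, and after a \fenceKywd it is blocked), and that any divergence in the speculative run can be pushed back to a step executed with $f=\top$. A second, more technical obstacle is getting the statement of the taint-synchronization relation on \emph{stores} exactly right so that the inductive step goes through for every rule in \Cref{fig:semantics} uniformly, including \allocKywd (which touches the special register $\mem$, but does so identically and with fixed \tlow/\tzero taint in both runs) and \fenceKywd and \jmpKywd (which emit $\epsilon$ and are trivial). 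Once the single-step lemma and the sequential-lockstep observation are in place, the theorem follows by a straightforward induction picking out the first divergent observation.
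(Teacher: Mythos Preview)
Your proposal is correct and follows essentially the same route as the paper: locate the first divergent observation, use $\overline{O}=\overline{O}'$ to argue it lies in the misspeculative region, and invoke a taint-synchronization invariant (the paper's \Cref{lemma:label-non}) driven by well-definedness (\Cref{thm:well-defined}) to force a $\thigh$ label into the observed taint at that step. The only packaging difference is that the paper states the invariant as ``pc-agreement up to $t$ implies $\mu$-agreement and $\backsim_\mu$-relatedness at $t$'' and separately argues pc-agreement, whereas you fold both into a single disjunctive invariant; the store-to-$\thigh$-address subtlety you flag is handled the same way in both proofs (via $\overline{O}=\overline{O}'$ in the sequential prefix and via the SS hypothesis afterward).
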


\subsection{Domain for Program State}\label{sec:LightSLH}
In our work, the program state consists of both a taint vector from \textit{taint domain}, and an abstract value from \textit{value domain}. 
The taint domain, as described in \Cref{sec:taint-tracking}, is employed to track the propagation of sensitive information.
The value domain represents variable and pointer values, facilitating memory access computations and enabling the detection of out-of-bounds accesses.

\textbf{Value Domain.}
For the analysis of the range of each variable and memory address, we employ the value domain (denoted by \VD) as in \cite{valuedomain}.
In short, we apply a representation scheme where a value is associated with a set of ranges, each defined by an offset relative to a base.
We carefully track the values computed from such base addresses, recording their corresponding offsets within the associated memory region.
For values that lack a traceable origin to a specific memory allocation, we consider the base address to be empty, denoted by $\emptysym$. In such cases, the offset solely represents the range itself.
Particularly, for bases other than $\emptysym$, we represent their offsets using several non-overlapping intervals, while for ranges with an empty base, we use only a single interval to represent the offset. 
For example, $(s:[3,5]\cup[9,11])$ represents the values at addresses with $s$ as the base address and offsets within the range $[3,5]\cup[9,11]$.
This approach allows for a precise description of memory accesses to arrays of structures when representing pointers, while simplifying the representation when handling numbers.



\textbf{Memory Model.}
Since pointers are represented by abstract values, processing memory access instructions requires determining the potentially accessed memory cells.
Our abstract memory model, similar to the value domain, consists of multiple memory regions, each with a base address, size, and values stored at offsets ranging from $0$ to $\text{size}-1$.
Memory accesses using abstract pointers determine content based on their base-offset pairs.
Particularly, memory accesses to abstract addresses of base-offset pairs with \emptysym as bases, or out-of-bounds offsets, are handled as follows: load operations yield the top element of the lattice; store operations are treated as to every memory cell.

\begin{figure}
    \begin{lstlisting}[mathescape=true, numbers=none]
    if ( x < 8 ){ y = a[x]; z = b[y]; w = c[z]; }
    \end{lstlisting}    
    \footnotesize
    \begin{tabular}{llll}
        Expr            &\textsc{TransSeq}                                               & \textsc{TransSpec}                                                  & \textsc{TransHK}\\\hline
        \texttt{x}      & \statecell{${(\emptysym,[0,7])}$}{$\vec{\tlow}$}     & \statecell{${(\emptysym,[0,15])}$}{$\vec{\tlow}$}         & \statecell{${(\emptysym,[0,15])}$}{$\vec{\tlow}$} \\
        \texttt{a+x}    & \statecell{${(\texttt{a},[0,7])}$}{$\vec{\tlow}$}   & \statecell{${(\texttt{a},[0,15])}$}{$\vec{\tlow}$}       & \statecell{${(\texttt{a},[0,15])}$}{$\vec{\tlow}$}\\
        \texttt{y=a[x]} & \statecell{${(\emptysym,[0,255])}$}{$\vec{\tlow}$}       & \statecell{\topof{\VD}}{$\vec{\thigh}$}               & \statecell{\topof{\VD}}{$\vec{\thigh}$} \\
        \texttt{b+y}    & \statecell{${(\texttt{b},[0,255])}$}{$\vec{\tlow}$} & \fbox{\statecell{\topof{\VD}}{$\vec{\thigh}$}}     & \fbox{\statecell{\topof{\VD}}{$\vec{\thigh}$}} \\
        \texttt{z=b[y]} & \statecell{${(\emptysym,[0,255])}$}{$\vec{\tlow}$}       & \statecell{\topof{\VD}}{$\vec{\thigh}$}               & \colorbox{gray!30}{\statecell{${(\emptysym,[0,255])}$}{$\vec{\tlow}$}} \\
        \texttt{c+z}    & \statecell{${(c,[0,255])}$}{$\vec{\tlow}$}          & \fbox{\statecell{\topof{\VD}}{$\vec{\thigh}$}}     & \statecell{${(c,[0,255])}$}{$\vec{\tlow}$} \\
        \texttt{w=c[z]} & \statecell{${(\emptysym,[0,255])}$}{$\vec{\tlow}$}       & \statecell{\topof{\VD}}{$\vec{\thigh}$}               & \statecell{${(\emptysym,[0,255])}$}{$\vec{\tlow}$} \\
    \end{tabular}
    
    \caption{
        An example to illustrate how different \textsc{Trans} functions work on the abstract domain.
        Pointers that generate an observation with \thigh labels when performing memory access are \fbox{ boxed }.
        The results utilized from the analysis under abstract sequential semantics are marked with \colorbox{gray!30}{colorbox}.
    }
    \label{fig:LightSLH}
\end{figure}
    
    \textbf{A Running Example.}
    We explain how different \textsc{Trans} functions work on the abstract domains using \Cref{fig:LightSLH} as an example. 
    The variable \texttt{x} initially holds the abstract value ${(\emptysym,[0,15])}$ with a taint vector $\vec{\tlow}$.
    \texttt{a} is an array of 8-bit unsigned integers with a size of 8, while \texttt{b} and \texttt{c} are both arrays of 8-bit unsigned integers with a size of $2^8=256$.
    The contents of \texttt{a}, \texttt{b} and \texttt{c} are all labeled with $\vec{L}$.
    Each item in the table represents the abstract value and the taint label vector corresponding to the variables in the first column.
    
    For \textsc{TransSeq}, since the branch condition restricts \texttt{x}'s range to $(\emptysym,[0,7])$, the memory accesses of \texttt{a[x]}, \texttt{b[y]} and \texttt{c[z]} are all within bounds.  
    In contrast, for \textsc{TransSpec}, which ignores branch conditions, \texttt{a[x]} may result in an out-of-bounds access, causing  \texttt{y=a[x]} to be set to  \topof{\VD} (the top value of \VD), and labeled with $\vec{\thigh}$.
    As a result, the pointer \texttt{b+y} is also labeled with $\vec{\thigh}$ and accessing $\texttt{b[y]}$ generates an observation with \thigh labels, indicating that \texttt{b+y} should be hardened.
    Similarly, \texttt{c+z} will also be marked as requiring hardening.
    \textsc{TransHK} behaves as \textsc{TransSpec} except for instructions marked for hardening. 
    When processing \texttt{z=b[y]}, since \texttt{b+y} is marked, \textsc{TransHK} directly utilizes the results of SeqAI, assigning $(\emptysym,[0,255])$ and $\vec{\tlow}$ as the value of $\texttt{z}$.

    \textbf{Soundness.}
    Based on the monotone framework and the abstract domain, we prove the soundness of LightSLH.
    Specifically, for the fixpoint $(\Omega,K)$ obtained by LightH,  $\Omega$ provides an over-approximation of the behavior of the hardened program \hardened{p}{P}. As a result (with proof sketch detailed in \Cref{sec:sketch}), we have 

    \begin{restatable}{theorem}{lightslh}\label{thm:LightSLH}
        $\ssafety{\hardened{p}{P}}{P}$
    \end{restatable}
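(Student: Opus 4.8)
The plan is to show that the three-phase analysis of \LightSLH{} soundly over-approximates all speculative traces of the hardened program, so that whenever the hardened program generates an observation during misspeculative execution, the corresponding abstract observation computed in the second phase carries no \thigh{} label; by \Cref{def:speculative-safety} this gives $\ssafety{\hardened{p}{P}}{P}$. Concretely, I would first invoke the soundness result from \Cref{app:sec:sound} (referenced in \Cref{sec:abstract-interpretation}) in two forms: (i) the first-phase fixpoint $\Omega^{\seq}$ soundly over-approximates every \emph{sequential} state reachable at each program point of $p$, and (ii) the abstract speculative semantics soundly over-approximates every \emph{speculative} state. The core technical claim to establish is then a single invariant, proved by induction on the length of a speculative trace of $\hardened{p}{P}$: at every program point, the concrete state lies in the concretization of the abstract state computed in phase two, \emph{and} the concrete taint map is under-approximated by (i.e.\ refines) the abstract taint map computed there. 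Both conjuncts must be carried together because the taint of a loaded value depends on the value domain (to decide in/out-of-bounds) and vice versa through masking.

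The inductive step splits on the instruction type. For \assignKywd, \jmpKywd, \allocKywd, conditional assignment, and \fenceKywd, the step is routine monotonicity of the abstract transfer functions plus local soundness of the taint operators (\Cref{thm:well-defined}, lifted to \abdpl{\lattice{T}}{n}). The two interesting cases are (a) a memory-access or branch instruction that \LightSLH{} did \emph{not} mark for hardening, and (b) one that it did. In case (a), the phase-two analysis used the ordinary abstract speculative transfer function and concluded the abstract observation has no \thigh{} label; soundness of the abstract semantics plus the induction hypothesis transfer this to the concrete observation. In case (b), the instruction is hardened exactly as in SLH/SSLH: during misspeculative execution ($f=\top$) the masked operand becomes the invalid value, so by the load/store rules the observation generated is on that fixed invalid address, whose taint is $\vec{\tlow}$ (a constant), hence contains no \thigh{} label; and the post-state register/memory is either unchanged (store blocked) or set to the \tbot{}/invalid value, which — crucially — is $\sqsubseteq$ the phase-one sequential abstract value that phase two substituted in at that point. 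This last inclusion is precisely the justification for the ``direct reuse of phase-one results'' highlighted in \Cref{fig:LightSLH}, and I would state it as an explicit lemma: \emph{for a hardened memory instruction, the maximal speculative post-state is bounded above by the maximal sequential post-state recorded in $\Omega^{\seq}$}. One must also check the case $f=\bot$ at a hardened instruction (mask is $0$, behaves as the original instruction), where the phase-one bound applies directly because no misspeculation has yet occurred on that path.

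I also need to handle branch instructions under \DirForce{}: this is where $f$ becomes $\top$, and I must ensure that the abstract speculative semantics of phase two already accounts for \emph{both} successors of every branch (it does, since the speculative abstract semantics ``do not use branch conditions to constrain subsequent states'' per \Cref{sec:abstract-interpretation}), so the induction hypothesis is available on whichever path the directive forces. Finally, since \Cref{def:speculative-safety} quantifies over \emph{all} traces from initial states with $s_1 \vdash P$, and $\hardened{p}{P}$ and $p$ have the same \allocKywd{} instructions and the same initial-memory discipline, the abstract fixpoints computed by \LightSLH{} are genuinely sound for $\hardened{p}{P}$'s executions — here I would note that hardening only adds mask computations and \orop/\andop{} operations on operands, which the abstract semantics and taint operators already model, so phase two's analysis is a valid over-approximation of the \emph{hardened} program, not just of $p$.

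\textbf{Main obstacle.} The delicate point is the phase-two ``reuse'' lemma: I must argue that replacing the speculative post-state of a hardened memory access by the phase-one sequential value is genuinely a \emph{sound over-approximation of the hardened program's speculative behavior}, not merely a convenient shortcut. This requires (1) that once $f=\top$ the hardened operand is forced to the invalid address so the load returns the \tbot{}/top value and the store is neutralized — a semantic fact about the masking code that must be stated precisely in the extended semantics — and (2) that on the remaining ``$f=\bot$'' prefix of any trace, the behavior is literally sequential, so the phase-one fixpoint over-approximates it. Combining these, the phase-two abstract value at the point after a hardened instruction dominates \emph{every} reachable concrete value, whether reached sequentially or speculatively. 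Getting the bookkeeping of $f$, the mask register, and the value/taint domains to line up across the two phases is where the real work lies; everything else is a standard abstract-interpretation soundness induction.
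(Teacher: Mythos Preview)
Your proposal is essentially correct and follows the same structure as the paper's proof: an induction invariant (labelled (*) in the paper) that each concrete state of $\hardened{p}{P}$ lies in the concretization of the phase-two abstract state, with the same case split on whether the current location is in the harden set $\finalharden$, and for hardened load/store the same sub-split on $f=\bot$ (sequential prefix, invoke the phase-one soundness theorem) versus $f=\top$ (masked operand is the invalid address, load returns $\epsilon$, store is a no-op). The one step you leave implicit that the paper spells out (its claim (**)) is why, for a \emph{non}-hardened instruction, the phase-two abstract observation is guaranteed to carry no $\thigh$: this needs $\hardenset$ of the phase-two analysis (run with $\finalharden$ fixed) to remain contained in $\finalharden$, which the paper establishes by bounding the induction sequence $\abS_i$ above by Algorithm~1's fixpoint via monotonicity of the $\trans$ operator. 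A minor correction: at a hardened instruction under $f=\top$ the observation taint is $\vec{\tone}$ (from the \orop{} rule with an all-ones second operand), not $\vec{\tlow}$, though either way $\thigh$ is absent.
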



    \readded{\subsection{Variants of LightSLH}
    \label{sec:LightSLH-variants}
    While LightSLH is based on the SLH strategies, to show the flexibility of our framework to integrate different hardening strategies, we also propose two variants which based on the fence strategies: LightFence and LightCut.}

    \readded{\textbf{LightFence.} Like LightSLH, LightFence places protection (\texttt{lfence}) right before an instruction when SpecHK identifies it (1) performs an out-of-bound store, (2) performs an secret-dependent memory access of branch.}

    \readded{\textbf{LightCut.} 
    Similar to LightFence, LightCut employs \texttt{lfence} instructions to harden the program. The key difference lies in its approach to handling out-of-bounds stores. Inspired by LLSCT \cite{serberus2024}, LightCut introduces a deferred hardening strategy: out-of-bounds stores (referred to as sources) are only hardened when a memory operation or branch (referred to as sinks) dependent on these sources is detected. 
    Specifically, upon identifying a source, LightCut does not immediately insert a fence. Instead, it records the source and waits for the corresponding sink to be identified. When a sink is detected, LightCut inserts a fence between the source and the sink. If the source resides within a loop, the fence is strategically placed at the exit of the outermost loop that does not contain any sinks. This approach minimizes unnecessary fences while ensuring robust protection.}

\readded{\textbf{Security Guarantees.}
As described in \Cref{sec:lighth}, the algorithm runs until reaching the fixpoint $(\Omega, K)$, where $\Omega$ provides an over-approximation of program states after applying protections based on the set $K$. And at this fixpoint, no further protections are required according to the current state. Since different hardening strategies only affect where protections are placed, \Cref{thm:LightSLH} remains valid for both LightFence and LightCut.}
\readded{However, as we will demonstrate in \Cref{sec:chacha}, LightFence and LightSLH offer more robust protection by accounting for the compiler's inserted load when lowering high level code to low level binary.}

\begin{figure*}[!tp]
    \centering
    \fontsize{7.5}{9}\selectfont
    \setlength{\tabcolsep}{2.5pt}
    \redeleted{
    \begin{tabular}[tp]{@{}c|c|c|c|c|c|c|c|c|c@{}}
        \multirow{2}{*}{Algorithm}   & \multirow{2}{*}{Analysis Starting Function} &\multirow{2}{*}{\makecell{\# of Analyzed \\ Inst.}} &\multirow{2}{*}{\makecell{\# of Processed Inst. \\ (SeqAI + LightH)}} &\multirow{2}{*}{\makecell{\# of Analyzed \\ Func.\\}} & \multirow{2}{*}{\makecell{\# of Processed Func.\\ (SeqAI + LightH)}} & \multirow{2}{*}{\makecell{Analysis \\Time}}  & \multicolumn{3}{c}{\# hardened / \# total} \\
        \cline{8-10}
        &&&&&&& load & store & branch
        \\\hline
        AES(NI)*            & \texttt{AES\_encrypt}                     &0.4k&1.4k+1.3k&2&5+5& $<0.1$s           &0/34       &0/30       &0/3       \\ 
        RSA*            & \texttt{BN\_mod\_exp\_mont\_consttime}    &7.8k&263.7k+156.4k&74&1362+841& $43$s      &396/715    &195/361    &290/560\\
        ChaCha20*       & \texttt{ChaCha20\_ctr32}                  &0.2k&0.7k+0.7k&1&1+1& $<0.1$s           &0/20       &7/12       &0/15      \\ 
        Curve25519*     & \texttt{ossl\_x25519}                     &2.6k&7.8k+7.2k&2&21+21& $0.1$s            &0/125      &0/156      &0/9       \\ 
        Poly1305*       & \texttt{Poly1305\_Update}                 &0.3k&1.2k+0.8k&2&3+3& $<0.1$s           &0/27       &0/6        &0/6       \\ 
        SHA256*         & \texttt{SHA256\_Update}                   &1.3k&6.6k+4.8k&2&3+3& $<0.1$s            &0/112      &0/36       &0/8       \\\hline 
        RSA\textdegree  & \texttt{BN\_mod\_exp\_mont\_consttime}    &9.5k&407.2k+223.6k&57&1379+650& $27$s             &379/816    &182/398    &231/604 \\\hline
        Salsa20$^+$     &\texttt{crypto\_core\_salsa20}             &0.7k&0.9k+0.9k&2&2+2&$<0.1$s&0/64&0/64&0/3\\
        SHA256$^+$      &\texttt{crypto\_hash\_sha256\_update} &1.1k&3.3k+3.2k&2&3+3    &$<0.1$s &0/102&37/81&0/42\\\hline            
        Salsa20$^-$     &\texttt{crypto\_core}             &0.7k&0.9k+1.0k&1&1+1&$<0.1$s&0/64&0/64&0/1\\
        SHA256$^-$     &\texttt{crypto\_hashblocks}             &2.8k&5.3k+5.3k&1&1+1&$<0.1$s&0/97&0/32&0/2\\
        Poly1305$^-$     &\texttt{crypto\_onetimeauth}             &0.6k&1.4k+1.3k&1&1+1&$<0.1$s&0/69&10/40&0/21
    \end{tabular}}
\readded{
    \begin{tabular}[tp]{@{}c|c|c|c|c|c|c|c|c|c@{}}
        \multirow{2}{*}{Algorithm}   & \multirow{2}{*}{Analysis Starting Function} &\multirow{2}{*}{\makecell{\# of Analyzed \\ Inst.}} &\multirow{2}{*}{\makecell{\# of Processed Inst. \\ (SeqAI + LightH)}} &\multirow{2}{*}{\makecell{\# of Analyzed \\ Func.\\}} & \multirow{2}{*}{\makecell{\# of Processed Func.\\ (SeqAI + LightH)}} & \multirow{2}{*}{\makecell{Analysis \\Time}}  & \multicolumn{3}{c}{\# hardened / \# total} \\
        \cline{8-10}
        &&&&&&& load & store & branch
        \\\hline
        AES(NI)*            & \texttt{AES\_encrypt}                     &0.4k&1.4k+1.3k&2&5+5& $<0.1$s           &0/34       &0/30       &0/3       \\ 
        RSA*            & \texttt{BN\_mod\_exp\_mont\_consttime}    &9.7k&301.7k+231.1k&74&1284+725& $35.9$s      &488/719    &248/364    &341/568\\
        ChaCha20*       & \texttt{ChaCha20\_ctr32}                  &0.4k&1.4k+1.2k&1&1+1& $<0.1$s           &0/35       &11/27       &0/18      \\ 
        Curve25519*     & \texttt{ossl\_x25519}                     &2.6k&6.9k+7.5k&2&18+24& $0.2$s            &0/125      &0/154      &0/9       \\ 
        Poly1305*       & \texttt{Poly1305\_Update}                 &0.3k&1.2k+0.8k&2&3+3& $<0.1$s           &0/27       &0/6        &0/6       \\ 
        SHA256*         & \texttt{SHA256\_Update}                   &1.3k&5.4k+5.2k&2&3+3& $<0.1$s            &0/112      &0/36       &0/8       \\\hline 
        RSA\textdegree  & \texttt{BN\_mod\_exp\_mont\_consttime}    &10.0k&392.2k+291.7k&57&1386+772& $19.2$s             &557/852    &286/437    &361/629 \\\hline
        Salsa20$^+$     &\texttt{crypto\_core\_salsa20}             &0.7k&0.9k+0.8k&2&2+2&$<0.1$s&0/64&0/64&0/3\\
        SHA256$^+$      &\texttt{crypto\_hash\_sha256\_update} &1.3k&3.7k+3.5k&2&3+3    &$0.2$s &0/124&50/99&0/48\\\hline            
        Salsa20$^-$     &\texttt{crypto\_core}             &0.7k&0.9k+0.9k&1&1+1&$<0.1$s&0/64&0/64&0/1\\
        SHA256$^-$     &\texttt{crypto\_hashblocks}             &2.8k&5.3k+5.3k&1&1+1&$<0.1$s&0/97&0/32&0/2\\
        Poly1305$^-$     &\texttt{crypto\_onetimeauth}             &1.0k&2.5k+2.1k&1&1+1&$<0.1$s&0/132&8/102&0/23\\\hline
        \readded{Kyber$^\bullet$} &\texttt{kyber\_clean\_crypto\_kem\_enc}             &3.1k&36.1k+40.2k&39&152+127&$0.5$s&0/284&125/234&5/83\\
        \readded{McEliece$^\bullet$} &\texttt{mceliece\_clean\_encrypt}             &0.6k&1.1k+0.9k&5&9+9&$<0.1$s&0/97&5/13&3/14\\
    \end{tabular}}
    \caption{Analysis information of LightSLH.
     *: \openssl3.3.0. \textdegree: \openssl1.0.2f. $^+$: Libsodium 1.0.20. $^-$: NaCL. \readded{$^\bullet$: PQClean}}
    \label{fig:analysis} 
\end{figure*}

\section{Evaluation}\label{sec:evaluation}

Following the methodology outlined in previous sections, we implement LightSLH, \readded{LightFence and LightCut} as an LLVM \cite{lattner2004llvm} IR pass.
Details about our implementation is presented in \Cref{sec:impl}.
\redeleted{In this section, we evaluate the performance of LightSLH, including its analysis efficiency and the overhead associated with its protective measures.
}\redeleted{To demonstrate the flexibility and generality of our methodology, we additionally implement LightH as LightFence based on the fence protection strategy.}
\readded{In this section, we evaluate the overhead associated with different protective measures, and the analysis efficiency of LightSLH.
}
We present our main results in \Cref{sec:main-result} and discuss two case studies that highlight interesting findings and insights encountered during the analysis in the next two subsections.

\textbf{Workloads.}
We evaluate \LightSLH on several cryptographic primitives from mainstream cryptographic libraries, including AES(NI), ChaCha20, Poly1305, SHA256, Curve25519 and RSA from OpenSSL 3.3.0, Salsa20 and SHA256 from Libsodium 1.20.0, and Poly1305, Salsa20, and SHA256 from NaCL, \readded{and Kyber512 and McEliece348864 from PQClean}.
While OpenSSL 3.3.0's RSA implementation adopts a conservative approach against cache attacks, we also analyze RSA from OpenSSL 1.0.2f which uses scatter-gather techniques as its cache side-channel defense mechanism.

\textbf{Baselines.}
\readded{We compare against three baselines: two basic defenses (Fence and SSLH) and LLSCT \cite{serberus2024}, a state-of-the-art tool for hardening against Spectre attacks.
Fence implements speculation barriers at basic block entries, while SSLH protects all load, store, and branch instructions, improving upon its predecessor's incomplete protection \cite{exorcise2021}.}
\readded{While LLSCT offers comprehensive protection for 5 different Spectre variants, in our settings, we evaluated LLSCT with Spectre PHT (v1), BTB \cite{spectre2019} and RSB \cite{ret2spec2018} protections enabled, as LLSCT couples v1 protection with BTB/RSB mitigations. 
For cryptographic primitives, Spectre BTB attacks are inherently mitigated due to the absence of indirect branches \cite{typing2023}.}
\curdeleted{, and protections against Spectre RSB are predominantly implemented at the hardware level (discussed in \Cref{sec:discussion}).}
\added{For RSB-based attacks, while LLSCT offers software-level protection against initial RSB overflow scenarios, we discuss defenses against more comprehensive RSB attacks, including recently discovered underflow variants, in \Cref{sec:discussion}.}
\redeleted{We evaluate against three baselines: two basic defenses (Fence and SSLH) and LLSCT \cite{serberus2024}, an optimized hardening strategy. 
Fence implements speculation barriers at basic block entries, while SSLH protects all load, store, and branch instructions, improving upon its predecessor's incomplete protection \cite{exorcise2021}. 
LLSCT offers comprehensive protection against Spectre v1-4 for programs which satisfy constant-time in sequential semantics, We select LLSCT as our primary baseline due to its demonstrated lower performance overhead compared to existing approaches \cite{blade2021,uslh2023,llvm:slh}.
Modern processors and operating systems have mitigated Spectre v3-4 vulnerabilities through system-level mechanisms including microcode updates and other mechanisms \cite{intel:ssb}. 
Additionally, cryptographic primitives are inherently resistant to Spectre v2 attacks \cite{typing2023} due to the absence of indirect branches.
In our settings, we evaluated LLSCT with Spectre v1-3 protections enabled, as LLSCT couples v1 protection with v2/v3 mitigations.
While Declassiflow \cite{declassiflow2023} targets overhead reduction for Spectre v1 defenses, we exclude it from our evaluation due to functional incorrectness in its implementation\footnote{The hardened program produces incorrect outputs compared to the original.}.}

\textbf{Experiment Setup.} 
We run experiments on a machine with 2.40GHz Intel Xeon\textregistered{} CPU E5-2680 and 128 GB memory.
We compile all workloads using Clang \redeleted{16.0.0 } \readded{14.0.4} and LLVM \redeleted{16.0.0 } \readded{14.0.4} with optimization flag \texttt{-O2}.
The only manual intervention required is annotating critical function parameters with security attributes to identify secret variables in the program's initial state.
\redeleted{In accordance with the implementation in the cryptographic libraries, our analysis assumes a cache size of 64B, allowing the attacker to observe the 58 most significant bits of a 64-bit address. }
\readded{We conducted our analysis under two settings: (1) a 64B cache line size, where the attacker observes the top 58 bits of a 64-bit address, consistent with the RSA implementation in OpenSSL, and (2) the attacker observes the entire address (equivalent to a cache line size of 1).}

\subsection{Main Results}\label{sec:main-result}

\readded{\textbf{Different Attack Models.}
Interestingly, we observe that the two attack models yield identical results in terms of overhead and protection. 
Upon analysis, we identify two key reasons: 
(1) For the code in our experiments that is not specifically designed to be cache-aware, LightSLH's protections are entirely attributed to out-of-bounds stores or secret-dependent branches, and there is no secret-dependent memory access. The protections for these cases are independent of the cache size used. 
(2) For RSA (i.e., cache-aware code), as detailed in \Cref{sec:rsa}, even at the cache-line granularity, the cache-aware components require protection (and so under the full-address model). Consequently, LightSLH applies the same protections to RSA under both attacker models.
To avoid redundancy, we only present results under one attacker model.}

\textbf{Efficiency of Analysis.} \Cref{fig:analysis}  presents the analysis information for each workloads. 
LightSLH completes analysis in 0.1 second for most cases, with the complex RSA implementation taking under 1 minute. 
Notably, LightSLH is the first tool to analyze RSA implementations against Spectre attacks.

Unlike the previous mitigations (including the three baselines) to apply hardening regardless of vulnerability, LightSLH can pinpoint the vulnerable instructions and then applying targeted hardening accordingly.
We do not directly compare analysis efficiency with other provable detection tools, as these tools are primarily focused on detecting the presence of vulnerabilities without the capability to identify all potential issues \cite{spectector2020,pitchfork2020,cats2022,hunter2021,kleespectre, typing2023, ai-spec2019}.
Nevertheless, LightSLH demonstrates higher analysis efficiency than these tools based on their reported experimental results.
For instance, although \textsc{Binsec/HAUNTED} \cite{hunter2021} employs a specially optimized symbolic execution algorithm that achieves improved analysis efficiency compared to other tools \cite{pitchfork2020, kleespectre}, it still takes over 30 minutes to verify OpenSSL's curve algorithm. 
Furthermore, \textsc{Spectector}\cite{spectector2020} and Kaibyo \cite{cats2022} only perform analyses on small cases of no more than 100 lines.

Moreover, existing tools \cite{spectector2020,pitchfork2020,cats2022,hunter2021,kleespectre, ai-spec2019} require repeated analysis after each vulnerability fix.
For example, according to \Cref{fig:analysis}, ChaCha20 from OpenSSL, SHA-256 from Libsodium, and RSA require 11, 50, and over 100 rounds of analysis, respectively. 
This iterative re-analysis further exacerbates their performance.
In contrast, LightSLH's framework enables continuous analysis post-hardening, avoiding costly re-analysis, leading to substantial efficiency gains.

\begin{figure*}[t]
    \centering
    \scriptsize
    
    \redeleted{
        \begin{tabular}[tp]{@{}c|c|c|c|c|c@{}}
            \multirow{2}{*}{Algorithm}    &   \multicolumn{5}{c}{Overhead}\\
            \cline{2-6}&Fence&  SSLH&   LLSCT & LightFence&  LightSLH\\\hline
            AES(NI)*        &   8.8\%&  6.8\%&  4.1\%&  \textbf{0\%} &  \textbf{0\%}\\
            RSA*        &   548.0\%&    145.4\%& $\times$    &131.3\%&\textbf{101.6\%}\\
            ChaCha20*   &   67.8\%& 6.8\%&\textcolor{gray}{0\%}   &20.4\%& 4.8\%\\
            Curve25519* &   8.5\% &20.8\%&5.1\%&\textbf{0\%}&\textbf{0\%} \\
            Poly1305*   &   97.9\%&33.6\%&33.9\%&\textbf{0\%}&\textbf{0\%}\\
            SHA256*     &   18.8\%&28.0\%&4.2\%&\textbf{0\%}&\textbf{0\%}\\\hline
            RSA\textdegree& 466.7\%&136.2\%&$\times$&153.8\%&\textbf{95.7\%}\\\hline
            Salsa20$^+$ &62.3\%&15.1\%&6.2\%&\textbf{0.0\%}&\textbf{0\%}\\
            SHA256$^+$  &44.5\%&45.7\%&\textcolor{gray}{0\%}&38.1\%&5.6\%\\\hline
            Poly1305$^-$&638.5\%&87.0\%&\textcolor{gray}{7.9\%}&123.7\%&50.9\%\\
            Salsa20$^-$ &3.2\% &19.1\% &7.8\%&\textbf{0\%}&\textbf{0\%}\\
            SHA256$^-$  &7.7\% &9.5\%  &3.7\%&\textbf{0\%}&\textbf{0\%}
        \end{tabular}}

    \readded{\begin{tabular}[tp]{@{}c|c|c|c|c|c|c@{}}
        \multirow{2}{*}{Algorithm}    &   \multicolumn{6}{c}{Overhead}\\
        \cline{2-7}&Fence&  SSLH&   LLSCT & LightFence&  LightSLH&\readded{LightCut}\\\hline
        AES(NI)*        &   9.5\%&  5.8\%&  2.4\%&  \textbf{0\%} &  \textbf{0\%}&  \textbf{0\%}\\
        RSA*        &   476.1\%&    125.8\%& $\times$    &315.4\% &\textbf{93.5\%} & 299.3\%\\
        ChaCha20*   &   98.5\%    & 14.6\%    & 6.6\%     & 6.4\%         & \textbf{5.8\%}     &6.6\%    \\
        Curve25519* &   6.6\%     & 18.5\%    & 0.7\%     & \textbf{0\%}            & \textbf{0\%}      & \textbf{0\%}            \\
        Poly1305*   &   104.5\%   & 40.6\%    & 31.5\%&\textbf{0\%}&\textbf{0\%} & \textbf{0\%}   \\
        SHA256*     &   17.3\%    & 16.4\%    & \textbf{<0.1\%}        & \textbf{0\%}            & \textbf{0\%}    & \textbf{0\%}          \\\hline
        RSA\textdegree& 466.7\%&136.2\%&$\times$&323.7\%&\textbf{95.7\%} & 309.3\% \\\hline
        Salsa20$^+$ &53.6\%    & 13.1\%    & 4.6\%     &\textbf{0.0\%}&\textbf{0\%} & \textbf{0\%}   \\
        SHA256$^+$  &135.5\%   & 58.2\%    & 2.2\%     & 103.5\%       & 20.5\%  & \textbf{1.3\%}  \\\hline
        Poly1305$^-$&612.5\%   & 79.1\%    & 3.6\%     & 84.7\%        & 46.2\%  &\textbf{0.5\%}\\
        Salsa20$^-$ &53.4\%    & 17.8\%    & 4.1\%&\textbf{0\%}&\textbf{0\%}&\textbf{0\%}\\
        SHA256$^-$  &9.3\%     & 12.4\%    & 6.5\% &\textbf{0\%}&\textbf{0\%}&\textbf{0\%}
        \\\hline
        Kyber$^\bullet$ & 190.6\%   & 83.3\%    & 71.3\%    & 161.6\%       & 69.8\% &  \textbf{40.8\%} \\
        McEliece$^\bullet$ & 93.5\%    & 42.6\%    & 39.4\%    & 37.4\%        & 33.5\% & \textbf{4.1\%} \\\hline
        \readded{geomean (ct)} & 82.3\% & 31.2\% & 12.8\% & 24.9\% & 12.8\% & \textbf{3.9\%} \\\hline
        \readded{geomean (all)} & 114.6\% & 42.2\% & $\times$& 48.4\% & \textbf{21.9\%} & 26.2\% 
    \end{tabular}}
    \caption{Runtime overhead of mitigations for Spectre v1.
    *:  \openssl3.3.0. \textdegree:  \openssl1.0.2f. $^+$:  Libsodium 1.0.20. $^-$:  NaCL. $^\bullet$: PQClean. 
    $\times$ indicates that the algorithm does not meet the prerequisites for analysis by LLSCT, specifically the constant-time property in sequential semantics. \readded{Geomean (ct) and geomean (all) represent the geometric mean of the overhead for mitigations applied to constant-time algorithms and all algorithms, respectively.} 
    \readded{The lowest overhead is highlighted in \textbf{bold}. 
    0\% indicates that no protection is needed, and <0.1\% indicates that although there are protections placed, the overhead is negligible.}}
    
    \label{fig:overhead}
\end{figure*}

\textbf{Flexibility.} 
\redeleted{Following the methodology in \Cref{fig:overview}, we implement LightH as LightFence with an $H()$ based on the fence protection strategy. 
Building on LightSLH's implementation, LightFence requires fewer than 100 lines of C++ code, demonstrating the flexibility of our methodology.
}
\readded{Building on LightSLH's implementation, LightFence and LightCut requires fewer than 300 lines of C++ code, demonstrating the flexibility of our methodology.
}
As shown in \Cref{fig:overhead}, LightFence consistently outperforms the baseline Fence strategy in all evaluated cases, and reduces average overhead by \redeleted{74.5\%} \readded{57.8\%}, \readded{and LightCut further reduces overhead by 69.5\% compared to LLSCT}.
This improvement stems from our framework's ability to apply protections selectively, targeting only areas that require hardening instead of applying them universally.
\redeleted{Despite this, LightSLH demonstrates superior performance compared to LightFence in all cases, leading us to adopt LightSLH as our primary strategy.
}
Despite LightCut demonstrating superior performance compared to LightSLH, we adopt LightSLH as our primary strategy due to its robust protection when considering the compiler's transformation.

\textbf{Overhead.} \Cref{fig:overhead} presents the overhead caused by the various protection methods.
Among all the 14 algorithms, LightSLH outperforms both Fence and SSLH, and reduce overhead by \redeleted{68.6\%} \readded{48.1\%} on average compared to SSLH, including 0\% overhead in 7 cases.
LightSLH achieves the lowest overhead in 12 of 14 cases.
Through manual inspection and analysis of the remaining two algorithms, where LLSCT \readded{LightCut} and LightCut exhibits lower overhead, we identify that the reduced overhead stems from \redeleted{LLSCT's} \readded{their} incomplete protection strategy, which overlooks potential vulnerabilities arising from the LLVM IR to assembly transformation. 

\textbf{Security Issues.}
Our results reveals two previously unknown security issues.
In \Cref{sec:chacha}, we explain why LLSCT's protection is considered incomplete. 
In \Cref{sec:rsa}, we analyze the security properties of the scatter-gather method in RSA from OpenSSL1.0.2f.

\subsection{Case Study: Hardening for Speculative Out-Of-Bounds Store}\label{sec:chacha}

\begin{figure}[h] 
\begin{minipage}{0.30\linewidth}
    \centering
    \begin{lstlisting}[language=C,basicstyle=\footnotesize] 
for (i = 0; i < len; i++) 
    out[i] = buf[i];
    \end{lstlisting}
    \end{minipage}
    \hfill
    \vrule width 0.5pt
    \hfill
\begin{minipage}{0.60\linewidth}
    \begin{lstlisting}[style=asmstyle]
.Loop: 
mov  -8(%rbp), %eax     ; Load 'i' from the stack into %eax; 
mov  buf(,%eax,4), %edx  ; Load buf[i] into %edx; 
mov  %edx, out(,%eax,4) ; Store %edx into out[i]; may overwrite stack if out-of-bounds
...
    \end{lstlisting}
\end{minipage}
\caption{Out-of-Bounds Store in a Loop}
\label{fig:loop}
\end{figure}

Based on our analysis, we observe that in the three workloads where LLSCT exhibits lower overhead, the reduced overhead results from its incomplete protection against speculative out-of-bounds stores within loops. 
Specifically, LLSCT's protection strategy overlooks the threats introduced by the compiler when inserting constant access (CA) loads during the transformation from LLVM IR to assembly, where CA refers to memory accesses that combine constants with the program counter (PC) or stack pointer (RSP).

LLSCT protects programs through a series of LLVM passes. 
For speculative out-of-bounds stores, LLSCT prevents the exploitation of out-of-bounds data written during speculative execution by identifying source-sink pairs in the program and inserting fences between them in an LLVM IR pass.
One type of source-sink pairs consists of NCA stores paired with CA loads.
While LLSCT can mitigate speculative out-of-bounds risks at the LLVM IR level, the LLVM documentation \cite{LLVM} highlights that the compiler can introduce CA accesses when lowering LLVM IR to assembly.
Our experiments confirm that the compiler indeed generates CA loads, potentially creating new source-sink pairs and leading to Spectre vulnerabilities that LLSCT cannot protect against.

In the C-style loop in \Cref{fig:loop}, where the loop variable \texttt{i} is considered in a register, each iteration contains only one NCA store (\texttt{out[i]}) and one NCA load (\texttt{buf[i]}), which LLSCT does not recognize as source-sink pairs.
As a result, no fence is inserted within the loop, even though \texttt{out[i]} executes an out-of-bounds store during speculative execution when $\texttt{i} \ge \texttt{len}$.
This strategy is reasonable because when \texttt{i} is treated as a register, the address of \texttt{buf[i]} remains unaffected by the out-of-bounds store. 
Therefore, the value stored by the speculative out-of-bounds store does not leak via the cache side channel within the loop.

However, such security guarantees are compromised when lowering code from a higher level to assembly. 
When using lower optimization levels or when there are insufficient registers, the program retrieves the loop variable from the stack, as illustrated in the assembly-style code in \Cref{fig:loop}. 
Consequently, during misspeculative execution, the out-of-bounds store \texttt{out[i]} may overwrite values in the stack, including the loop variable \texttt{i}. 
If the region of the stack used to store the loop variable is overwritten by \texttt{out[i]} with a sensitive value \texttt{key}, the loop variable loaded in the next iteration will be \texttt{key}, and \texttt{key} will be leaked through the cache side channel during the memory access \texttt{in[i]} (e.g, \texttt{in[key]}).
Based on this insight, we successfully construct a one-bit side channel with an accuracy of 83\% by exploiting the primitive of speculative out-of-bounds stores in loops that rewrite variables on the stack.
Such side channel confirms this security threat.

In contrast, the security provided by LightSLH remains unaffected by compiler-introduced CA accesses, as the protections in LightSLH directly prevent such out-of-bounds stores from being misspeculatively executed. 
We examine the two cases where LightSLH exhibits higher overhead and find that \textit{all} involve protections applied to prevent such out-of-bounds stores.
Thus, LightSLH pinpoints all vulnerabilities without false positives in these two cases.

Note that in cases where LightSLH introduces 0\% overhead, the source codes also utilize loops to write data to a contiguous address space. 
However, since the number of iterations is constant, the compiler will unroll these loops, and the generated codes will contain no loops. 
This ensures that the addresses involved in the memory access operations can be determined at compile time, precluding out-of-bounds stores during speculative execution. 

\readded{Recent concurrent works \cite{arranz2025preservation,van2025snip} have highlighted that compiler register allocation can lead to speculative out-of-bounds writes overwriting other arrays \cite{arranz2025preservation} or stack addresses allocated for local variables \cite{van2025snip}. Such overwriting introduces data dependencies absent in high-level code, undermining Spectre defenses implemented at higher abstraction levels. 
Similar to our findings, \cite{arranz2025preservation} also identifies that such out-of-bounds writes within loops may write sensitive information to other address spaces. 
While in their examples, the writes within the loop are leaked by load instructions outside the loop, we further demonstrate (with a Poc) that such out-of-bounds instructions can cause data leakage within the loop itself. This underscores the necessity of applying protections directly at the out-of-bounds write.}
\readded{Nevertheless, assuming a property-preserving compiler in the future, both LLSCT and LightCut hold promise as low-overhead solutions.
}

\subsection{Case Study: RSA}\label{sec:rsa}
    \begin{lstlisting}[float=false, style=Cstyle2,basicstyle=\footnotesize, escapechar=|, captionpos=t, 
    caption={RSA in OpenSSL 1.0.2f, secrets are \underline{underlined}.}, label={lst:rsa}]
void gather ( char* buf, char* p, int k, int window ){
for ( i = 0; i < N; i++ )
    p[i] = buf[k + i * window];
}
int BN_mod_exp_mont_consttime ( ..., BIGNUM *key, ... ){
...
bits = BN_num_bits ( key ); ...
width = BN_window_bits_for_ctime_exponent_size ( bits );
/* width is assigned a value ranging in [1,6] based on bits */
window = 1 << width; ...
gather ( buf, p, wvalue, window); ...
/* buf is aligned to the cache line boundary, p is a buffer to store gathered values, wvalue is the index to be gathered, ranging in [0, window - 1] */
}       
\end{lstlisting}

\readded{In this case study, we showcase a pre-Spectre era side-channel issue, which demonstrates the effectiveness of the bit-level taint tracking mechanism in \LightSLH.} 
Analyzing RSA comprehensively and efficiently is a challenging task due to its intricate data structures (e.g., linked lists), convoluted function calls (e.g., recursive calls), and cache-aware code design (e.g., scatter-gather method).
Methods like symbolic execution \cite{spectector2020,pitchfork2020,hunter2021} struggle to scale well for analyzing RSA. 
In the analysis of \LightSLH, the value domain is adept at handling these intricate data structures, and the monotone framework proves instrumental in efficiently analyzing such complex function calls.
\redeleted{In this case study, we will demonstrate how our taint mechanism enables us to perform program analysis in a more precise manner.
}

\readded{Previous studies \cite{cachebleed2017,cache-audit-rigorous2017,caches-2019} suggested that the scatter-gather code in RSA exhibits constant behavior at the cache-line granularity. 
Based on this assumption, such code was considered not to leak sensitive information and, therefore, might not require protection under speculative execution, potentially reducing the overhead of defenses. 
To analyze this cache-aware code and evaluate the potential for reduced overhead, we employed bit-level taint tracking and conducted experiments under two settings: cache-line granularity and full address observation. 
However, our experimental results reveal that the protection overhead remains consistent across both settings.
Upon analyzing the results, we identify, for the first time, that even for an attacker who can only observe memory accesses at cache line granularity, the observations generated by the scatter-gather algorithm  also depend on secrets.
This finding highlights the critical need for protection, even in scenarios constrained to cache-line granularity observations.}

\redeleted{Our work presents a rigorous analysis that demonstrates, for the first time, that even for an attacker who can only observe memory accesses at cache line granularity, the observations generated by the scatter-gather algorithm  also depend on secrets.
According to previous work \cite{cachebleed2017,cache-audit-rigorous2017,caches-2019}, the memory access patterns generated by the scatter-gather algorithm are considered to be independent of secret information at the granularity of cache lines, thus these memory accesses should not be hardened.
While  \cite{caches-2019} flags memory accesses in scatter-gather as potentially vulnerable to cache attacks, it attributes these flags to false positives caused by analysis imprecision.}
\readded{Among previous analyses of RSA, while \cite{caches-2019} flags memory accesses in scatter-gather as potentially vulnerable to cache attacks, it attributes these flags to false positives caused by analysis imprecision.}
Analysis by  \cite{cache-audit-rigorous2017} concludes that the scatter-gather algorithm does not leak information to attackers capable of observing cache lines. 
This conclusion stems from its analysis of the core code (as presented in \Cref{lst:scatter-gather}) of the scatter-gather algorithm with assumed input parameters, rather than examining the entire \texttt{BN\_mod\_exp\_mont\_consttime} function.

A simplified code of \texttt{BN\_mod\_exp\_mont\_consttime} is presented in \Cref{lst:rsa} with secret-dependent variables marked with underline.
The variable \texttt{bits}, which is computed from the secret value \texttt{key}, represents the length of the big number corresponding to \texttt{key}.
Depending on the value of \texttt{bits}, \texttt{width} ranges from 1 to 6. 
After taint propagation and sanitization, \texttt{window} is assigned a taint vector where the least significant $7$ bits are \thigh and the others are \tzero.
Specifically, the seventh least significant bit of \texttt{window} is set to 1 only when \texttt{width} equals 6, and 0 otherwise.
This indicates that the value of the seventh bit depends on the secret, and our computation accurately reflects this dependency by setting the seventh bit to \thigh.
When \texttt{window} is passed as an argument to the \texttt{gather} function and the memory access \texttt{buf[k+i*window]} is performed, the seventh bit of \texttt{window} influences the cache line index. 
This implies that at the cache line granularity, the memory access pattern is also dependent on  secrets.
Consequently, the memory access in \texttt{gather}  also requires protection.

Alternatively, when setting \texttt{width} to a fixed value of $6$, our analysis of the \texttt{gather} function reveals that its memory access pattern at the cache line granularity remains constant, consistent with the findings of previous work \cite{cache-audit-rigorous2017}. 
This further demonstrates \LightSLH's capability to conduct rigorous analysis of cache-aware programs. While the precise analysis enabled by bit-level taint tracking does not reduce the overhead for RSA in OpenSSL 1.0.2f, it facilitates more accurate analysis and helps eliminate unnecessary hardening in other programs.




\section{Discussion}\label{sec:discussion}
\textbf{Security Scope of LightSLH.}
LightSLH hardens instructions vulnerable to Spectre v1, including memory access and branch instructions that may leak secret data through side-channel attacks \cite{spectre2019} and instructions that perform out-of-bounds memory stores \cite{spectre-overflow2018} during speculative execution.

While USLH \cite{uslh2023} addresses time-variable instruction leaks, LightSLH does not currently implement such protections because: (1) these specific gadgets are unlikely in real-world software as noted by \cite{uslh2023}, and (2) LightH is inherently flexible and can be extended to accommodate such scenarios by extending the $H()$ function to check \thigh labels on time-variable instructions based on the target architecture's specifications.

Regarding Spectre declassified \cite{declassified2023}, which concerns speculative leakage at declassification sites, LightSLH does not explicitly handle such cases. While these leaks can be mitigated by inserting hardening at declassification points (typically at cryptographic primitive return instructions), we opted not to incorporate declassification semantics as they are peripheral to our core methodology.


\textbf{Preservation.}
We implement \LightSLH as an LLVM post-optimization pass.
\redeleted{While security properties verified in LLVM IR may be compromised during the compilation process to binary code, we consider  hardening at LLVM IR level sufficient for ensuring security.
This is because the lowering from LLVM IR to binary code only introduces memory accesses with PC or RSP values combined with constants \cite{LLVM,serberus2024}, which consistently generate constant observations and do not cause out-of-bounds stores. 
Nevertheless, formal verification of security property preservation during compilation remains important, which can be built upon verified compilers like CompCert \cite{leroy2016compcert}, and is left as future work.}
\readded{While LightSLH's hardening at LLVM IR level do not suffer from the compiler-introduced memory access as described in \Cref{sec:chacha}, its protection may still be compromised by other compilation transformation such as dead code elimination and register allocation \cite{arranz2025preservation,van2025snip}. Proving preservation for compilers is both important and challenging, which goes beyond the scope of our research and is left as future work.}In addition, our methodology can also be implemented for binary programs. 
The decision to implement at the LLVM IR level is due to LLVM IR's single static-assignment form, which simplifies data flow analysis and facilitates code transformations.

\textbf{Protection for Non-Constant-Time Programs.}
While existing hardening tools \cite{serberus2024,typing2023,declassified2023} focus on protecting constant-time programs, we argue that non-constant-time programs also require protection against speculative execution attacks.
Many implementations intentionally deviate from strict constant-time requirements due to implementation complexity and performance considerations, accepting certain controlled information leaks.
For example, \openssl's RSA implementation deliberately leaks big number lengths as an acceptable trade-off.
The emergence of Spectre attacks raises a critical question: could these vulnerabilities amplify existing controlled leaks into more severe security breaches?
\redeleted{To address this concern, automated hardening mechanisms that enforce speculative non-interference ensure that speculative execution cannot reveal more information than what is already disclosed during sequential execution.
}
\readded{In our work, properties including SNI and SS, and formalization does not require constant-time execution in the sequential model. This means that even for non-constant-time code, LightSLH ensures the SNI property, guaranteeing that speculative execution cannot reveal more information than what is already disclosed during sequential execution.}


\readded{\textbf{Protection against Spectre RSB and STL.}}
\curdeleted{For Spectre RSB, although original attacks \cite{ret2spec2018} have been mitigated through hardware protections and microcode updates, 
subsequent researches \cite{barberis2022branch,wikner2022retbleed,wikner2024breaking} have uncovered additional vulnerabilities in the prediction mechanisms for return instructions, which can cause programs to execute instructions at arbitrary locations.
For these vulnerabilities, hardware-level protections \cite{intel:bhi,intel:ibrs,intel:ibpb,intel:rsb} remain the primary defense mechanism.} 
\added{For Spectre-RSB, recent studies~\cite{barberis2022branch,wikner2022retbleed,wikner2024breaking} have shown that, when the RSB underflows, the processor may fall back to alternative prediction mechanisms to generate the target address for a \texttt{ret} instruction.
While the initial RSB attacks~\cite{ret2spec2018} exploit overflows to redirect speculation to return addresses from other function calls, these newer attacks can trigger speculative execution at arbitrary code locations.
A commonly proposed mitigation is to ``stuff the RSB'' before executing a \texttt{ret} instruction by pre-filling it with safe return targets \cite{intel:bhi,wikner2024breaking}}.

\readded{Regarding Spectre STL (v4) \cite{horn2018speculative}, while SLH's mitigation for Spectre v1 is known to incur significant performance overhead when disabling STL predictions via SSBD \cite{serberus2024}, our framework could potentially model STL's prediction mechanisms in the future to provide software-based defenses. 
Specifically, this could involve analyzing load instructions by performing a lub operation between the values from preceding stores and the loaded value from memory to determine the possible values accessed under the Spectre v4 prediction model. 
However, this extension lies beyond the scope of our current work and is left as future research.}



\section{Related Work}\label{sec:related-work}
\textbf{Speculative Vulnerability Detection.}
Existing detections tools base on symbolic execution \cite{pitchfork2020,spectector2020,hunter2021,automatic-detection2022},
type system \cite{typing2023}, model checking \cite{cats2022} and abstract interpretation \cite{ai-spec2019}.
These tools, though sound, can only determine whether a program is secure, rather than identifying all potentially vulnerable instructions.
Other methods include dynamic analysis \cite{spectaint2021, kasper2022, SpecFuzz2020} and gadget scanning \cite{oo72019}.
In the absence of rigorous verification, their detection methods are prone to both false positives and false negatives.

\textbf{Software Mitigation.}
Various approaches have been proposed to optimize the performance overhead of Spectre attack mitigations.
Blade \cite{blade2021} employs a type system to classify expressions as transient or stable, preventing speculative leaks by determining minimal protection placements that block information flow from transient values to transmitters.
However, Blade is proven incomplete \cite{declassiflow2023}, as it only protects speculatively-accessed values while failing to prevent speculative leaks of sequentially-accessed values \cite{declassiflow2023}.

LLSCT \cite{serberus2024} mitigates Spectre attacks by eliminating dependencies from four classes of taint primitives. However, it requires programs to be \textit{static constant-time}, limiting its applicability (e.g., for RSA), and its security guarantees may not hold after compilation to binary code \Cref{sec:chacha}.


SelSLH \cite{declassified2023} uses a type system (L/H types) to identify expressions dependent on secrets and selectively protects load instructions of type L.
While efficient for array accesses, SelSLH has limited support for pointer operations and, like LLSCT \cite{serberus2024}, only works with constant-time programs.

Declassiflow \cite{declassiflow2023} avoids incurring multiple overhead penalties for a single repeated hardening by carefully relocating protection to its ``knowledge frontier'', where the protected value will inevitably be leaked.
LightSLH is orthogonal to Declassiflow, thus integrating Declassiflow's strategy in the $H()$ function of LightH may result in lower overhead, for which we leave as future work.

Jasmin \cite{jasmin,jasmin-ct,jasmin-spectre,typing2023} employs type systems to ensure speculative constant-time security in cryptographic implementations. However, it is limited to the Jasmin language and cannot automatically repair type-checking failures.

\section{Conclusion}

In this paper, we propose LightH, a framework designed to apply targeted hardening to protect programs against Spectre v1 vulnerabilities.
LightH is flexible, allowing for the modeling of different attacker scenarios and the integration of various hardening strategies.
Based on SLH hardening strategy and the cache-based attacker model, we implement LightH as LightSLH, which hardens program with provable security.
Our results demonstrate the efficiency of LightSLH's analysis and its low-overhead protection.
Through our experiments, we discover two previously unknown security issues, highlighting the necessity of applying accurate protections to certain instructions.


\section{Ethics considerations}
This work adheres to ethical research practices, with all resources and tools utilized in this study sourced from open-source repositories. Specifically, the code framework, cryptographic libraries, and benchmarks employed in our experiments were obtained from publicly available and widely recognized open-source repositories. The use of these resources complies with their respective licenses and ensures reproducibility.

Our research focuses on defensive measures and aims to enhance security by identifying and mitigating potential vulnerabilities. In particular, we uncovered two previously unreported instances of information leakage through side channel. While these instances do not currently constitute a complete exploit, they highlight areas where further investigation and proactive defenses are warranted.

As a defensive work, our approach emphasizes a rigorous security model and conservative protective measures. This is justified by the goal of preemptively addressing potential vulnerabilities before they evolve into more severe threats. By sharing these findings, we aim to contribute to the broader security community’s efforts in strengthening cryptographic software against emerging attack vectors.

\section{Open Science}
The artifact for this work is publicly available at https://doi.org/10.5281/zenodo.15569395. It includes the following components:

\textbf{Tools. }
The LightFence and LightSLH tools developed in this research are included in the Zenodo repository. Comprehensive documentation is provided to facilitate ease of use and integration into other projects.

\textbf{Experimental Data. }
All experimental data, including benchmarks and results, are shared in a well-structured and accessible format, enabling reproducibility and further validation by the research community.

\textbf{Reproducibility. }
The repository contains detailed, step-by-step instructions for replicating the experiments presented in this paper. This ensures that researchers can independently validate our findings and build upon this work.

We are committed to maintaining these resources for long-term accessibility and welcome feedback and collaboration from the community to further advance this research.

\section*{Acknowledgments}
We would like to thank our shepherd and reviewers for their helpful comments. The research is supported by the National Natural Science Foundation of China under Grant No. 62372422, No.61972369, No.62272434, and the Fundamental Research Funds for the Central Universities (Grant Number: WK2150110024).



\bibliographystyle{plain}
\bibliography{bib/cite}
\onecolumn
\appendix

\section{Details of Implementation}\label{sec:impl}


\subsubsection*{Interprocedural Analysis}\label{sec:impl-inter}
Our interprocedural analysis performs in a context-sensitive way.
For non-recursive calls, we set the callee's entry states (including memory state and states of arguments) based on the current context.
The analysis is then performed on the callee function, followed by collecting its possible leaving states (including memory and return value).
The lub of these states is returned to the caller.
Recursive calls are treated as control flow.
Specifically, we update the callee's entry state with the lub of its current entry state and the state at the call site.
Then, the entry block is marked pending to force the analyzer to reprocess it with the updated state.

To further improve efficiency, we record a summary of each function's analysis, including  entry and leaving states.
When encountering subsequent calls to the same function with the same entry state, we can skip re-analyzing it and directly use the recorded return state.
This significantly reduces the analysis time. 
For example, applying this method to the analysis of RSA in OpenSSL eliminates 90\% of redundant re-evaluations.

\subsubsection*{Convergency}
The taint domain is evidently finite, and given the upper and lower bounds of all 64-bit integers, the value domain is finite as well. 
The finiteness of these two lattices ensures the convergence of fixpoint computation.
However, convergence in loops can be slow due to the size of the value domain.
For instance, for a loop like \texttt{for (i=0; i<16; i++)\{...\}}, computing a fixpoint requires analyzing the loop 16 times before the abstract value of the loop variable \texttt{i} converges.
To improve analysis efficiency, we directly use the loop's boundary conditions to determine the abstract value of the loop variable in \textsc{TransSeq}, while setting its abstract value to \topof{\VD} in \textsc{TransSpec}.
In the previous example, \texttt{i} has an abstract value of $(\emptysym,[0,15])$ when  using \textsc{TransSeq}, and to \topof{\VD} when using \textsc{TransSpec}.

\readded{\subsubsection*{Memory Representation}
In contrast to theoretical models, real-world memory access often deals with variable-sized data. 
A single memory access might span multiple addresses.
For instance, in 64-bit systems, pointers (typically 8 bytes) and 64-bit integers are stored in 8 consecutive memory addresses.
However, byte-granular access remains possible. 
For example, the AES implementation within OpenSSL entails reading a 64-bit integer in a bytewise manner.
This raises the question: how should we handle byte-granular access for memory units that should be treated as a whole?
}

\readded{In our implementation, we track both the size and an alignment attribute for memory regions and pointers. 
When an aligned pointer accesses an aligned memory region with matching sizes, load and store operations retrieve the corresponding values from the memory region. 
For unaligned regions, we only record the highest taint label, denoted as $t$. 
A value loaded from an unaligned region or accessed by an unaligned pointer is assigned $(\topof{\VD},\vec{t})$, while write operations to unaligned regions only update $t$.
}

\readded{An additional challenge  arising in memory operations is the computational cost of analysis.
For instance, processing a load instruction from array \texttt{a}, where the address has an abstract value of $(\texttt{a},[0,15])$, requires 16 lub computations, which is inefficient.
To mitigate it, we maintain and update a single lub for all elements in an array. 
This optimization ensures that each memory access to an array involves just one lub computation.
}
\readded{\subsubsection*{Implementation-Model Gap}}

\readded{
While our formalization is based on \muasm and our implementation on LLVM IR, notable differences emerge due to the diverse instruction set in LLVM IR. 
LLVM IR introduces additional instructions, such as cast and vector operations, which require special handling. 
To address these, we record the size of each operand and, if the operand is a pointer, its alignment attribute, employing conservative methods to compute their states.
For example, when casting an `i8*' pointer to a pointer of a struct type, we disable the alignment attribute, set the value domain of the casted pointer to $\top$, and assign its taint domain to the highest label within the struct region.
For vector operations, we conservatively record the least upper bound (lub) of all elements in the vector. 
This approach minimizes analysis overhead while ensuring the soundness of the analysis.}

\section{Operator Semantics For Taint Tracking}\label{app:sec:taint-operation}
We set the result of operations to $\vec{\tbot}$ when \tbot appears in any operand, thus when introducing the rules in this section, we will set aside situations where the taint label evaluates to \tbot.

Let $a=\simplevector{a}{n-1}{n-2}{0}$ and $b=\simplevector{b}{n-1}{n-2}{0}$ be two taint label vectors with $n$ elements. 
We denote by $\bopondomain{\op}{\prolattice{T}{n}}{a}{b}=\simplevector{r}{n-1}{n-2}{0}$ the operations on lattice \prolattice{T}{n}, where $\op \in \uop \cup \bop$.
For a unary operator, the second operand $b$ can be omitted.
With a slight abuse of notation, we use \tzero, \tone, $0$, and $1$ to represent both taint labels and the corresponding values.

Let \counttupleup{t}{(x_1,\cdots,x_m)} denote the count of $x_i$ such that $t\sqsubseteq x_i$.
Let $\min_{t\sqsubseteq i}(a)$ denote the least index in a taint vector $a$ such that $t\sqsubseteq a_i$. 
In particular, $\min_{t\sqsubseteq i}(a)=\text{len}(a)$ if there is no index $i$ such that $t\sqsubseteq a_i$.
For a taint vector $a$, $\num(a)$ is a concrete value defined as $\sum_{k=0}^{\min_{\tlow\sqsubseteq i}(a)-1} 2^{k}a_k$. 
For a concrete number $n$, $n_i$ denotes the digit at the $i$-th position in the binary representation of $n$, i.e., $\overline{n_kn_{k-1}\cdots n_0}$.

The operator semantics are given below:

\begin{itemize}
    \item $\op=\notop$.
    $$
    r_i = 
      \begin{cases}
          r_i, & \text{if} \ a_i \in \{L,H\}\\
          \tone,& \text{if}\ a_i = \tzero\\
          \tzero,& \text{if}\ a_i = \tone
      \end{cases}
    $$
    \item $\op=\andop$.
    $$
    r_i = 
      \begin{cases}
          \tzero & \text{if} \ a_i =\tzero\ \text{or}\ b_i = \tzero\\
          \tone & \text{else if} \ a_i =\tone\ \text{and}\ b_i = \tone\\
          a_i \sqcup b_i &\text{otherwise}
      \end{cases}
    $$
    \item $\op=\orop$.
    $$
    r_i = 
      \begin{cases}
          \tone & \text{if} \ a_i =\tone\ \text{or}\ b_i = \tone\\
          \tzero & \text{else if} \ a_i =\tzero\ \text{and}\ b_i = \tzero\\
          a_i \sqcup b_i &\text{otherwise}
      \end{cases}
    $$
    \item $\op=\xorop$.
    $$
    r_i = 
      \begin{cases}
          a_i\ \xorop\ b_i & \text{if $a_i$ and $b_i$ are both concrete labels}\\
          a_i \sqcup b_i &\text{otherwise}
      \end{cases}
    $$
    \item $\op=\addop$.
    $$
    c_i = 
    \begin{cases}
        \tzero & \text{if\ \ $\counttupleup{\tone}{(a_{i-1},b_{i-1},c_{i-1})} \le 1$ or $i=0$} \\
        \thigh  & \text{else if\ \ $\thigh\in\{a_{i-1},b_{i-1},c_{i-1}\}$} \\
        \tlow & \text{else if\ \ $\tlow\in\{a_{i-1},b_{i-1},c_{i-1}\}$} \\
        \tone & \text{otherwise}
    \end{cases}
$$
$$
    r_i =
    \begin{cases}
        \thigh  & \text{if\ \ $\thigh\in\{a_i,b_i,c_{i}\}$} \\
        \tlow & \text{else if\ \ $\tlow\in\{a_i,b_i,c_{i}\}$} \\
        (a_i+b_i+c_{i})(\text{mod }2) & \text{otherwise}
    \end{cases}
$$

    where $c_i$ denotes the carry bit.
    \item $\op=\minusop$.
    $$
        c_{i+1} = 
        \begin{cases}
            \tzero & \text{if\ \ ($\counttupleup{\tone}{(b_i,c_{i})} \le 1$ and $a_i = \tone$) or $i=0$} \\
            \thigh  & \text{else if\ \ $\thigh\in\{a_i,b_i,c_{i}\}$} \\
            \tlow & \text{else if\ \ $\tlow\in\{a_i,b_i,c_{i}\}$} \\
            \tone & \text{otherwise}
        \end{cases}
    $$
    $$
        r_i =
        \begin{cases}
            \thigh  & \text{if\ \ $\thigh\in\{a_i,b_i,c_{i}\}$} \\
            \tlow & \text{else if\ \ $\tlow\in\{a_i,b_i,c_{i}\}$} \\
            (a_i-b_i-c_{i})(\text{mod }2) & \text{otherwise}
        \end{cases}
    $$
    where $c_i$ denotes the carry bit.
    \item $\op = \mulop$.
    $$
        r_i =
        \begin{cases}
            \thigh  & \text{if $i\ge \min(\minintv{\thigh}(a)+\minintv{\tone}(b), \minintv{\thigh}(b)+\minintv{\tone}(a))$ }\\
            \tlow  & \text{else if $i\ge \min(\minintv{\tlow}(a)+\minintv{\tone}(b), \minintv{\tlow}(b)+\minintv{\tone}(a))$ }\\
            (\num(a)\times\num(b))_i  & \text{otherwise}\\
        \end{cases}
    $$    
    \item $\op = \divop$.
    $$
        r_i = \begin{cases}
            \thigh & \text{if $\thigh \in a$ or $\thigh \in b$}\\
            \tlow  & \text{else if $\tlow \in a$ or $\tlow \in b$}\\
            (\num(a)\div \num(b))_i  & \text{otherwise}\\        
        \end{cases}
    $$    
    \item $\op = \modop$.
    $$
        r_i = \begin{cases}
            \thigh  &\text{if $\thigh \in a$ or $\thigh \in b$}\\
            \tlow  &\text{else if $\tlow \in a$ or $\tlow \in b$}\\
            (\num(a)\% \num(b))_i  & \text{otherwise}\\        
        \end{cases}
    $$    
    \item $\op = \lshiftop$.
    \begin{itemize}
        \item If $\tlow,\thigh \notin b$,
        $$
            r_i = \begin{cases}
                a_{i+\num(b)} & \text{if $i+\num(b)<n$}\\
                \tzero & \text{otherwise}
            \end{cases}
        $$
        \item  If $\tlow \in b$ or $\thigh \in b$,
        $$
            r_i = \begin{cases}
                \thigh & \text{if $i\ge \num(b)+\minintv{\thigh}(a)$} \\
                \tlow & \text{else if $i\ge \num(b)+\minintv{\tone}(a)$} \\
                \tzero & \text{otherwise}
            \end{cases}
        $$
    \end{itemize}
    \item $\op = \rshlop$.\\
    Let $\rev{a}=\simplevector{a}{0}{1}{n-1}$,
    \bopondomain{\rshlop}{\prolattice{T}{n}}{a}{b} = \rev{\bopondomain{\lshiftop}{\prolattice{T}{n}}{\rev{a}}{b}}.
    \item $\op = \rshaop$.
    Let $\rev{r}=\simplevector{r}{0}{1}{n-1}=\simplevector{r'}{n-1}{n-2}{0}$, then $r'_i$ is given by following rules.
    \begin{itemize}
        \item If $\tlow,\thigh \notin b$,
        $$
            r'_i = \begin{cases}
                \rev{a}_{i+\num(b)} & \text{if $i+\num(b)<n$}\\
                a_{n-1} & \text{otherwise}
            \end{cases}
        $$
        \item  If $\tlow \in b$ or $\thigh \in b$,
        $$
            r'_i = \begin{cases}
                \thigh & \text{if $i\ge \num(b)+\minintv{\thigh}(\rev{a})$} \\
                \tlow & \text{else if $i\ge \num(b)+\minintv{\uopondomain{\notop}{\lattice{T}}{a_{n-1}}}(\rev{a})$} \\
                a_{n-1} & \text{otherwise}
            \end{cases}
        $$
    \end{itemize}
\end{itemize}

Given the rules above, we have
\begin{restatable}{theorem}{welldefined}\label{thm:well-defined}
    For any operator $\op$ in \muasm{} with  \bopondomain{\op}{\prolattice{T}{n}}{a}{b} defined in our work, \opondomain{\op}{\prolattice{T}{n}} is a well-defined operator on \prolattice{T}{n}.
\end{restatable}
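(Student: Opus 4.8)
The plan is to prove \Cref{thm:well-defined} by a uniform structural argument: first reduce the statement to a per-operator check, then handle the operators in two groups — bitwise/shift operators, which act pointwise (or with a fixed permutation of bit positions), and the arithmetic operators, which require reasoning about carry propagation. Since \tbot is absorbing (any operand bit equal to \tbot forces the whole result to $\vec{\tbot}$, and $\vec{\tbot}$ is legal for every value and trivially satisfies the non-interference clause), I would dispatch that case once at the start and then assume all taint labels lie in $\{\tzero,\tone,\tlow,\thigh\}$.

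For the bitwise operators (\notop, \andop, \orop, \xorop) and the shifts (\lshiftop, \rshlop, \rshaop by a concrete amount), the key observation is that bit $r_i$ of the result depends only on a fixed, small set of operand bits at known positions (for \notop, \andop, \orop, \xorop it is $a_i$ and $b_i$; for shifts it is a single shifted bit, possibly the sign bit for \rshaop). So both clauses of \Cref{def:well-defined} follow by a finite case analysis over the value of those operand labels. For clause~1, I would check that the case split in the operator rule (e.g., \tzero when some operand bit is \tzero; the natural Boolean value when both are concrete; $\sqcup$ otherwise) always yields a label legal for the actual bit value — i.e., it outputs a concrete label only when the bit is in fact forced. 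For clause~2, I would note that replacing operand bits only at \thigh positions changes $a_i$ or $b_i$ only when the corresponding label is \thigh; in every branch of the rule where $r_i$ is concrete or \tlow, the rule's guard guarantees the contributing operand bits are non-\thigh, hence unchanged, hence $r_i$ is unchanged — exactly the required non-interference property. The shift case for \rshlop/\rshaop additionally needs that shifted-in bits are \tzero (resp. a copy of the sign label), which is immediate from the definitions.

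The genuine work is in \addop (and, by reduction, \minusop, which can be written via \notop and $+1$, or treated directly; \mulop, \divop, \modop follow by similar but heavier carry/bound arguments, or are coarsely over-approximated to $\vec{\thigh}$ above the least significant concrete-forced prefix — I would follow whatever the appendix rules actually specify). For \addop I would prove by induction on $i$ from the least significant bit the joint invariant: (a) $c_i$ (as defined by the rule) is legal for the true carry into position $i$ for \emph{every} pair of instances $v_1\vdash t_1$, $v_2\vdash t_2$, and moreover $c_i$ is non-\thigh whenever all of $a_{i-1},b_{i-1},c_{i-1}$ are non-\thigh; (b) the same for $r_i$. The base case $i=0$ is $c_0=\tzero$, which is correct since there is no carry in. For the inductive step, the crucial arithmetic fact is the one already flagged in the text: if at most one of $a_{i-1},b_{i-1},c_{i-1}$ satisfies $\tone\sqsubseteq\,\cdot$ (i.e., is \tone, \tlow, or \thigh), then $a_{i-1}+b_{i-1}+c_{i-1}\le 1$ regardless of the instance, so the outgoing carry is genuinely $0$ and \tzero is legal; otherwise the rule outputs \thigh, \tlow, or \tone according to the $\sqcup$ of the contributing labels, which is legal because it is concrete only when all three are concrete. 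Clause~2 for \addop then falls out of the invariant: if $r_i$ (or $c_i$) is non-\thigh, the rule's guards force the relevant lower-position labels to be non-\thigh, so by the induction hypothesis those bits and carries are identical across the two instance-families, hence the result bit is identical.

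The main obstacle I anticipate is making the carry induction for \addop fully rigorous — in particular, stating the induction hypothesis strongly enough that clause~2 goes through (one must carry along the non-interference claim about $c_i$, not just its legality, since a \thigh carry can only be produced by a \thigh operand bit somewhere below). Everything else is a bounded case analysis. I would also need to confirm the appendix's rules for \minusop, \mulop, \divop, \modop and the shift-by-non-concrete case are each either reducible to \addop/\notop or conservative enough (collapsing to \thigh/\tbot as appropriate) that clauses~1 and~2 hold by inspection; given the design philosophy of the paper, I expect those rules are deliberately coarse precisely so that well-definedness is routine there, concentrating the real proof effort in \addop.
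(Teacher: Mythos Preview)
Your proposal is essentially correct and follows the same overall architecture as the paper's proof: dispatch \tbot, treat the bitwise and shift operators by pointwise case analysis, and concentrate the real work in an induction on the carry for \addop (with \minusop handled analogously). The paper is terser than you are --- it calls \notop, \andop, \orop, \xorop, \divop, \modop ``straightforward'' and gives a short carry induction for \addop, a product-expansion argument for \mulop, and a shifted-zero argument for \lshiftop --- but the ideas line up.

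Two small points. First, your clause~2 invariant for \addop is stated a bit too strongly: you claim that if $c_i$ is non-\thigh then ``the rule's guards force the relevant lower-position labels to be non-\thigh,'' but that is false --- $c_i$ can be \tzero via the $\counttupleup{\tone}{}\le 1$ branch even when, say, $a_{i-1}=\thigh$ (with $b_{i-1}=c_{i-1}=\tzero$). The conclusion you need still holds in that case (the actual carry is forced to $0$ regardless of the \thigh bit), so the fix is just to carry the weaker invariant ``the actual carry at position $i$ is identical across the two instance pairs'' rather than a syntactic claim about labels. You already flagged this step as the delicate one, so this is exactly the refinement you anticipated. Second, the paper handles clause~2 by a slightly different device than your direct invariant: it observes that it suffices to show, for any refinement $t_j^{\tlow}$ of $t_j$ obtained by replacing each \tlow with a concrete label, that all instances of $t_1^{\tlow},t_2^{\tlow}$ land in a common $r^{\tlow}$ --- which reduces clause~2 back to a clause~1-style legality check. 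Your direct approach and the paper's reduction both work; the paper's is shorter, yours is more explicit about what is being preserved.
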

\begin{proof}
    For the first requirement in the definition of well-defined operators, we need to prove that for any $ t_1 , t_2 \in \prolattice{T}{n}$, and \isinstance{v_1}{t_1}, \isinstance{v_2}{t_2}, we have $\isinstance{(v_1\op v_2)_i}{r_i}$ for all $0\le i\le n-1$, where $r =\bopondomain{\op}{\prolattice{T}{n}}{t_1}{t_2} $. \hfill\mbox{(*)}

    \begin{itemize}
        \item  For $\op=\notop,\andop,\orop,\xorop,\divop\text{ and }\modop$, the proof is straightforward.
        \item For $\op=\addop$, let $s_i$ denote the carry bit, we prove $\isinstance{s_i}{c_i}$ and (*) holds. \hfill\mbox{(**)}
        
        It is obvious that (**) holds for $i=0$. 
        Assume (**) holds for $i=k(k\ge 0)$. 
        For $i=k+1$, if $c_{k+1}=0$, then $\counttupleup{\tone}{(v_{1,k},v_{2,k},c_{k})} \le 1$ holds. 
        Considering the inductive hypothesis, we have that at most one of $v_{1,k}$, $v_{2,k}$ and $s_k$ equal to 1, which indicates that $s_{k+1}=0$, thus \isinstance{s_{k+1}}{c_{k+1}} holds.
        If $c_{k+1}=1$, there is no \thigh and \tlow in $v_{1,k}$, $v_{2,k}$ and $s_k$, which indicates that at least two of them equal to 1. So $s_{k+1}$ equals to 1 and \isinstance{s_{k+1}}{c_{k+1}} holds.
        Given \isinstance{s_{k+1}}{c_{k+1}}, it is obvious that \isinstance{(v_1\op v_2)_{k+1}}{r_{k+1}}.

        Therefore, by mathematical induction, (**) holds.

        The case for $\op=\minusop$ is similar to $\op=\addop$.
        \item For $\op=\mulop$,  
        $$v_1\cdot v_2 = (\sum_{i=0}^{n-1} 2^i v_{1,i})(\sum_{i=0}^{n-1} 2^i v_{2,i}) = \sum_{i=0}^{2n-2}2^i(\sum_{j=0}^{i}v_{1,j}v_{2,i-j})$$
        So the $i$-th bit of $v_1\cdot v_2$ is determined by only $v_{1,j}\cdot v_{2,i-j}(0\le j\le i)$.
        Consequently, for $0\le i< s$, where $s=\min(\minintv{\tlow}(v_1)+\minintv{\tone}(v_2), \minintv{\tlow}(v_2)+\minintv{\tone}(v_1))$, \isinstance{(v_1\op v_2)_i}{r_i} follows from  \isinstance{v_1}{t_1} and \isinstance{v_2}{t_2}.
        \item For $\op=\lshiftop$, if $b$ does not contain \thigh and \tlow, the number of bits to shift left is concrete and thus the proof is straightforward.
        Otherwise, the least $s$ bits of $v_1\op v_2$ equal to $0$, where $s = \num(v_2)+\minintv{\tone}(v_1)$. 
        This is because that $v_1$ is left-shifted by at least $\num(v_2)$ bits, and the least $\minintv{\tone}(v_1)$ bits of $v_1$ is $0$.
        Thus $\isinstance{(v_1\op v_2)_i}{r_i}$ holds for all $0\le i\le n-1$.

        The case for $\op=\rshlop$ and $\rshaop$ is similar to $\op=\lshiftop$.

    \end{itemize}

    For the second requirement, we only need to prove that given any $ t_1, t_2, t_1^\tlow, t_2^\tlow \in \prolattice{T}{n}$, there exists an $r^\tlow$, such that for any \isinstance{v_1}{t_1^\tlow}, \isinstance{v_2}{t_2^\tlow}, $\isinstance{(v_1\op v_2)}{r^\tlow}$ holds for all $0\le i\le n-1$, where $t^\tlow$ can be obtained by replacing all \tlow labels in $t$ with concrete labels.

    The proof of the second requirement can be obtained using a similar approach as the proof of the first requirement.
\end{proof}
\section{Operational Semantics}\label{app:semantics}

\begin{figure*}
    \begin{center}
        \scriptsize
        \typerule{Asgn}{
            p(\rho(\pc)) = \passign{x}{e} & \rho' = \updatemap{\rho}{x}{\exprEval{e}{\rho}}
        }{
            (p, \conf{\rho, f}) \eval{\DirStep}{\Obsnone} (p, \conf{\updatemap{\rho'}{\pc}{\rho(\pc)+1}, f})
        }{semantic:assign}
        \typerule{Jmp}{  
            p(\rho(\pc)) = \pjmp{l} & \rho' = \updatemap{\rho}{\pc}{l}
        }{
            (p, \conf{\rho, f}) \eval{\DirStep}{\Obsnone} (p, \conf{\rho', f})
        }{semantic:jump}
        \typerule{Alloc}{
            p(\rho(\pc)) = \palloc{x}{n} \\ \rho' = \rho[x\mapsto\rho(\mem), \mem \mapsto \rho(\mem)+n]
        }{
            (p, \conf{\rho, f}) \eval{\DirStep}{\Obsnone} (p, \conf{\rho'[\pc \mapsto \rho(\pc)+1], f})
        }{semantic:alloc}
        \typerule{Fen}{
             p(\rho(\pc)) = \pfence & 
             l = {\begin{cases}
                \bot & \text{if $f = \top$}\\
                \rho(\pc) + 1 & \text{if $f = \bot$}
             \end{cases}}
         }{
             (p, \conf{\rho, f}) \eval{\DirStep}{\Obsnone} (p, \conf{\updatemap{\rho}{\pc}{l}, f})
         }{semantic:fence}
         \typerule{Ld}{
             p(\rho(\pc)) = \pload{x}{e} & n = \exprEval{e}{\rho} \\
            \rho' = \rho[x\mapsto \rho(n), \pc\mapsto\rho(\pc)+1]
         }{
             (p, \conf{\rho, f}) \eval{\DirStep}{\Obsload{n}{a}{b}} (p, \conf{\rho', f})
         }{semantic:load}
         \typerule{St}{
             p(\rho(\pc)) = \pstore{x}{e} & n = \exprEval{e}{\rho} \\  
            \rho' = \rho[n\mapsto \rho(x), \pc\mapsto\rho(\pc)+1]
         }{
             (p, \conf{\rho, f}) \eval{\DirStep}{\Obsstore{n}{a}{b}} (p, \conf{\rho', f})
         }{semantic:store}
         \typerule{CndAsgn}{
             p(\rho(\pc)) = \pcondassign{x}{e}{e'} \\
             \rho' = {
                \begin{cases}
                    \rho & \text{if \exprEval{e'}{\rho} = 0}\\
                    \updatemap{\rho}{x}{\exprEval{e}{\rho} } & \text{if \exprEval{e'}{\rho} $\neq$ 0}
                \end{cases}
             }
         }{
             (p, \conf{\rho, f}) \eval{\DirStep}{\Obsnone} (p, \conf{\updatemap{\rho'}{\pc}{\rho(\pc)+1}, f})
         }{semantic:conditional-assign}
         \typerule{Br-Step}{
            p(\rho(\pc)) = \pbranch{x}{l} \\ n = \rho(x) &
            l' = {\begin{cases}
                \rho(\pc) + 1 & \text{if $n \neq 0$}\\
                l & \text{if $n = 0$}
            \end{cases}}
        }{
            (p, \conf{\rho, f}) \eval{\DirStep}{\Obsbranch{n}} (p, \conf{\updatemap{\rho}{\pc}{l'}, f})
        }{semantic:branch-step}
        \typerule{Br-Force}{
            p(\rho(\pc)) = \pbranch{x}{l} \\ n = \rho(x) &
            l' = {\begin{cases}
                \rho(\pc) + 1 & \text{if $n= 0$}\\
                l & \text{if $n\neq 0$}
            \end{cases}}
        }{
            (p, \conf{\rho, f}) \eval{\DirForce}{\Obsbranch{n}} (p, \conf{\updatemap{\rho}{\pc}{l'}, \top})
        }{semantic:branch-force}
        \vspace{-1em}
        \caption{Speculative Semantics of \muasm.}
        \label{fig:semantics}
    \end{center}
\end{figure*}    

The operational semantics of \muasm{} is given in \Cref{fig:semantics}.

\section{Speculative Semantics with Taint Tracking}\label{app:sec::full-semantics}
The speculative semantics with taint tracking mechanism is presented in \Cref{fig:tsemantics}.
Notably, $\pc$ and $\mem$ are labeled as $\vec{\tlow}$.
\vspace{2em}
\begin{figure*}[!h]
    \begin{center}
        \small
        \typerule{Asgn}{
            p(\rho(\pc)) = \passign{x}{e} 
            \\ \mu' = \updatemap{\mu}{x}{\exprEval{e}{\mu}}
            \\ \rho' = \rho[x\mapsto\exprEval{e}{\rho},\pc\mapsto\rho(\pc)+1]
        }{
            (p, \conf{\rho, \mu, f}) \eval{\DirStep}{\Obsnone:\Obsnone} (p, \conf{\rho', \mu', f})
        }{app:tsemantic:assign}
         \typerule{Ld}{
             p(\rho(\pc)) = \pload{x}{e} & n = \exprEval{e}{\rho} & t = \exprEval{e}{\mu} \\ 
            t' = {\begin{cases}
            \vec{\thigh} & \text{if $\thigh \in t$}\\
            \mu(n) &\text{if $\thigh \notin t$}
            \end{cases}} & {\begin{aligned}
                \rho'=\rho[x\mapsto \rho(n),\pc\mapsto\rho(\pc)+1] \\
                \mu'=\updatemap{\mu}{x}{t' }
            \end{aligned}}
         }{
             (p, \conf{\rho, \mu, f}) \eval{\DirStep}{\Obsload{n}{a}{b}:t_{[a,b]}} (p, \conf{\rho',\mu', f})
         }{app:tsemantic:load}
         \typerule{St}{
             p(\rho(\pc)) = \pstore{x}{e} & n = \exprEval{e}{\rho} & t = \exprEval{e}{\mu} \\ 
            t' = {\begin{cases}
            \vec{\thigh} & \text{if $\thigh \in t$}\\
            \mu(x)      &\text{if $\thigh \notin t$}
            \end{cases}} & {\begin{aligned}
                \rho'=\rho[n\mapsto \rho(x),\pc\mapsto\rho(\pc)+1] \\
                \mu'=\updatemap{\mu}{n}{t' }
            \end{aligned}}
         }{
             (p, \conf{\rho, \mu, f}) \eval{\DirStep}{\Obsstore{n}{a}{b}:t_{[a,b]}} (p, \conf{\rho',\mu', f})
         }{app:tsemantic:store}
        \typerule{Fen}{
             p(\rho(\pc)) = \pfence \\ 
             l' = {\begin{cases}
                \bot & \text{if $f = \top$}\\
                \rho(\pc) + 1 & \text{if $f = \bot$}
             \end{cases}}
         }{
             (p, \conf{\rho,\mu, f}) \eval{\DirStep}{\Obsnone} (p, \conf{\updatemap{\rho}{\pc}{l'},\mu, f})
         }{app:tsemantic:fence}
         \typerule{CondAsgn}{
             p(\rho(\pc)) = \pcondassign{x}{e}{e'} 
             \\ n = \exprEval{e}{\rho} 
             & n' = \exprEval{e'}{\rho} 
             & t = \exprEval{e}{\mu} 
             & t' = \exprEval{e'}{\mu} \\
             \rho' = {
                \begin{cases}
                    \rho& \text{if $n' = 0$}\\
                    \updatemap{\rho}{x}{n} & \text{if $n'\neq$ 0}
                \end{cases}
             }
             & \mu'={
                \begin{cases}
                    \updatemap{\mu}{x}{\vec{\thigh}} & \text{if $\thigh \in t'$}\\
                    \mu & \text{else if $n' = 0$} \\
                    \updatemap{\mu}{x}{t} & \text{otherwise }
                \end{cases}
             }
         }{
             (p, \conf{\rho,\mu, f}) \eval{\DirStep}{\Obsnone} (p, \conf{\updatemap{\rho'}{\pc}{\rho(\pc)+1}, \mu',f})
         }{app:tsemantic:conditional-assign}
         \typerule{Jmp}{  
            p(\rho(\pc)) = \pjmp{l} 
            & \rho' = \updatemap{\rho}{\pc}{l}
        }{
            (p, \conf{\rho,\mu,f}) \eval{\DirStep}{\Obsnone} (p, \conf{\rho',\mu,f})
        }{app:tsemantic:jump}
         \typerule{Br-Step}{
            p(\rho(\pc)) = \pbranch{x}{l}  
            & n = \rho(x)
            & t = \mu(x)
            \\ l' = {\begin{cases}
                \rho(\pc) + 1 & \text{if $n \neq 0$}\\
                l & \text{if $n = 0$}
            \end{cases}}
        }{
            (p,\conf{\rho,\mu, f}) \eval{\DirStep}{\Obsbranch{n}:t} (p, \conf{\updatemap{\rho}{\pc}{l'},\mu, f})
        }{app:tsemantic:branch-step}
        \typerule{Br-Force}{
            p(\rho(\pc)) = \pbranch{x}{l} 
            & n = \rho(x)
            & t = \mu(x) 
            \\ l' = {\begin{cases}
                \rho(\pc) + 1 & \text{if $n= 0$}\\
                l & \text{if $n\neq 0$}
            \end{cases}}
        }{
            (p, \conf{\rho,\mu, f}) \eval{\DirForce}{\Obsbranch{n}:t} (p, \conf{\updatemap{\rho}{\pc}{l'}, \mu,\top})
        }{app:tsemantic:branch-force}
        \typerule{Alloc}{
            p(\rho(\pc)) = \palloc{x}{n} 
            & \rho' = \rho[x\mapsto\rho(\mem), \mem \mapsto \rho(\mem)+n]
            & \mu' = \mu[x\mapsto \vec{\tlow}]
        }{
            (p, \conf{\rho,\mu, f}) \eval{\DirStep}{\Obsnone} (p, \conf{\rho'[\pc \mapsto \rho(\pc)+1], \mu', f})
        }{app:tsemantic:alloc}
        \caption{Speculative Semantics of \muasm.}
        \label{fig:tsemantics}
    \end{center}
\end{figure*}    
\section{Reduce Speculative Non-Interference to Speculative Safety}\label{app:sec:sni-ss}
In this section we prove \Cref{the:sni-ss}.

The following lemma establishes a connection between taint tracking and non-interference.
It captures the property that the value which corresponds to a non-\thigh label is determined only by its preceding trace, and the initial values of registers and memory addresses in $P$. 

\begin{lemma}\label{lemma:eval-value}
    For an expression $e$, let $X$ be the set of all registers that appear in $e$. 
    For any pairs of value mapping $\rho$ and $\rho'$ such that $\rho(x)=\rho'(x)$ holds for every $x\in X$, then we have $\exprEval{e}{\rho}=\exprEval{e}{\rho'}$.
\end{lemma}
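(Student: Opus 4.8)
The plan is to prove Lemma~\ref{lemma:eval-value} by structural induction on the expression $e$, following the grammar in \Cref{fig:core-language}, namely $e \coloneqq n \mid x \mid \ominus e \mid e_1 \otimes e_2$. The guiding observation is that expressions in \muasm{} never refer to memory directly (loads are instructions, not expressions), so $\exprEval{e}{\rho}$ depends on $\rho$ only through the values it assigns to the registers that syntactically occur in $e$. Writing $X_e$ for that set, the statement to establish is exactly: if $\rho$ and $\rho'$ agree on $X_e$, then $\exprEval{e}{\rho} = \exprEval{e}{\rho'}$.

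For the base cases: if $e = n$ is a literal, then $X_e = \emptyset$ and $\exprEval{n}{\rho} = n = \exprEval{n}{\rho'}$ irrespective of $\rho,\rho'$; if $e = x$ is a register, then $X_e = \{x\}$, and the hypothesis $\rho(x) = \rho'(x)$ gives $\exprEval{x}{\rho} = \rho(x) = \rho'(x) = \exprEval{x}{\rho'}$ at once.

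For the inductive step, I would first record the (immediate) syntactic facts $X_{\ominus e_1} = X_{e_1}$ and $X_{e_1 \otimes e_2} = X_{e_1} \cup X_{e_2}$. In the unary case, agreement of $\rho,\rho'$ on $X_{\ominus e_1} = X_{e_1}$ lets the induction hypothesis conclude $\exprEval{e_1}{\rho} = \exprEval{e_1}{\rho'}$, and since the semantics of $\ominus$ is a function, applying it to equal arguments gives equal results. In the binary case, agreement on $X_{e_1 \otimes e_2}$ entails agreement on $X_{e_1}$ and on $X_{e_2}$ separately, so two applications of the induction hypothesis give $\exprEval{e_1}{\rho} = \exprEval{e_1}{\rho'}$ and $\exprEval{e_2}{\rho} = \exprEval{e_2}{\rho'}$; evaluating $\otimes$ on these equal pairs finishes the case.

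There is no real obstacle here: the lemma is a routine determinism-of-evaluation argument, and the only piece of bookkeeping worth stating explicitly is the monotonicity of the ``registers occurring in'' map under passing to subexpressions, which is read off directly from the syntax. The reason the lemma is singled out is its downstream use — it is the workhorse behind the non-interference reasoning for non-$\thigh$ labels in the proof of \Cref{the:sni-ss} — rather than any difficulty in its own proof.
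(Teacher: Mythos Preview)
Your proposal is correct and matches the paper's approach: the paper simply states that the proof is straightforward by mathematical induction on the length of $e$, which is exactly the structural induction you carry out in detail. Your write-up is more explicit, but there is no substantive difference in method.
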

The proof is straightforward by using mathematical induction to the length of $e$. 
Similarly, we have
\begin{lemma}\label{lemma:eval-taint}
    For an expression $e$, let $X$ be the set of all registers that appear in $e$.
    For any pairs of taint mapping $\mu$ and $\mu'$ such that $\mu(x)=\mu'(x)$ holds for every $x\in X$, then we have $\exprEval{e}{\mu}=\exprEval{e}{\mu'}$.
\end{lemma}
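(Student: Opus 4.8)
The plan is to prove Lemma~\ref{lemma:eval-taint} by structural induction on the expression $e$, exactly mirroring the one-line argument indicated for its value-level twin Lemma~\ref{lemma:eval-value}. The only ingredient I need is that taint evaluation $\exprEval{e}{\mu}$ is \emph{compositional}: it is defined by recursion on the grammar $e \coloneqq n \mid x \mid \ominus e \mid e_1 \otimes e_2$ of \Cref{fig:core-language}, sending a constant to a fixed vector of concrete labels (independent of $\mu$), a register $x$ to $\mu(x)$, a unary expression to $\uopondomain{\ominus}{\prolattice{T}{n}}{\exprEval{e_1}{\mu}}$, and a binary expression to $\bopondomain{\otimes}{\prolattice{T}{n}}{\exprEval{e_1}{\mu}}{\exprEval{e_2}{\mu}}$, using the operator rules of \Cref{sec:taint-operator-rules}. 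Writing $X_e$ for the set of registers occurring in $e$, the goal is that $\mu$ and $\mu'$ agreeing on $X_e$ forces $\exprEval{e}{\mu} = \exprEval{e}{\mu'}$.

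First I would dispatch the base cases. If $e = n$, then $X_e = \emptyset$ and $\exprEval{n}{\mu}$ does not mention $\mu$, so the two evaluations coincide trivially. If $e = x$, then $X_e = \{x\}$ and the hypothesis $\mu(x) = \mu'(x)$ gives $\exprEval{x}{\mu} = \mu(x) = \mu'(x) = \exprEval{x}{\mu'}$. For the unary step $e = \ominus e_1$ we have $X_e = X_{e_1}$, so $\mu, \mu'$ agree on $X_{e_1}$; the induction hypothesis yields $\exprEval{e_1}{\mu} = \exprEval{e_1}{\mu'}$, and applying the \emph{same} unary taint operator to equal arguments gives the claim. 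For the binary step $e = e_1 \otimes e_2$, the only bookkeeping is $X_e = X_{e_1} \cup X_{e_2}$, whence agreement on $X_e$ entails agreement on each of $X_{e_1}$ and $X_{e_2}$; two applications of the induction hypothesis give $\exprEval{e_1}{\mu} = \exprEval{e_1}{\mu'}$ and $\exprEval{e_2}{\mu} = \exprEval{e_2}{\mu'}$, and feeding these equal taint vectors into $\bopondomain{\otimes}{\prolattice{T}{n}}{\cdot}{\cdot}$ closes the case.

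There is no substantial obstacle: the statement is a routine congruence property of a compositional evaluation map, and the entire content is that the taint operators of \Cref{sec:taint-operator-rules} are \emph{functions} (equal inputs force equal outputs) together with the fact that the register-occurrence set decomposes along the syntax. The single point deserving explicit mention is the inclusion $X_{e_1}, X_{e_2} \subseteq X_e$ in the binary case, which must be recorded before invoking the induction hypothesis on each subexpression; everything else is immediate from the recursive definition of $\exprEval{\cdot}{\cdot}$. This lemma, paired with Lemma~\ref{lemma:eval-value}, supplies precisely the compositional step used later to propagate the non-interference invariant through expression evaluation in the proof of \Cref{the:sni-ss}.
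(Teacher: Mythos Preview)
Your proposal is correct and follows essentially the same approach as the paper: the paper dispatches this lemma with a single ``Similarly, we have'' after Lemma~\ref{lemma:eval-value}, whose proof it describes as ``straightforward by using mathematical induction to the length of $e$.'' Your structural induction on the grammar of $e$ is exactly that argument spelled out.
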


Furthermore, \Cref{thm:well-defined} and \Cref{def:well-defined} imply that 
\begin{lemma}\label{lemma:eval-non-inter}
    For an expression $e$, let $X$ be the set of all registers that appear in $e$.
    For any pairs of value mapping $\rho$ and $\rho'$, and a taint mapping $\mu$ such that \isinstance{\rho(x)}{\mu(x)}, \isinstance{\rho'(x)}{\mu(x)}, and $\tainteq{\rho(x)}{\rho'(x)}{\mu(x)}$ hold, for every $x\in X$, then we have $\tainteq{\exprEval{e}{\rho}}{\exprEval{e}{\rho'}}{\exprEval{e}{\mu}} $.
\end{lemma}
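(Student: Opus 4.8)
The plan is to prove the statement by structural induction on the expression $e$, following the grammar for expressions in \Cref{fig:core-language}. Before the main induction I would establish an auxiliary fact by a parallel induction: whenever $\isinstance{\rho(x)}{\mu(x)}$ holds for every register $x$ occurring in $e$, then $\isinstance{\exprEval{e}{\rho}}{\exprEval{e}{\mu}}$. This is needed because the second clause of \Cref{def:well-defined} is only applicable to operands that are genuine instances of the taint vectors in question, and the inductive step will feed the results of subexpression evaluations into that clause. The auxiliary claim is immediate for literals and registers, and for compound expressions $\ominus e_1$ and $e_1 \otimes e_2$ it follows directly from the first clause of \Cref{def:well-defined}, which \Cref{thm:well-defined} guarantees holds for every operator of \muasm{}.

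For the main induction, the base cases are $e = n$ and $e = x$. For a literal $n$, both $\exprEval{n}{\rho}$ and $\exprEval{n}{\rho'}$ equal $n$, and $\exprEval{n}{\mu}$ consists only of concrete labels, so $\tainteq{n}{n}{\exprEval{n}{\mu}}$ holds vacuously: there is no $\thigh$ position on which the two (equal) values could disagree. For a register $x$, we have $X = \{x\}$, so the hypothesis $\tainteq{\rho(x)}{\rho'(x)}{\mu(x)}$ is precisely the conclusion $\tainteq{\exprEval{x}{\rho}}{\exprEval{x}{\rho'}}{\exprEval{x}{\mu}}$.

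For the inductive step, consider $e = e_1 \otimes e_2$ (the unary case is analogous, dropping the second operand). Let $X_1, X_2$ be the register sets of $e_1, e_2$, so $X = X_1 \cup X_2$, and the hypotheses of the lemma restrict to $X_1$ and $X_2$ respectively. By the induction hypothesis, $\tainteq{\exprEval{e_j}{\rho}}{\exprEval{e_j}{\rho'}}{\exprEval{e_j}{\mu}}$ for $j = 1,2$, and by the auxiliary claim $\isinstance{\exprEval{e_j}{\rho}}{\exprEval{e_j}{\mu}}$ and $\isinstance{\exprEval{e_j}{\rho'}}{\exprEval{e_j}{\mu}}$. Now apply the second clause of \Cref{def:well-defined} to the operator $\otimes$ with $t_j = \exprEval{e_j}{\mu}$, $v_j = \exprEval{e_j}{\rho}$, $v'_j = \exprEval{e_j}{\rho'}$: it yields $\tainteq{v_1 \otimes v_2}{v'_1 \otimes v'_2}{\bopondomain{\otimes}{\prolattice{T}{n}}{t_1}{t_2}}$. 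Since $\exprEval{e}{\rho} = v_1 \otimes v_2$, $\exprEval{e}{\rho'} = v'_1 \otimes v'_2$, and $\exprEval{e}{\mu} = \bopondomain{\otimes}{\prolattice{T}{n}}{t_1}{t_2}$ by the definition of expression evaluation under a taint mapping, this is exactly $\tainteq{\exprEval{e}{\rho}}{\exprEval{e}{\rho'}}{\exprEval{e}{\mu}}$, closing the induction.

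The inductive step itself is a near-direct invocation of well-definedness, so I expect the main obstacle to be the bookkeeping rather than any deep argument: getting the auxiliary instance-preservation claim stated and proved first (so the hypotheses of \Cref{def:well-defined} are actually met), fixing the convention that a literal's taint vector is all-concrete so the base case is trivial, and checking that the per-register hypotheses over $X$ correctly supply what the two sub-inductions need over $X_1$ and $X_2$. None of these is hard, but they are the places where the phrasing ``let $X$ be the set of all registers that appear in $e$'' has to be handled with care.
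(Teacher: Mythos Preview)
Your proposal is correct and follows essentially the same approach as the paper: the paper's entire proof of this lemma is the one-line remark ``\Cref{thm:well-defined} and \Cref{def:well-defined} imply that'', leaving the structural induction implicit (just as for the preceding \Cref{lemma:eval-value} and \Cref{lemma:eval-taint}). Your write-up is actually more careful than the paper's, in that you explicitly isolate the auxiliary instance-preservation fact $\isinstance{\exprEval{e}{\rho}}{\exprEval{e}{\mu}}$ needed to meet the hypotheses of the second clause of \Cref{def:well-defined}; the paper does not state this separately but it is indeed required for the induction step to go through.
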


It follows from \Cref{lemma:eval-value}, \Cref{lemma:eval-taint} and \Cref{lemma:eval-non-inter} that
\begin{lemma}\label{lemma:label-non}
    For any pair of execution traces 
    \begin{gather*}
        \trace{p}{s_1}{D}{O}=(p, s_1)\xrightarrow[d_1]{o_1} (p, s_2)\xrightarrow[d_2]{o_2} (p, s_3)\cdots \xrightarrow[d_{n-1}]{o_{n-1}}(p, s_n)\\
        \trace{p}{s'_1}{D'}{O'}=(p, s'_1)\xrightarrow[d'_1]{o'_1} (p, s'_2)\xrightarrow[d'_2]{o'_2} (p, s'_3)\cdots \xrightarrow[d'_{n-1}]{o'_{n-1}}(p, s'_n)
    \end{gather*}
    such that $\policyeq{s_1}{s'_1}{P}$, $s_1 \vdash P$ and $s'_1 \vdash P$, 
    then for any $r\in \Reg \cup \Nat$, $1\le t\le n$, we have
        \begin{align*}
            (\forall 1\le i\le t,\ &\rho_{s_i}(\pc)=\rho_{s_i'}(\pc)) \Rightarrow \\
             &((\mu_{s_t}(r) = \mu_{s_t'}(r)) 
             \land  (\tainteq{\rho_{s_t}(r)}{\rho_{s'_t}(r)}{\mu_{s_t}(r)})).
        \end{align*}
    
\end{lemma}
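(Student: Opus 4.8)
\textbf{Proof plan for \Cref{lemma:label-non}.}
The plan is to argue by induction on $t$, strengthening the statement with the auxiliary invariant $\isinstance{\rho_{s_t}(r)}{\mu_{s_t}(r)}$ and $\isinstance{\rho_{s'_t}(r)}{\mu_{s'_t}(r)}$ for every $r\in\Reg\cup\Nat$ (this is needed merely so that the relation $\tainteq{\cdot}{\cdot}{\cdot}$ in the conclusion is well-formed, and it is maintained for free by requirement~1 of \Cref{def:well-defined} together with the load/store rules of \Cref{fig:tsemantics}). For the base case $t=1$: since $s_1\vdash P$ and $s'_1\vdash P$, both $\mu_{s_1}$ and $\mu_{s'_1}$ send every location of $P$ to $\vec{\tlow}$ and every other location to $\vec{\thigh}$, so $\mu_{s_1}(r)=\mu_{s'_1}(r)$ for all $r$; and since $\policyeq{s_1}{s'_1}{P}$, the two value maps agree on all of $P$, which yields $\tainteq{\rho_{s_1}(r)}{\rho_{s'_1}(r)}{\mu_{s_1}(r)}$ in both remaining cases ($r\in P$: the values agree bit for bit; $r\notin P$: the label is $\vec{\thigh}$, so the requirement is vacuous).

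For the inductive step, assume the claim (and the instance invariant) at $t=k$ under the hypothesis that the program counters agree through step $k+1$. Because $\rho_{s_k}(\pc)=\rho_{s'_k}(\pc)$, both traces execute the same instruction $I=p(\rho_{s_k}(\pc))$ at step $k$; only the misspeculative flag and the branch target can depend on the directive, and agreement of the branch target is part of the hypothesis, while the flag is absent from the statement. I would then case-split on $I$. For the ``register-only'' rules (assignment, conditional assignment, jump, fence, branch-step, branch-force, alloc): the new value of any updated register is $\exprEval{e}{\rho_{s_k}}$ and its new label $\exprEval{e}{\mu_{s_k}}$ for some expression $e$; by \Cref{lemma:eval-taint} and the induction hypothesis the evaluated labels coincide across the two traces, and by \Cref{lemma:eval-non-inter} (packaging \Cref{thm:well-defined}/\Cref{def:well-defined}) the two evaluated values are $\tainteq{}{}{}$ under that common label, restoring the invariant; the special registers $\pc$ and $\mem$ carry label $\vec{\tlow}$, so the induction hypothesis forces their values to be literally equal, exactly what the alloc rule (which reads $\mem$) and the branch rules (which update $\pc$) need. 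For $\pload{x}{e}$: if $\thigh\notin\exprEval{e}{\mu_{s_k}}$ then by \Cref{lemma:eval-taint} the same holds in the primed run, so by \Cref{lemma:eval-non-inter} the address carries no $\thigh$ bit and hence $n=n'$; the loaded label $\mu_{s_k}(n)$ and value $\rho_{s_k}(n)$ then inherit the invariant from the hypothesis applied at location $n$. If $\thigh\in\exprEval{e}{\mu_{s_k}}$, the loaded label is $\vec{\thigh}$ in both runs and the value constraint is vacuous.

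I expect the crux to be the store rule $\pstore{x}{e}$. When the address expression is $\thigh$-free in one run it is $\thigh$-free in the other (\Cref{lemma:eval-taint}), so \Cref{lemma:eval-non-inter} again forces $n=n'$, and both runs write the label $\mu_{s_k}(x)=\mu_{s'_k}(x)$ and the $\tainteq{}{}{}$-related values $\rho_{s_k}(x),\rho_{s'_k}(x)$ to that single location, preserving the invariant. The delicate sub-case is a store through an address that carries a $\thigh$ label: the condition $\thigh\in\exprEval{e}{\mu}$ is synchronized across the two runs by \Cref{lemma:eval-taint}, so both store $\vec{\thigh}$, but possibly to different locations $n\neq n'$; one must then argue that the taint update for such a write leaves the two taint maps equal, using that a write through a secret-dependent address contaminates (rather than refines) the affected cells so that every touched location ends up labelled $\vec{\thigh}$ in both runs, which makes the companion value constraint vacuous there while untouched locations keep their (equal, by hypothesis) labels. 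This is the step that most heavily relies on the precise design of the store semantics in \Cref{fig:tsemantics}. Once every instruction case is discharged, $\mu_{s_{k+1}}(r)=\mu_{s'_{k+1}}(r)$ and $\tainteq{\rho_{s_{k+1}}(r)}{\rho_{s'_{k+1}}(r)}{\mu_{s_{k+1}}(r)}$ hold for all $r$, closing the induction; the lemma then feeds into the proof of \Cref{the:sni-ss}, where coinciding program-counter sequences together with the no-$\thigh$-in-observations discipline of speculative safety force the two speculative observation sequences to agree.
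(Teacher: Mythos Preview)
Your overall architecture---induction on $t$ with a case split on the instruction at step $k$, using \Cref{lemma:eval-taint} and \Cref{lemma:eval-non-inter} for the expression-evaluation cases and treating $\pc$/$\mem$ via their $\vec{\tlow}$ labels---is exactly the paper's approach. The base case, the load case (split on whether $\thigh\in\exprEval{e}{\mu_{s_k}}$), the assignment/conditional-assignment/branch/jump/fence/alloc cases, and the $\thigh$-free store case are all handled the same way in the paper.

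The gap is in the sub-case you yourself flag as ``delicate'': a store through an address whose taint vector contains $\thigh$. Your proposed discharge says that ``every touched location ends up labelled $\vec{\thigh}$ in both runs'' while ``untouched locations keep their (equal, by hypothesis) labels''. That does not hold under the concrete store rule of \Cref{fig:tsemantics}: the rule updates exactly the single concrete address $n=\exprEval{e}{\rho}$ (and sets $\mu'(n)=\vec{\thigh}$ when $\thigh\in t$). If the two runs compute different targets $n\neq n'$---which is precisely what a $\thigh$-carrying address allows---then after the step, location $n$ carries $\vec{\thigh}$ in the first run but its \emph{old} label in the second run, so $\mu_{s_{k+1}}(n)\neq\mu_{s'_{k+1}}(n)$ in general, contradicting the first conjunct of the claim. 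Your argument implicitly assumes a havoc-style semantics (``contaminate all possibly-addressed cells'') that \cref{tr:app:tsemantic:store} does not provide. The paper's own proof dismisses this sub-case with ``similar to $\loadKywd$, we only need to consider $\thigh\notin t$'', which is the same gap stated more briefly; so you are not missing a trick the paper supplies, but your sketch does not close it either. To make this case go through you would need either (i) to strengthen the concrete store semantics so that a secret-dependent store sets \emph{all} of memory to $\vec{\thigh}$, or (ii) to weaken the invariant at such locations (e.g.\ carry only $\mu_{s_t}(r)\sqsubseteq\mu_{s'_t}(r)\sqcup\vec{\thigh}$-style bounds) and check that the downstream use in \Cref{the:sni-ss} survives.
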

\begin{proof}
    We will use mathematical induction to complete the proof. The conclusion holds when $t = 1$. Suppose the conclusion holds when $t = k < n$. Then, when $t = k+1$, we consider the different cases of $\rho_{s_k}(\pc)$ ($=\rho_{s'_k}(\pc)$, by prerequisite).
    \begin{enumerate}
        \renewcommand{\labelenumi}{\theenumi.}
        \item $\rho_{s_k}(\pc)=\passign{x}{e}$. 
        As \Cref{tr:app:tsemantic:assign} states, $\mu_{s_k}$ and $\mu_{s_{k+1}}$ only disagree on $x$ and $\pc$, and $\mu_{s'_k}$  similarly with $\mu_{s'_{k+1}}$.
        By the inductive hypothesis,  for $r\neq x$, $\mu_{s_{k+1}}(r)=\mu_{s_{k}}(r)=\mu_{s'_{k}}(r) = \mu_{s'_{k+1}}(r)$.
        By \Cref{lemma:eval-taint}, we have $\mu_{s_{k+1}}(x) = \exprEval{e}{\mu_{s_k}}=\exprEval{e}{\mu_{s'_k}} = \mu_{s_{k+1}'}(x)$.

        The inductive hypothesis implies that $\tainteq{\rho_{s_k}(r)}{\rho_{s'_k}(r)}{\mu_{s_k}(r)}$. 
        Then by \Cref{lemma:eval-non-inter}, $\tainteq{\rho_{s_{k+1}}(x)}{\rho_{s'_{k+1}}(x)}{\mu_{s_{k+1}}(x)}$.
        And for $r\neq x$, $\rho_{s_{k+1}}(r)= \rho_{s_k}(r) \backsim_{\mu_{s_k}(r)} \rho_{s'_k}(r)= \rho_{s'_{k+1}}(r)$.

        \item $\rho_{s_k}(\pc)= \pload{x}{e}$.     
        As \Cref{tr:app:tsemantic:load} states, apart from $\pc$, only $x$ will be modified in $\rho_{s_k}$ and $\mu_{s_k}$. 

        Let $t = \exprEval{e}{\mu_{s_k}} (= \exprEval{e}{\mu_{s'_k}})$.
        If $\thigh \in t$, then $\mu_{s_{k+1}}(x)=\vec{\thigh}=\mu_{s'_{k+1}}(x)$ and naturally $\tainteq{\rho_{s_{k+1}}(x)}{\rho_{s'_{k+1}}(x)}{\mu_{s_{k+1}}(x)}$ (since there is no non-\thigh label in $\mu_{s_{k+1}}(x)$) .
        If $\thigh \notin t$, $\tainteq{\exprEval{e}{\rho_{s_k}}}{\exprEval{e}{\rho_{s'_k}}}{t}$ (by the inductive hypothesis) implies that $\exprEval{e}{\rho_{s_k}} = \exprEval{e}{\rho_{s'_k}}\triangleq l$. 
        Then, $\muskone(x)=\musk(l)=\muspk(l)=\muspkone(x)$ and $\rhoskone(x)=\rhosk(l)\backsim_{\musk(l)(=\muskone(x))}\rhospk(l)=\rhospkone(x)$.

        \item  $\rho_{s_k}(\pc)= \pstore{x}{e}$. 
        Let $t = \exprEval{e}{\mu_{s_k}} (= \exprEval{e}{\mu_{s'_k}})$. 
        Similar to $\loadKywd$, we only need to consider the case where $\thigh \notin t$.
        
        If $\thigh \notin t$, $\tainteq{\exprEval{e}{\rhosk}}{\exprEval{e}{\rhospk}}{t}$ (by the inductive hypothesis) implies that $\exprEval{e}{\rho_{s_k}} = \exprEval{e}{\rho_{s'_k}}\triangleq l$. 
        Then, $\muskone(l)=\musk(x)=\muspk(x)=\muspkone(l)$ and $\rhoskone(l)=\rhosk(x)\backsim_{\musk(x)(=\muskone(l))}\rhospk(x)=\rhospkone(l)$.

        \item  $\rho_{s_k}(\pc)= \pfence$ or $\pbranch{x}{l}$. 
        Conclusion holds since no register other than $\pc$ or memory address is modified.
        \item $\rho_{s_k}(\pc)= \pcondassign{x}{e}{e'}$. 
        As \Cref{tr:app:tsemantic:conditional-assign} states, apart from $\pc$, only $x$ will be modified in $\rho_{s_k}$ and $\mu_{s_k}$. 
        Let $t = \exprEval{e'}{\musk} (= \exprEval{e'}{\muspk})$. 
        Similar to $\loadKywd$, we only need to consider the case where $\thigh \notin t$.
        
        If $\thigh \notin t$, $\tainteq{\exprEval{e'}{\rhosk}}{\exprEval{e'}{\rhospk}}{t}$ (by the inductive hypothesis) implies that $\exprEval{e'}{\rho_{s_k}} = \exprEval{e'}{\rho_{s'_k}}\triangleq l$. 
        If $l=0$, we have $\muskone(x)=\musk(x)=\muspk(x)=\muspkone(x)$ and $\rhoskone(x)=\rhosk(x)\backsim_{\musk(x)(=\muskone(x))}\rhospk(x)=\rhospkone(x)$.
        If $l\neq 0$,  by \Cref{lemma:eval-taint}, we have $\muskone(x)=\exprEval{e}{\musk}=\exprEval{e}{\muspk}=\muspkone(x)$.
        By \Cref{lemma:eval-non-inter}, we have $\rhoskone(x)=\exprEval{e}{\rhosk}\backsim_{\exprEval{e}{\musk}(=\muskone(x))}\exprEval{e}{\rhospk}=\rhospkone(x)$.

        \item $\rho_{s_k}(\pc)= \palloc{x}{n}$.
        It is clear that $\muskone(x)=\vec{\tlow}=\muspkone(x)$.
        We also have $\rhoskone(x)=\rhosk(\mem)=\rhospk(\mem)=\rhospkone(x)$, where the second equation stems from the assumption that the $\mem$ register is always marked as $\vec{\tlow}$.
        Furthermore,  $\rhoskone(\mem)=\rhosk(\mem)+n=\rhospk(\mem)+n=\rhospkone(\mem)$.
    \end{enumerate}

    So conclusion holds when $t=k+1$.
    Using mathematical induction, we can prove that the conclusion holds for $1\le t\le n$.

\end{proof}

\SStoSNI*
\begin{proof}
    Suppose there is a program $p$ satisfying SS with a violation of SNI.
    Then we can find a pair of traces, $\tau=\trace{p}{s_1}{D}{O}$ and $\tau'=\trace{p}{s'_1}{D}{O'}$, with the corresponding sequential traces \seqtrace{p}{s_1}{\overline{O}} and \seqtrace{p}{s'_1}{\overline{O}'}, such that $\policyeq{s_1}{s'_1}{P}$, $s_1 \vdash P$ , $s'_1 \vdash P$ and $\overline{O} = \overline{O}'$, but $O\neq O'$.
    Unwind $O$ as $o_1o_2\cdots o_n$, $O'$ as $o_1'o_2'\cdots o_{n'}'$.

    Let $i$ denote the least index such that $o_i\neq o_i'$. 
    It is obvious that $o_i$ and $o_i'$ must be generated during misspeculative execution.
    The equality of $\overline{O}$ and $\overline{O}'$, and the shared directives $D$ imply that $\tau$ and $\tau'$ have the same start point of misspeculative execution, i.e., speculative flags of $s_j$ and $s_j'$ have a same evaluation given any $s_j$ in $\tau$ and $s_j'$ in $\tau'$.
    
    Consider the value of $\pc$ in $s_j$ and $s_j'$, where $j\le i$. 
    For $j$ such that $s_j$ and $s_j'$ are in sequential part of the trace, $\overline{O} = \overline{O}'$ derives the equality of the value of $\pc$ in $s_j$ and $s_j'$.
    For $j$ such that $s_j$ and $s_j'$ are in misspeculative part of the trace, by applying mathematical induction to $j$, we can also  derive the equality of $\pc$'s value  in $s_j$ and $s_j'$ from \Cref{def:well-defined}, \Cref{lemma:label-non} and the definition of speculative safety which requires the absence of \thigh labels during speculative execution.
    
    Now we have (1) the preceding traces of $o_i$ and $o_i'$ have the same value of $\pc$ at each state, (2) $o_i$ and $o_i'$ is free of \thigh label.
    By \Cref{lemma:label-non}, we have $o_i=o_i'$, which leads to a contradiction.
\end{proof}

As can be seen from the proof, even though we do not track taint caused by implicit information flow, speculative safety still successfully guarantees the soundness of speculative non-interference.
This is due to the fact that, firstly, the premise of speculative non-interference requires that any two traces produce the same observations in sequential execution.
The premise inherently excludes the possibility of an H label leaking information during such execution. 
Secondly, the property of speculative safety prohibits the H label from appearing in observations generated during misspeculative execution, further eliminating the potential for implicit information flow of \thigh labels.

\section{Value Domain}\label{app:sec:vd}
Without loss of generality, we assume that the length of each register and the length of each memory address are both \consn, a given constant.

Let the concrete domain be $\condm$, where \integer denotes all \consn-bit integers. 
The abstract function from $\condm$ to abstract domain $L$ is denoted by $\abf{L}$, and the concretization function is denoted by $\conff{L}{}$.

We first present the operator rules on interval domain of \consn-bit integers, written $\interval$. 
Abstract function and concretization function between $\powerset{\integer}$ and $\interval$ are straightforward.
We denote the maximum and minimum value in $\interval$ by $\interval_{\max}(=2^{n-1}-1)$ and $\interval_{\min}(=-2^{n-1})$, respectively.
Let $I_1=[a_1,b_1], I_2=[a_2,b_2]\in\interval$.

\begin{figure*}
\begin{itemize}
    \item $\uopondomain{\notop}{\interval}{I_1}=[-1-b_1,-1-a_1]$.
    \item $\bopondomain{\addop}{\interval}{I_1}{I_2}=
    \begin{cases}
        [\minI,\maxI] & \text{if $a_1+a_2<\minI$ or $b_1+b_2>\maxI$}\\
        [a_1+a_2,b_1+b_2] & \text{otherwise}
    \end{cases}$.
    \item $\bopondomain{\minusop}{\interval}{I_1}{I_2}=
    \begin{cases}
        [\minI,\maxI] & \text{if $a_1-b_2<\minI$ or $b_1-a_2>\maxI$}\\
        [a_1-b_2,b_1-a_2] & \text{otherwise}
    \end{cases}$.
    \item $\bopondomain{\mulop}{\interval}{I_1}{I_2}=
    \begin{cases}
        [\minI,\maxI] & \text{if $\exists c\in C$ s.t. $c>\minI$ or $c<\minI$}\\
        [\min(C),\max(C)] & \text{otherwise}
    \end{cases}$, where $C=\{a_1\cdot a_2, a_1\cdot b_2, b_1\cdot a_2, b_1\cdot b_2\}$.
    \item $\bopondomain{\divop}{\interval}{I_1}{I_2}=[\minI,\maxI]$.
    \item $\bopondomain{\modop}{\interval}{I_1}{I_2}=I_1$. 
    \item $\bopondomain{\rshaop}{\interval}{I_1}{I_2}=
    \begin{cases}
        [\minI,\maxI] & \text{if $b_1< 0$}\\
        [\min(C),\max(C)] & \text{otherwise}
    \end{cases}
    $, where $C=\{\bopondomain{\rshaop}{\integer}{s}{t}\mid s\in\{a_1,b_1\},t\in\{a_2,b_2\} \}$.
    \item $\bopondomain{\andop}{\interval}{I_1}{I_2}=
    \begin{cases}
        [\minI,\maxI] & \text{if $a_2< 0$ and $b_2\ge0$, or if $a_1<0$}\\
        [0,\min(b_1,b_2)] & \text{if $a_2\ge 0$}\\
        [a_1-\uopondomain{\notop}{\integer}{a_2},b_1 - \uopondomain{\notop}{\integer}{b_2}] & \text{if $b_2<0$}
    \end{cases}$.
\end{itemize}
For the following bitwise operators, we assume that $a_i,b_i>0(i=1,2)$, otherwise, the result is set to $[\minI,\maxI]$.
\begin{itemize}
    \item $\bopondomain{\orop}{\interval}{I_1}{I_2}=[\max(a_1,a_2),\maxI]$.
    \item $\bopondomain{\xorop}{\interval}{I_1}{I_2}=[\minI,\maxI]$.
    \item $\bopondomain{\lshiftop}{\interval}{I_1}{I_2}=
    \begin{cases}
        [\minI,\maxI] & \text{if $(b_1<<b_2) >\maxI$}\\
        [a_1<<a_2,b_1<<b_2] & \text{otherwise}
    \end{cases}$.
    \item $\bopondomain{\rshlop}{\interval}{I_1}{I_2}=[\bopondomain{\rshaop}{\integer}{a_1}{b_2},\bopondomain{\rshaop}{\integer}{a_2}{b_1}]$.
\end{itemize} 
\caption{Operator rules on interval domain.}
\label{fig:itv}
\end{figure*}

The operator rules on interval domain are shown in \Cref{fig:itv}.


Moving on, let us examine the disjoint interval set domain ($\di$).
\begin{restatable}[Disjoint Interval Set]{definition}{dis}\label{def:di}
    A disjoint interval set is a set of intervals $\{[a_i,b_i]\mid 1\le i\le n, a_i, b_i\in\Int \}$ where $n\in\Nat$, and $\Int$ denotes the set of integers, such that $a_i\le b_i$ and $b_i < a_{i+1} - 1$ holds for all $1 \le i \le n$..
\end{restatable}

For $n\in \integer$ and $D\in \di$, we denote by \indi{n}{D} if $\exists [a,b]\in D$ such that $n\in [a,b]$.
We define a relation \diorder on $\di$: for $d,d'\in \di$, $d \diorder d'$ holds iff for any $[a,b]\in d$, there is $[a',b']\in d'$ such that $a'\le a\le b\le b'$.

It is clear that any set of integers can be uniquely represented by a corresponding disjoint interval set.
Therefore, there is an isomorphism between $\condm$ and $\di$, and we have a natural $\abf{\di}$ and $\conff{\di}{}$.
Furthermore, $\diorder$ and $\subseteq$ are isomorphic functions.
We denote by $\di(Z)$ the corresponding disjoint interval set of $Z\in\condm$.

In the domain of disjoint interval sets, the greatest upper bound and least lower bound can be naturally derived from the corresponding concepts in the interval domain.
Furthermore, the operator rules for $\di$ can be largely inherited from the interval domain, with the exception of multiplication, which requires special handling.

Let $D_1=\{[a_{1,1},b_{1,2}],\cdots,[a_{1,s},b_{1,s}]\}$,$D_2=\{[a_{2,1},b_{2,2}],\cdots,[a_{2,t},b_{2,t}]\}$ be two disjoint interval sets.
For \bopondomain{\mulop}{\di}{D_1}{D_2}, if $t=1$ and $a_{2,1}=b_{2,1}\triangleq l$ (which indicates that $D_2$ contains an only integer $l$), we define 
$$
    \bopondomain{\mulop}{\di}{D_1}{D_2} = \di(\{n\mid n\in \bopondomain{\mulop}{\interval}{[a_{1,i},b_{1,i}]}{[l,l]}
    \})
$$
holds for some $1\le i\le s$

Now we can define the abstract value domain (denote by \VD) of a program $p$.
For a program $p$, we collect all the $\allocKywd$ instructions into the set $\base_p=\{i\mid p(i) = \palloc{x}{n}\}$. 
With each $b\in\base_p\cup \{\emptysym\}$ and the corresponding offset denotes a possible range of value, the abstract interpretation of the value is represented by a mapping $\vd: \base_p\cup\{\emptysym\}\to\di$ with $\vd(\emptysym)$ being a singleton set (i.e., $\vd(\emptysym)$ contains only one interval).

We can derive a lattice on \VD from \parlattice{\di}{\diorder} by requiring that for any $\vd_1,\vd_2\in\VD$, $\vd_1\diorder\vd_2$ holds iff for any $ b\in \base_p\cup\{\emptysym\}$, we have $\vd_1(b)\diorder \vd_2(b)$.
Furthermore, we can obtain the greatest lower bound and the least upper bound in the lattice \VD.
An abstract value $\vd$ is called \textit{abstract number} iff for any $b\in\base_p$ , $\vd(b)=\emptyset$.


Let $\vd_1,\vd_2$ be two abstract values, and $\vd=\bopondomain{\op}{\VD}{\vd_1}{\vd_2}$ be the computation result in \VD.

\begin{itemize}
    \item For $\op=\addop$, 
    \begin{itemize}
        \item If $\vd_1$ and $\vd_2$ are both abstract numbers, \vd is given by
            $$
                \vd(x) = \begin{cases}
                    \bopondomain{\addop}{\di}{\vd_1(\emptysym)}{\vd_2(\emptysym)} & \text{if $x = \emptysym$}\\
                    \emptyset & \text{otherwise}
                \end{cases}
            $$
        \item 
        If only one of $\vd_1$ and $\vd_2$ is an abstract number, w.l.o.g., we let $\vd_2$ be the abstract number, then
        $$
            \vd(x) = \bopondomain{\addop}{\di}{\vd_1(x)}{\vd_2(\emptysym)}
        $$
        \item If $\vd_1$ and $\vd_2$ are neither abstract numbers,  \vd is set to \topof{\VD}.
    \end{itemize}
    \item For $\op=\minusop$, 
    \begin{itemize}
        \item If $\vd_1$ and $\vd_2$ are both abstract numbers, \vd is given by
            $$
                \vd(x) = \begin{cases}
                    \bopondomain{\minusop}{\di}{\vd_1(\emptysym)}{\vd_2(\emptysym)} & \text{if $x = \emptysym$}\\
                    \emptyset & \text{otherwise}
                \end{cases}
            $$
        \item 
        If  $\vd_2$ is an abstract number and $\vd_1$ is not. Then,
        $$
            \vd(x) = \bopondomain{\minusop}{\di}{\vd_1(x)}{\vd_2(\emptysym)}
        $$
        \item Otherwise,  \vd is set to \topof{\VD}.
    \end{itemize}
    \item For $\op=\andop$, 
    \begin{itemize}
        \item If $\vd_1$ and $\vd_2$ are both abstract numbers, \vd is given by
            $$
                \vd(x) = \begin{cases}
                    \bopondomain{\andop}{\di}{\vd_1(\emptysym)}{\vd_2(\emptysym)} & \text{if $x = \emptysym$}\\
                    \emptyset & \text{otherwise}
                \end{cases}
            $$
        \item If only one of $\vd_1$ and $\vd_2$ is an abstract number, w.l.o.g., we let $\vd_2$ be the abstract number,  
        \begin{itemize}
            \item If there exists a non-negative $z_1$ and a negative $z_2$, such that $\indi{z_1}{\vd_2(\emptysym)}$ and $\indi{z_2}{\vd_2(\emptysym)}$, \vd is given by
            $$
                \vd(x) = \begin{cases}
                    \topof{\di} & \text{if $x = \emptysym$}\\
                    \emptyset & \text{otherwise}
                \end{cases}
            $$
            \item If there does not exist a negative $z$, such that $\indi{z_1}{\vd_2(\emptysym)}$, then \vd is assigned with $\vd_2$.
            \item If there does not exist a non-negative $z$, such that $\indi{z_1}{\vd_2(\emptysym)}$, then \vd is given by
            $$
                \vd(x) = 
                    \bopondomain{\andop}{\di}{\vd_1(x)}{\vd_2(\emptysym)} 
            $$
        \end{itemize}
        \item Otherwise,  \vd is set to \topof{\VD}.
    \end{itemize}
    \item For $\op\in \bop\cup\uop/\{\addop,\minusop,\andop\}$,
    \begin{itemize}
        \item If either $v_1$ or $v_2$ is not abstract number, then the result $\vd$ is the top element $\topof{\VD}$ of the domain $\VD$.
        \item If $v_1$ and $v_2$ are both abstract numbers, we have 
        $$
        \vd(x)=\begin{cases}
            \bopondomain{\op}{\di}{\vd_1(\emptysym)}{\vd_2(\emptysym)} & \text{if $x = \emptysym$}\\
                    \emptyset & \text{otherwise}
        \end{cases}
        $$   
    \end{itemize}
\end{itemize}


\section{Abstract Semantics}\label{app:sec:ass}

In this section, we present the abstract semantics of our analysis.
First, we introduce abstract sequential semantics, and then abstract speculative semantics.
Both semantics work with a taint domain \atd, and a value domain \VD.
Before delving into the main discussion, let us establish some notations.

\textit{Abstract state}: Our abstract semantics work on abstract state $\aconf$.
$\aconf$ is defined as a domain of tuples $\conf{\abrho,\abmu, \vmem, \tmem}$ (written \abs).
Here, $\abrho$ maps $\Reg/\{\pc,\mem\}$ to \VD, and $\{\pc, \mem\}$ to $\Nat$.
And $\abmu$ is a mapping from $\Reg$ to \atd.
Memory is modeled by \vmem and \tmem. 
The former is a memory model that stores abstract values, while the latter is a memory model that stores taint vectors. 
The load and store operations on \tmem and \vmem are modeled by \abload{\vmem}, \abstore{\vmem}, \abload{\tmem} and  \abstore{\tmem}.
Each $\abd{s} = \conf{\abrho,\abmu, \vmem, \tmem}$ represents a possible state of program.
We also denote the $\abrho$ in state $\abd{s}$ by $\abrho_{\abs}$, and similarly we have the notations 
$\abmu_{\abs}$, $\vmem_{\abs}$ and $\tmem_{\abs}$.

\textit{Transition and program trace}: The one-step abstract execution of a program $p$ is modeled using a transition relation between two abstract states. 
We denote the transition within abstract sequential semantics (or abstract speculative semantics) by \seqtran{(p,\abs_1)}{(p,\abs_2)}{\abo} (\spectran{(p,\abs_1)}{(p,\abs_2)}{\abo},  resp.), for which we say program $p$ with an abstract state $\abs_1$ generates an abstract observation \abo.
Here, $\abo\in\abd{\Obs}$, and \abd{\Obs} is given by:
$$
    \abd{\Obs} \coloneqq \Obsnone \mid \Obsbranch{\vd:\abt} \mid \Obsload{\vd:\abt}{a}{b} \mid \Obsstore{\vd:\abt}{a}{b}
$$
In the formula above, $\vd\in\VD$, $\abt\in\atd$, and $[a,b]$ denotes that the attacker can observe bits from a to b.
Note that we do not model the speculative flag and directives in the transition of  abstract states. 
This is because during abstract interpretation, all possible branch addresses are computed and taken into account in the next state.

\begin{figure*}[!h]
    \begin{center}
        \small
        \typerule{Asgn-Seq}{
            p(\abrho(\pc)) = \passign{x}{e} 
            \\ \abmu_1 = \updatemap{\abmu}{x}{\exprEval{e}{\abmu}}
            \\ \abrho_1 = \abrho[x\mapsto\exprEval{e}{\abrho},\pc\mapsto\abrho(\pc)+1]
        }{
            \seqtran{(p,\abconf{})}{(p,\conf{\abrho_1,\abmu_1,\vmem,\tmem})}{\epsilon}
        }{app:tsemantic:seq:assign}
         \typerule{Ld-Seq}{
             p(\abrho(\pc)) = \pload{x}{e} & \vd = \exprEval{e}{\abrho} & \abt = \exprEval{e}{\abmu} \\ 
             \abt_1 = {\begin{cases}
            \abvechigh & \text{if $\abhigh \in \abt$}\\
            \abload{\tmem}(\vd) &\text{if $\abhigh \notin \abt$}
            \end{cases}} & {\begin{aligned}
                \abrho_1=\abrho[x\mapsto \abload{\vmem}(\vd)]\\
                \abrho_2=\abrho_1[\pc\mapsto\abrho(\pc)+1] \\
                \abmu_1=\updatemap{\abmu}{x}{\abt_1}
            \end{aligned}}
         }{
            \seqtran{(p,\abconf{})}{(p,\conf{\abrho_2,\abmu_1,\vmem,\tmem})}{\Obsload{\vd:\abt}{a}{b}}
         }{app:tsemantic:seq:load}
         \typerule{St-Seq}{
             p(\abrho(\pc)) = \pstore{x}{e} & \vd = \exprEval{e}{\abrho} & \abt = \exprEval{e}{\abmu} \\ 
            \abt_1 = {\begin{cases}
            \abvechigh & \text{if $\abhigh \in \abt$}\\
            \abmu(x)      &\text{if $\abhigh \notin \abt$}
            \end{cases}} & {\begin{aligned}
                \vmem_1 = \abstore{\vmem}(\vd, \abrho(x))\\
                \tmem_1 = \abstore{\tmem}(\vd, \abt_1)\\
                \abrho_1=\rho[\pc\mapsto\abrho(\pc)+1]
            \end{aligned}}
         }{
            \seqtran{(p,\abconf{})}{(p,\conf{\abrho_1,\abmu,\vmem_1,\tmem_1})}{\Obsstore{\vd:\abt}{a}{b}}
         }{app:tsemantic:seq:store}
         \typerule{CondAsgn-Seq}{
             p(\abrho(\pc)) = \pcondassign{x}{e}{e'} 
             & \abt = \exprEval{e}{\abmu} 
             & \abt_1 = \exprEval{e'}{\abmu} \\
             \vd = \exprEval{e}{\abrho} 
             & \abmu_1={
                \begin{cases}
                    \updatemap{\abmu}{x}{\abvechigh} & \text{if $\abhigh \in \abt_1$}\\
                    \updatemap{\abmu}{x}{ \abt\sqcup \abmu(x)} & \text{otherwise }
                \end{cases}
             }\\
             \abrho_1 = \abrho[x\mapsto \abrho(x)\sqcup \vd, \pc\mapsto\abrho(\pc)+1]
         }{
            \seqtran{(p,\abconf{})}{(p,\conf{\abrho_1,\abmu_1,\vmem,\tmem})}{\epsilon}
         }{app:tsemantic:seq:conditional-assign}
        \typerule{Fen-Seq}{
             p(\abrho(\pc)) = \pfence \\ 
             \abrho_1=\abrho[\pc\mapsto\abrho(\pc)+1]
         }{
            \seqtran{(p,\abconf{})}{(p,\conf{\abrho_1,\abmu,\vmem,\tmem})}{\epsilon}
         }{app:tsemantic:seq:fence}
         \typerule{Jmp-Seq}{  
            p(\abrho(\pc)) = \pjmp{l} 
            & \abrho_1 = \updatemap{\abrho}{\pc}{l}
        }{
            \seqtran{(p,\abconf{})}{(p,\conf{\abrho_1,\abmu,\vmem,\tmem})}{\epsilon}
        }{app:tsemantic:seq:jump}
         \typerule{Br-T-Seq}{
            p(\abrho(\pc)) = \pbranch{x}{l} 
            & \vd = \abrho(x) 
            & \abt = \abmu(x)
            \\ \abrho_1 = \abrho[x\mapsto\vd\sqcap{\{[0,0]\}},\pc\mapsto l]
        }{
            \seqtran{(p,\abconf{})}{(p,\conf{\abrho_1,\abmu,\vmem,\tmem})}{\Obsbranch{\vd:\abt}}
        }{app:tsemantic:seq:branch-true}
        \typerule{Br-F-Seq}{
            p(\abrho(\pc)) = \pbranch{x}{l} 
            & \vd = \abrho(x) 
            & \abt = \abmu(x)
            \\ \abrho_1 = \abrho[x\mapsto \vd\sqcap\{[-\infty,-1],[1,\infty]\},\pc\mapsto\abrho(\pc)+1 ]
        }{
            \seqtran{(p,\abconf{})}{(p,\conf{\abrho_1,\abmu,\vmem,\tmem})}{\Obsbranch{\vd:\abt}}
        }{app:tsemantic:seq:branch-false}
        \typerule{Alloc-Seq}{
            p(\abrho(\pc)) = \palloc{x}{n} 
            & \abrho_1 = \abrho[x\mapsto \updatemap{\bot_\VD}{\abrho(\pc)}{\{[0]\}}, \mem \mapsto \abrho(\mem)+n]
            & \abmu_1 = \abmu[x\mapsto \abveclow]
        }{
            \seqtran{(p,\abconf{})}{(p,\conf{\abrho_1,\abmu_1,\vmem,\tmem})}{\epsilon}
        }{app:tsemantic:seq:alloc}
        \caption{Abstract Sequential Semantics of \abconf{}.}
        \label{fig:tsemantics:seq:ab}
    \end{center}
\end{figure*}    

The abstract sequential semantics (\Cref{fig:tsemantics:seq:ab}) share many similarities with the concrete semantics. 
We now focus on the points of divergence between the two.

The first distinction lies in the handling of conditional assignment instructions. 
Unlike the concrete semantic that assigns values based on the value of $e'$, the abstract semantic utilizes the least upper bound of the original value and the value of $e$ to estimate all possible assignment scenarios.

The second difference lies in the handling of branch instructions.
In abstract sequential semantics, there are two rules for transitions when encountering a $\jzKywd$ instruction: \Cref{tr:app:tsemantic:seq:branch-true} and \Cref{tr:app:tsemantic:seq:branch-false}.
These two rules allow both jump and non-jump scenarios as subsequent states of an abstract state, and they constrain the values in these subsequent states using branch conditions. 
For instance, in \Cref{tr:app:tsemantic:seq:branch-true}, $\pc$ is set to the corresponding value when the branch is taken, indicating that the branch condition holds, thus ${\{[0,0]\}}$ is used to constrain the branch variable $x$. 
On the other hand, \Cref{tr:app:tsemantic:seq:branch-false}, corresponding to the case where the branch condition does not hold, uses $\{[-\infty,-1],[1,\infty]\}$ to constrain the branch variable $x$. 

Another difference lies in the handling of allocation instructions.
In abstract semantics, we do not consider the concrete value of the allocated base address. 
Instead, we assign $x$  an abstract value $\updatemap{\bot_\VD}{\abrho(\pc)}{\{[0]\}}$, where $\abrho(\pc)\in \base_p$ is the corresponding symbol for current allocation instruction.

Similarly we can define the abstract speculative semantics of \abconf{} (\Cref{fig:tsemantics:spec:ab}).
\begin{figure*}[!h]
    \begin{center}
        \small
        \typerule{Asgn-Spec}{
            p(\abrho(\pc)) = \passign{x}{e} 
            \\ \abmu_1 = \updatemap{\abmu}{x}{\exprEval{e}{\abmu}}
            \\ \abrho_1 = \abrho[x\mapsto\exprEval{e}{\abrho},\pc\mapsto\abrho(\pc)+1]
        }{
            \spectran{(p,\abconf{})}{(p,\conf{\abrho_1,\abmu_1,\vmem,\tmem})}{\epsilon}
        }{app:tsemantic:spec:assign}
         \typerule{Ld-Spec}{
             p(\abrho(\pc)) = \pload{x}{e} & \vd = \exprEval{e}{\abrho} & \abt = \exprEval{e}{\abmu} \\ 
             \abt_1 = {\begin{cases}
            \abvechigh & \text{if $\abhigh \in \abt$}\\
            \abload{\tmem}(\vd) &\text{if $\abhigh \notin \abt$}
            \end{cases}} & {\begin{aligned}
                \abrho_1=\abrho[x\mapsto \abload{\vmem}(\vd)]\\
                \abrho_2=\abrho_1[\pc\mapsto\abrho(\pc)+1] \\
                \abmu_1=\updatemap{\abmu}{x}{\abt_1}
            \end{aligned}}
         }{
            \spectran{(p,\abconf{})}{(p,\conf{\abrho_2,\abmu_1,\vmem,\tmem})}{\Obsload{\vd:\abt}{a}{b}}
         }{app:tsemantic:spec:load}
         \typerule{St-Spec}{
             p(\abrho(\pc)) = \pstore{x}{e} & \vd = \exprEval{e}{\abrho} & \abt = \exprEval{e}{\abmu} \\ 
            \abt_1 = {\begin{cases}
            \abvechigh & \text{if $\abhigh \in \abt$}\\
            \abmu(x)      &\text{if $\abhigh \notin \abt$}
            \end{cases}} & {\begin{aligned}
                \vmem_1 = \abstore{\vmem}(\vd, \abrho(x))\\
                \tmem_1 = \abstore{\tmem}(\vd, \abt_1)\\
                \abrho_1=\rho[\pc\mapsto\abrho(\pc)+1]
            \end{aligned}}
         }{
            \spectran{(p,\abconf{})}{(p,\conf{\abrho_1,\abmu,\vmem_1,\tmem_1})}{\Obsstore{\vd:\abt}{a}{b}}
         }{app:tsemantic:spec:store}
         \typerule{CondAsgn-Spec}{
             p(\abrho(\pc)) = \pcondassign{x}{e}{e'} 
             & \abt = \exprEval{e}{\abmu} 
             & \abt_1 = \exprEval{e'}{\abmu} \\
             \vd = \exprEval{e}{\abrho} 
             & \abmu_1={
                \begin{cases}
                    \updatemap{\abmu}{x}{\abvechigh} & \text{if $\abhigh \in \abt_1$}\\
                    \updatemap{\abmu}{x}{ \abt\sqcup \abmu(x)} & \text{otherwise }
                \end{cases}
             }\\
             \abrho_1 = \abrho[x\mapsto \abrho(x)\sqcup \vd, \pc\mapsto\abrho(\pc)+1]
         }{
            \spectran{(p,\abconf{})}{(p,\conf{\abrho_1,\abmu_1,\vmem,\tmem})}{\epsilon}
         }{app:tsemantic:spec:conditional-assign}
        \typerule{Fen-Spec}{
             p(\abrho(\pc)) = \pfence 
             \abrho_1=\abrho[\pc\mapsto\abrho(\pc)+1]
         }{
            \spectran{(p,\abconf{})}{(p,\conf{\abrho_1,\abmu,\vmem,\tmem})}{\epsilon}
         }{app:tsemantic:spec:fence}
         \typerule{Fen-Spec-Block}{
             p(\abrho(\pc)) = \pfence &
             \abrho_1=\abrho[\pc\mapsto\bot]
         }{
            \spectran{(p,\abconf{})}{(p,\conf{\abrho_1,\abmu,\vmem,\tmem})}{\epsilon}
         }{app:tsemantic:spec:fence}
         \typerule{Br-T-Spec}{
            p(\abrho(\pc)) = \pbranch{x}{l} 
            & \vd = \abrho(x) 
            & \abt = \abmu(x)
            \\ \abrho_1 = \abrho[\pc\mapsto l]
        }{
            \spectran{(p,\abconf{})}{(p,\conf{\abrho_1,\abmu,\vmem,\tmem})}{\Obsbranch{\vd:\abt}}
        }{app:tsemantic:spec:branch-true}
        \typerule{Br-F-Spec}{
            p(\abrho(\pc)) = \pbranch{x}{l} 
            & \vd = \abrho(x) 
            & \abt = \abmu(x)
            \\ \abrho_1 = \abrho[\pc\mapsto\abrho(\pc)+1 ]
        }{
            \spectran{(p,\abconf{})}{(p,\conf{\abrho_1,\abmu,\vmem,\tmem})}{\Obsbranch{\vd:\abt}}
        }{app:tsemantic:spec:branch-false}
        \typerule{Alloc-Spec}{
            p(\abrho(\pc)) = \palloc{x}{n} 
            & \abmu_1 = \abmu[x\mapsto \abveclow]
            \\ \abrho_1 = \abrho[x\mapsto \updatemap{\bot_\VD}{\abrho(\pc)}{\{[0]\}}, \mem \mapsto \abrho(\mem)+n]  
        }{
            \spectran{(p,\abconf{})}{(p,\conf{\abrho_1,\abmu_1,\vmem,\tmem})}{\epsilon}
        }{app:tsemantic:spec:alloc}
        \typerule{Jmp-Spec}{  
            p(\abrho(\pc)) = \pjmp{l} 
            & \abrho_1 = \updatemap{\abrho}{\pc}{l}
        }{
            \spectran{(p,\abconf{})}{(p,\conf{\abrho_1,\abmu,\vmem,\tmem})}{\epsilon}
        }{app:tsemantic:spec:jump}
        \caption{Abstract Speculative Semantics of \abconf{}.}
        \label{fig:tsemantics:spec:ab}
    \end{center}
\end{figure*}    

\Cref{tr:app:tsemantic:spec:assign}, \Cref{tr:app:tsemantic:spec:load}, \Cref{tr:app:tsemantic:spec:store}, \Cref{tr:app:tsemantic:spec:conditional-assign}, \Cref{tr:app:tsemantic:spec:jump} and \Cref{tr:app:tsemantic:spec:alloc} is identical to their sequential versions.
\Cref{tr:app:tsemantic:spec:fence} sets $\pc$ to $\bot$, blocking the execution.
\Cref{tr:app:tsemantic:spec:branch-true} and \Cref{tr:app:tsemantic:spec:branch-false} represent the branch taken and not taken scenarios, respectively. 
Unlike abstract sequential semantics, abstract speculative semantics do not use branch conditions to constrain variables. 
This is because, even if the branch condition is false, the program can still speculatively execute with the taken branch, and vice versa.


\section{Soundness of Abstract Interpretation}\label{app:sec:sound}


Before the discussion, we introduce some notations used in our analysis.
For a domain $V$, \ubound{V} and  \lbound{V} denote the least upper bound and the greatest lower bound operator in $V$, respectively. 
\topof{V} and \botof{V} denote the top and the bottom element of the lattice, respectively.
For operator \op defined in \muasm and a domain $V$, \bopondomain{\op}{V}{a}{b} denotes the operations in $V$.
\order{V} denotes the partial order of the lattice.
For a pair of abstract domain and concrete domain $(A,C)$, we denote $\abf{A}$ as the abstract function and $\conff{A}{}$ as the concretization function.

The abstract domains and their corresponding concrete domains employed in our analysis are as follows:

\begin{itemize}
    \item   Taint Tracking.\\
            Concrete Domain: \ctd. \ctd is the powerset of taint vectors. Each element $T\in \ctd$ represents a possible set of taint vectors associated with a variable.  
            \order{\ctd} is defined as an inclusion relation between sets.\\
            Abstract Domain: \atd. \atd is the product of \abd{T}.
            \order{\ctd} is derived from \order{\abd{T}}.
    \item   Value.\\
            Concrete Domain: \cvd. \cvd is the powerset of \consn-bit integers. $Z\in \cvd$ represents a possible set of integers associated with a variable. 
            \order{\cvd} is defined as an inclusion relation between sets. \\
            Abstract Domain: 
            The abstract interpretation of values is constructed hierarchically using multiple abstract domains: the interval domain (\interval), the disjoint interval set domain (\di) and the abstract value domain (\avd) introduced in \Cref{app:sec:vd}.
            \order{\interval} is given as the standard inclusion relation.
            \order{\di} and \order{\avd} is introduced in \Cref{app:sec:vd}.
    \item   State of executing a particular instruction.\\
            Concrete Domain: The concrete domain consists of program state sets \stateset that satisfies the property that for any $s_1,s_2\in\stateset$, $\rho_{s_1}(\pc)=\rho_{s_2}(\pc)$ holds. 
            Each element in this concrete domain represents a set of possible states that the program can be in when it executes a particular instruction.
            The partial order of this domain is defined as an inclusion relation between sets.\\
            Abstract Domain: \aconf, the domain of abstract states. Each element $\abs\in\aconf$ ia a quaternion \abconf{}, introduced in \Cref{app:sec:ass}. 
            From here on, we will refer to the quaternion as an \textit{abstract state}. 
            \order{\aconf} is defined by requiring that for any $\abs_1,\abs_2\in \aconf$, $\abs_1\order{\aconf}\abs_2$ holds iff $\order{}$ holds for any element pairs in \abrho, \abmu, \vmem and \tmem. 
    \item   Program State.\\
            Concrete Domain: \cstate. \cstate is the powerset of all states.
            Each element in this concrete domain represents a set of possible states of programs.
            \order{\cstate} is defined as an inclusion relation between sets.\\
            Abstract Domain: \astate.   \astate is call the \textit{abstract configuration} domain.
            Each element $\abS\in\astate$ (i.e., an abstract configuration) denotes a mapping from $\Nat$ to \aconf.
            $\abS(i)$ represents the possible abstract states of the program when it reaches instruction $p(i)$.
            Thus $\abrho_{\abS(i)}(\pc)$ is required to be $i$.
            \order{\astate} is defined by requiring that $\abS_1 \order{\astate} \abS_2$ holds iff $\abS_1(i) \order{\astate} \abS_2(i)$ for any $i\in p$, where $i\in p$ denotes that $p(i)$ is a valid instruction.
            \abS is called an initial abstract configuration when $\abS(i)=\botof{\astate}$ for any $i\neq 0$. 
    \item   Observation.\\
            Concrete Domain: \cob. The observation in our work takes the form of a value with a taint label vector.
            In our analysis, we only care about the taint label of an observation. Therefore, the concrete domain of observations is the same as the concrete domain of taint tracking (i.e., \ctd).\\
            Abstract Domain: \aob. For the same reason, the abstract domain of observations is the same as the abstract domain of taint tracking (i.e., \atd).
\end{itemize}

For taint tracking, let $\abf{\abd{\lattice{T}}}$ and $\conff{\abd{\lattice{T}}}{}$ denote the isomorphic functions between \lattice{T} and \abd{\lattice{T}}.
For \ctd and \atd, the abstract and concretization function is given by the following formula.
Let $\ctde\in\ctd$ and $\atde\in\atd$, 
\begin{gather*}
    \abf{\atd}(\ctde) = (\atde_{\consn-1},\atde_{\consn-2},\cdots,\atde_0)\ \ \\\text{where $\atde_i=\ubound{\lattice{T}}\{t[i]\mid t\in \ctde\}$ for $0\le i \le \consn -1$}\\
    \conff{\atd}{}(\atde) = \{t \mid t[i] \in \atde[i]\}
\end{gather*}

It is straightforward to show that $\abf{\atd}$ and $\cof{\atd}$ are monotonic.

\begin{lemma}[Galois Connection Between \abf{\atd} and \conff{\atd}{}]\label{lemma:gc:td}
    For any $\ctde\in\ctd$, we have
    $$
        \ctde \order{\ctd} \cof{\atd}(\abf{\atd}(\ctde))
    $$
\end{lemma}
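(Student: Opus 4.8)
The plan is to unfold the definitions of $\abf{\atd}$ and $\cof{\atd}$ and reduce the claimed inclusion to the elementary fact that every element of a join lies below that join. Since $\order{\ctd}$ is ordinary set inclusion on $\ctd = \powerset{\prolattice{T}{\consn}}$, it suffices to prove that an arbitrary taint vector $t \in \ctde$ also lies in $\cof{\atd}(\abf{\atd}(\ctde))$.

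First I would fix such a $t$ and write $\atde = \abf{\atd}(\ctde)$, so that by the defining formula $\atde[i] = \ubound{\lattice{T}}\{\, t'[i] \mid t' \in \ctde \,\}$ for each bit position $0 \le i \le \consn - 1$ (understood, via the isomorphism $\abf{\abd{\lattice{T}}}$, as the element of $\abd{\lattice{T}}$ corresponding to that least upper bound --- concretely, the principal down-set it generates in $\lattice{T}$, as displayed in \Cref{fig:ai-td}). Unwinding the definition of $\cof{\atd}$, membership of $t$ in $\cof{\atd}(\atde)$ amounts precisely to the conjunction, over all $i$, of $t[i] \in \atde[i]$, i.e.\ of $t[i] \order{\lattice{T}} \atde[i]$.

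Next, for a fixed position $i$, the label $t[i]$ is by construction one of the labels over which the least upper bound $\atde[i]$ is taken, since $t \in \ctde$; hence $t[i] \order{\lattice{T}} \atde[i]$. As $i$ was arbitrary, $t$ satisfies the membership predicate of $\cof{\atd}(\atde)$, so $t \in \cof{\atd}(\abf{\atd}(\ctde))$; and since $t$ was an arbitrary element of $\ctde$, we conclude $\ctde \order{\ctd} \cof{\atd}(\abf{\atd}(\ctde))$.

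The argument is essentially bookkeeping. The only step that needs any care --- and the closest thing to an obstacle here --- is the type-level identification between the least upper bound computed in $\lattice{T}$ and set membership in the corresponding element of $\abd{\lattice{T}}$; once one observes that the elements $\abd{\ell}$ of $\abd{\lattice{T}}$ are exactly the principal down-sets of $\lattice{T}$, this is immediate. Notably, the previously remarked monotonicity of $\abf{\atd}$ and $\cof{\atd}$ is not needed: the bound follows directly and component-wise from the definition.
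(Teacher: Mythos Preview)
Your proposal is correct and takes essentially the same approach as the paper, which dispatches the lemma in a single sentence by appealing to the isomorphism between $\lattice{T}$ and $\abd{\lattice{T}}$. Your argument is precisely the natural unfolding of that remark: you make explicit the component-wise reasoning and the identification of $\abd{\lattice{T}}$-elements with principal down-sets that the paper leaves implicit.
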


\begin{lemma}[Local Soundness of Taint Domain]\label{lemma:ls:td}
    For any $\ctde_1,\ctde_2\in\ctd$ and operators \op, we have 
    $$\abf{\atd}(\bopondomain{\op}{\ctd}{\ctde_1}{\ctde_2}) \order{\ctd} \bopondomain{\op}{\atd}{\abf{\atd}(\ctde_1)}{\abf{\atd}(\ctde_2)}$$
\end{lemma}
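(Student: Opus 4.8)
The plan is to reduce the claimed local-soundness inequality to a purely order-theoretic fact — the monotonicity of every bit-vector operator $\bopondomain{\op}{\prolattice{T}{n}}{\cdot}{\cdot}$ of \Cref{app:sec:taint-operation} — and then to verify that monotonicity operator by operator.

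First I would unfold the two sides. The concrete operation $\bopondomain{\op}{\ctd}{\ctde_1}{\ctde_2}$ is the pointwise (collecting) lifting $\{\bopondomain{\op}{\prolattice{T}{n}}{t_1}{t_2}\mid t_1\in\ctde_1,\ t_2\in\ctde_2\}$, and under the componentwise isomorphism $\atd\cong\prolattice{T}{n}$ the abstraction map $\abf{\atd}$ is exactly $\ctde\mapsto\bigsqcup\ctde$, since in a product lattice the least upper bound is taken componentwise — which is precisely how $\abf{\atd}$ is defined. (All these lubs exist because $\prolattice{T}{n}$ is finite.) So the inequality to prove, with ambient order that of $\atd$, becomes
\begin{gather*}
\bigsqcup\{\bopondomain{\op}{\prolattice{T}{n}}{t_1}{t_2}\mid t_i\in\ctde_i\}\ \order{\atd}\ \bopondomain{\op}{\prolattice{T}{n}}{\left(\bigsqcup\ctde_1\right)}{\left(\bigsqcup\ctde_2\right)}.
\end{gather*}
Since $t_i\sqsubseteq\bigsqcup\ctde_i$ for each $t_i\in\ctde_i$, this is immediate from monotonicity of $\bopondomain{\op}{\prolattice{T}{n}}{\cdot}{\cdot}$ in both arguments; conversely, instantiating $\ctde_1=\{t,t'\}$ with $t\sqsubseteq t'$ recovers that monotonicity, so nothing weaker would suffice. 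Before the case analysis I would dispose of the $\tbot$ convention: if $a\sqsubseteq a'$ and $a$ has a $\tbot$ component, then $\bopondomain{\op}{\prolattice{T}{n}}{a}{b}=\vec{\tbot}=\botof{\prolattice{T}{n}}$; and if $a$ has no $\tbot$ component then neither does $a'$, since nothing strictly above a non-$\tbot$ label is $\tbot$. Hence the convention is consistent with monotonicity and $\tbot$ may henceforth be set aside.

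The substance is the operator-by-operator check. The bitwise operators $\notop,\andop,\orop,\xorop$ — and likewise $\divop,\modop$, whose rules depend only on whether $\thigh$ or $\tlow$ occurs — are routine, since each output bit is a monotone function of the relevant input bits (using only $\tzero\sqcup\tone=\tlow$ and the absorption rules). For $\addop$ and $\minusop$ I would argue by induction on the bit index, proving simultaneously that the carry $c_i$ and the result $r_i$ are monotone in the operands; the point that must be handled with care is that $\tzero$ and $\tone$ are \emph{incomparable} in \lattice{T}, so $a_i\sqsubseteq a_i'$ forces $a_i\in\{\tbot,a_i'\}$ whenever $a_i'$ is concrete — this is exactly what keeps the carry rule ``$\counttupleup{\tone}{(a_i,b_i,c_i)}\le 1$'' monotone. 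For $\mulop,\lshiftop,\rshlop,\rshaop$ I would first note that the index functions $\minintv{\tone}(\cdot),\minintv{\tlow}(\cdot),\minintv{\thigh}(\cdot)$ are antitone (raising an input can only enlarge the set of indices meeting a fixed threshold), so the $\thigh$- and $\tlow$-regions of the output only grow; on the remaining concrete prefix the relevant digit of $\num(a)\times\num(b)$ (or of the shifted value) depends only on low-order bits on which $a$ and any refinement of it necessarily agree, hence is unchanged, and wherever a concrete digit is replaced by $\tlow$ or $\thigh$ this is an increase in \lattice{T}.

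I expect the main obstacle to be the sheer volume of bookkeeping — keeping the $\addop$/$\minusop$ carry induction airtight and aligning the $\mulop$/shift thresholds for the two operands — rather than any conceptual difficulty; the single genuinely non-routine subtlety is the incomparability of $\tzero$ and $\tone$, which is what rescues the carry rules. A convenient aid throughout is the explicit covering relation of \lattice{T}, so that $x\sqsubseteq x'$ with $x\neq x'$ means $x=\tbot$, or $x\in\{\tzero,\tone\}$ and $x'\in\{\tlow,\thigh\}$, or $x=\tlow$ and $x'=\thigh$ — which turns each of the above verifications into a short finite check.
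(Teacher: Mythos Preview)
Your reduction is correct and your plan is sound. The paper's own ``proof'' is a single sentence: it declares \Cref{lemma:ls:td} ``straightforward given the isomorphism between $\lattice{T}$ and $\abd{\lattice{T}}$'' and stops there. What you have done is precisely to unpack that sentence: the isomorphism makes the abstract operator $\bopondomain{\op}{\atd}{\cdot}{\cdot}$ coincide with $\bopondomain{\op}{\prolattice{T}{n}}{\cdot}{\cdot}$ and turns $\abf{\atd}$ into the componentwise lub, after which local soundness is exactly monotonicity of each $\bopondomain{\op}{\prolattice{T}{n}}{\cdot}{\cdot}$. Your operator-by-operator monotonicity check (with the $\tbot$ convention handled up front, the $\tzero$/$\tone$ incomparability observation for the $\addop$/$\minusop$ carries, and the antitonicity of the index functions for $\mulop$ and the shifts) is the honest content that the paper's one-liner elides. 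So you are not taking a different route so much as actually walking the route the paper gestures at; there is no gap in your plan.
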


The proofs of \Cref{lemma:gc:td} and \Cref{lemma:ls:td} are straightforward given the isomorphism between \lattice{T} and \abd{\lattice{T}}.

The abstract and concretization functions of value domain can be more intricate because they depend on the program's execution paths. 
In this case, it is more efficient to discuss this directly on the state domain. 
When discussing the state domain, we use \stateset instead of \cstate as the concrete domain for analysis. 
Without causing confusion, given the program trace $\tau$, we still use the $\abff{\aconf}{\tau}$ to represent the abstract function from \stateset to \aconf, and the concretization function $\conff{\aconf}{\tau}$ remains unchanged (i.e., from \aconf to \cstate).
Therefore, the Galois connection condition can be written as $s\in \conff{\aconf}{\tau}(\abff{\aconf}{\tau}(s))$, and the local soundness can be written as $\abff{\aconf}{\tau}(f(s))\sqsubseteq \abd{f} (\abff{\aconf}{\tau}(s))$.

Let us first discuss state transitions under abstract speculative semantics.

Given a program $p$, let
$\tau=(p, s_1)\xrightarrow[d_1]{o_1} (p, s_2)\cdots \xrightarrow[d_{n-1}]{o_{n-1}}(p, s_n)$ be a concrete speculative trace of states, and $\abd{\tau}=(p, \abd{s_1})\xrightarrow[]{\abd{o_1}} (p, \abd{s_2})\cdots \xrightarrow[]{\abd{o_{n-1}}}(p, \abd{s_n})$ be an abstract speculative trace of abstract states.

Note that we require all memory spaces to be allocated using the $\allocKywd$ instructions. 
However, considering that some values may already be stored in memory in the initial state of the program (we denote the set of such addresses as \initmem.), we add some $\allocKywd$ instructions before the start of the program to represent the allocation of such memory. 
It is important to note that these values already exist in memory in the initial state, and these $\allocKywd$ instructions are only formal placeholders used to assign a symbol to the base of these already used addresses.
Unlike real $\allocKywd$ instructions, which can represent multiple address regions, each of these formal $\allocKywd$ instructions corresponds to a single address region.

The set of all symbols used to represent base addresses in the program $p$ is denoted by $\base_p=\{i\mid p(i) = \palloc{x}{n}\}$.
Given the symbols, memory $\memory^V$ is formalized by a tuple $\conf{\memmap^V, \sizemap^V}$.
$\memmap^V: (\base_p\times \integer)\to V$ maps a memory base and an offset index to a value in $V$, where $V$ can be \VD or \abdpl{T}{n}.
$\sizemap^V: \base_p \to \Nat$ records the memory region size corresponding to each base address.

Given the concrete trace $\tau$, there is a corresponding $\basemap_\tau:\base_p\to \powerset{\Nat}$, which records the concrete addresses of each base.
Each allocated address can be represented as its base address plus an in-bounds offset.
Thus there is a corresponding function $\abaddr{\tau}:\Nat\to (\base_p\times\integer)$ that maps each address to its abstract interpretation.
For unallocated addresses $n$, we can choose a symbol in $\base_p$ that represents the largest base $b$ allocated and use an out-of-bounds offset relative to $b$ to interpret $n$.
Note that such a representation is unique when the trace $\tau$ is given.

Then, we have the concretization functions:
\begin{align*}
    &\conff{\VD}{\tau}(\vd) = \\&\{i\mid \indi{i}{\vd(\emptysym)}\}\cup \{ n + m \mid b\in \base_p, n \in \basemap_\tau(b), \indi{m}{\vd(b)}\}\\&
    \conff{\aconf}{\tau}(\abs) = \{s\mid  \ \rho_s(\pc) = \rho_{\abs}(\pc),\ \rho_s(\mem) = \rho_{\abs}(\mem),\\&
    \ \text{for $x\in\Reg/\{\pc,\mem\}$},\\& \text{$\rho_s(x)\in\conff{\VD}{\tau}(\abrho_{\abs}(x))$ and $\mu_s(x)\in \cof{\atd}(\abmu_{\abs}(x))$,}\\&
    \ \text{for $n\in\Nat$,}\\&\text{$\rho_s(n)\in\conff{\VD}{\tau}(\vmem_{\abs}(\abaddr{\tau}(n)))$ and $\mu_s(n)\in \cof{\atd}(\tmem_{\abs}(\abaddr{\tau}(n)))$ }\}
\end{align*}

By the definition, $\conff{\VD}{\tau}$ and $\conff{\aconf}{\tau}$ are both monotonic.

For \conff{\VD}{\tau} and \abaddr{\tau}, we have, 
\begin{lemma}\label{lemma:sound:memory}
    Let $\vd\in\VD$ and $n\in\Nat$.
    Given an abstract memory $\memory^V=\conf{\memmap, \sizemap}$ on domain $V$, for any $n\in \conff{\VD}{\tau}(\vd)$ where $\vd\in\VD$,  we have
    $$
    \memmap(\abaddr{\tau}(n)) \order{V} \abload{\memory^V}(\vd)
    $$
\end{lemma}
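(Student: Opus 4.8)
The plan is to unfold both sides against the definitions of $\conff{\VD}{\tau}$, $\abload{\memory^V}$ and the memory model, and to reduce the claim to one of two easy situations: either $\abload{\memory^V}(\vd)$ already equals $\topof{V}$, so the inequality is vacuous, or the concrete address $n$ is witnessed inside $\conff{\VD}{\tau}(\vd)$ as ``a base $b_0$ plus an in-bounds offset $k_0$'' and $\abaddr{\tau}(n)$ is precisely the pair $(b_0,k_0)$, which is among those over which the $\bigsqcup$ defining $\abload{\memory^V}(\vd)$ ranges; then the defining property of least upper bounds closes the argument. Throughout I would rename the bound offset variable appearing in the definition of $\abload{\memory^V}$ so as not to clash with the address $n$ of the statement.

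First I would handle the case $\vd(\emptysym)\neq\emptyset$: here $\abload{\memory^V}(\vd)=\topof{V}$ by definition, so $\memmap(\abaddr{\tau}(n))\order{V}\topof{V}=\abload{\memory^V}(\vd)$. So assume $\vd(\emptysym)=\emptyset$; then $\conff{\VD}{\tau}(\vd)=\{\,n'+k\mid b\in\base_p,\ n'\in\basemap_\tau(b),\ \indi{k}{\vd(b)}\,\}$, so from $n\in\conff{\VD}{\tau}(\vd)$ I obtain some $b_0\in\base_p$, $n_0\in\basemap_\tau(b_0)$ and an offset $k_0$ with $\indi{k_0}{\vd(b_0)}$ and $n=n_0+k_0$. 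I then split on whether $k_0$ is in bounds for $b_0$. If $k_0<0$ or $k_0\ge\sizemap(b_0)$, the memory model invariant gives $\memmap(b_0,k_0)=\topof{V}$; since $b_0\in\base_p$ and $\indi{k_0}{\vd(b_0)}$, the term $\memmap(b_0,k_0)$ occurs under the $\bigsqcup$ defining $\abload{\memory^V}(\vd)$, so $\topof{V}\order{V}\abload{\memory^V}(\vd)$, i.e.\ $\abload{\memory^V}(\vd)=\topof{V}$, and the claim is again vacuous. Otherwise $0\le k_0<\sizemap(b_0)$, so $n=n_0+k_0$ lies inside the concrete region of $b_0$ with base $n_0$; since the semantics of $\allocKywd$ (\Cref{tr:semantic:alloc}) always hands out the lowest free memory, allocated regions are pairwise disjoint, hence this is the unique way to write $n$ as a base plus an in-bounds offset, so $\abaddr{\tau}(n)=(b_0,k_0)$. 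Then $\memmap(\abaddr{\tau}(n))=\memmap(b_0,k_0)$ occurs under the same $\bigsqcup$, and $\memmap(\abaddr{\tau}(n))\order{V}\abload{\memory^V}(\vd)$ follows.

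The hard part will be the in-bounds subcase, specifically justifying that $\abaddr{\tau}$ applied to $n$ returns exactly the witnessing pair $(b_0,k_0)$ and not some unrelated base--offset presentation; this is the well-definedness of $\abaddr{\tau}$ on allocated addresses, resting on the operational-semantics facts that every live address is the base of exactly one allocated region plus an in-bounds offset and that distinct allocations occupy disjoint address ranges (the excerpt already records that ``such a representation is unique when the trace $\tau$ is given''). Everything else is a mechanical unfolding of $\conff{\VD}{\tau}$, of the clause $\abload{\memory^V}(\vd)=\bigsqcup\{\memmap(b,k)\mid b\in\base_p,\ \indi{k}{\vd(b)}\}$, and of the memory-model invariant $(\,k\ge\sizemap(b)\lor k<0\,)\Rightarrow\memmap(b,k)=\topof{V}$.
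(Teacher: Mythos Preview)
Your proof is correct and follows essentially the same idea as the paper's, just organized dually: the paper starts from the canonical pair $(b,z)=\abaddr{\tau}(n)$ and splits on whether $\indi{z}{\vd(b)}$ (if not, the witness for $n$ in $\vd$ must be out-of-bounds or come from $\emptysym$, so $\abload{\memory^V}(\vd)=\topof{V}$), whereas you start from a witness $(b_0,k_0)$ of $n\in\conff{\VD}{\tau}(\vd)$ and split on whether $k_0$ is in bounds (if so, it coincides with $\abaddr{\tau}(n)$). Both arguments rest on the same uniqueness of $\abaddr{\tau}$ and the memory-model invariant; your version is simply more explicit about the $\vd(\emptysym)\neq\emptyset$ case, which the paper folds into its ``out-of-bounds'' branch.
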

\begin{proof}
    Let $(b,z)=\abaddr{\tau}(n)$, where $b\in \base_p$ and $z\in\integer$.

    If $\indi{z}{\vd(b)}$,  the conclusions is trivial.

    If $z\not\vdash\vd(b)$,  then $n$ is represented by an out-of-bounds base-offset pair in $\vd$.
    Thus, $\memmap(\abaddr{\tau}(n)) \order{V} \topof{V} = \abload{\memory^V}(\vd)$.
    The conclusion also holds.
\end{proof}

A policy $P$ specifies which registers and memory addresses in \initmem contain data that will be marked as \textit{public}.
An initial state $s$ satisfying the policy $P$ is a state where the $\pc$ and $\mem$ evaluates to $0$, and  for $v\in P$, $\mu_{s}(v)=\vec{\tlow}$ and for $v\notin P$, $\mu_{s}(v)=\vec{\thigh}$.

\begin{definition}[Corresponding Initial Abstract State]\label{def:cor-abconstate}
    Let $s$ be an initial concrete state of trace $\tau$. 
    We call an abstract state $\abs$ the corresponding initial abstract state of $s$ when
\begin{enumerate}
    \item $\abrho_{\abs_i}(\pc)=0$ and $\abrho_{\abs_i}(\mem)=0$.
    \item For any $x\in\Reg$, $\abmu_{\abs}(x)=\mu_{s}(x)$.
    \item For any $n\in \initmem$, $\memmap^{\atd}(\abaddr{\tau}(n))=\mu_s(n)$.
    \item For any $x\in\Reg/\{\pc,\mem\}$, $\abrho_{\abs}(x)=\allinterger$, where \allinterger denotes such an abstract value $\vd$ such that $\vd(\emptysym)=[\minI,\maxI]$ and $\vd(v)=\emptyset$ for $v\in\base_p$. 
    \item For any $n\in \initmem$, its corresponding abstract memory address is set to $\allinterger$, i.e., $\memmap^{\avd}(\abaddr{\tau}(n))=\allinterger$.
\end{enumerate}
\end{definition}

The first rule ensures it is an initial abstract state.
The next two rules ensure that the taint vectors of the abstract state $\abs$ can correctly approximate the taint vectors of the concrete state $s$. 
The last two rules take into account all possible values of $s$ in registers and initial memory.
By the definition, we have 
\begin{lemma}\label{lemma:cor-init-conf}
    Each initial state has a unique corresponding initial abstract state.
\end{lemma}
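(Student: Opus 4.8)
The plan is to establish existence and uniqueness simultaneously, by unfolding \Cref{def:cor-abconstate} and observing that its five clauses determine each component of an abstract state $\abs = \conf{\abrho_{\abs},\abmu_{\abs},\vmem_{\abs},\tmem_{\abs}}$ exactly once, while constraining pairwise-disjoint parts of the tuple, so that no inconsistency can arise. Fix the trace $\tau$ and an initial concrete state $s$; recall that $\rho_s(\pc)=\rho_s(\mem)=0$ and that $\mu_s$ is pinned down by the policy $P$ on $\Reg$ and on $\initmem$, so the right-hand sides of clauses~2 and~3 are themselves well-defined.

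First I would read off the register maps. Clause~1 fixes $\abrho_{\abs}$ on $\{\pc,\mem\}$ and clause~4 fixes it on $\Reg/\{\pc,\mem\}$ (every such register is assigned the value $\allinterger$); since these two index sets partition $\Reg$, $\abrho_{\abs}$ is determined with no freedom left and no overlap. Clause~2 fixes $\abmu_{\abs}$ on all of $\Reg$, where $\mu_s(x)\in\prolattice{T}{\consn}$ is read as an element of $\atd$ through the coordinatewise isomorphism $\lattice{T}\cong\abd{\lattice{T}}$ recorded in \Cref{sec:ai-td}. Next I would pin down the two memory models. Clauses~3 and~5 fix $\memmap^{\atd}$ and $\memmap^{\avd}$ on exactly the cells $\{\abaddr{\tau}(n)\mid n\in\initmem\}$: since $\initmem$ is covered by the formal placeholder $\allocKywd$ instructions, each standing for a single address region, the restriction of $\abaddr{\tau}$ to $\initmem$ is a bijection onto the in-bounds cells of precisely those placeholder bases, so each such cell receives one well-defined value and no cell is assigned twice. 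Every remaining cell $(b,z)$ is then determined by the fixed form of an initial abstract memory: out-of-bounds offsets are forced to $\topof{V}$ by the axiom $(z\ge\sizemap(b)\lor z<0)\Rightarrow\memmap(b,z)=\topof{V}$ of \Cref{sec:memory-model}, while the in-bounds cells of the remaining bases — those corresponding to allocation sites not yet exercised in the initial state — take the fixed default value of the memory model; and the size maps $\sizemap^{\avd}$, $\sizemap^{\atd}$ are determined by the $\allocKywd$ instructions of $p$. Hence $\vmem_{\abs}$ and $\tmem_{\abs}$ are uniquely determined.

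Assembling these, the assignment just described is a well-formed element of $\aconf$, and the disjointness of the parts constrained by clauses~1--5 shows that it satisfies all of them at once; this yields existence, while the fact that every component was forced yields uniqueness. I expect the only non-mechanical point — the closest thing to an obstacle — to be justifying that the restriction of $\abaddr{\tau}$ to $\initmem$ is a bijection onto the in-bounds cells of the placeholder bases, which is exactly where the stipulation that each formal $\allocKywd$ represents a single region (unlike a genuine $\allocKywd$, which may represent several) is used; the coercion of $\mu_s$ into $\atd$ in clause~2 is benign and uses only the isomorphism $\lattice{T}\cong\abd{\lattice{T}}$.
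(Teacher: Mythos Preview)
Your proposal is correct and aligns with the paper's approach: the paper states the lemma immediately after \Cref{def:cor-abconstate} with only the remark ``By the definition, we have'', treating it as self-evident, whereas you have spelled out the component-by-component verification that the five clauses partition and fully determine $\conf{\abrho_{\abs},\abmu_{\abs},\vmem_{\abs},\tmem_{\abs}}$. Your discussion of the bijection $\abaddr{\tau}\restriction\initmem$ and of the default values for cells outside $\initmem$ makes explicit points the paper leaves implicit; there is no substantive divergence in method.
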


\begin{definition}[Corresponding Speculative Abstract State Trace]\label{def:cor-abcontrace}
    We call a trace of abstract states $\abconftrace$
    the corresponding speculative abstract state trace of a concrete state trace $\tau=(p, s_1)\xrightarrow[d_1]{o_1} (p, s_2)\cdots \xrightarrow[d_{n-1}]{o_{n-1}}(p, s_n)$
    when
    \begin{enumerate}
        \item $\abs_1$ is the corresponding initial abstract state of $s_1$.
        \item For $1\le i\le n$, $\abrho_{\abs_i}(\pc)=\rho_{s_i}(\pc)$.
    \end{enumerate}
\end{definition}

\begin{lemma}\label{lemma:cor-trace-conf}
    Each concrete state trace $\tau$ has a unique corresponding speculative abstract state trace.
\end{lemma}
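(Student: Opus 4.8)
The statement asserts existence and uniqueness of the corresponding speculative abstract state trace for each concrete state trace $\tau$. The plan is to construct the abstract trace inductively along the concrete trace and to verify at each step that (a) the abstract transition is forced (uniqueness) and (b) it is in fact available (existence), relying on \Cref{lemma:cor-init-conf} for the base case and on the matching of program counters in \Cref{def:cor-abcontrace}.

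First I would set $\abs_1$ to be the corresponding initial abstract state of $s_1$; by \Cref{lemma:cor-init-conf} this object exists and is unique, discharging both the base of the induction and clause (1) of \Cref{def:cor-abcontrace}. Then, assuming $\abs_i$ has been constructed with $\abrho_{\abs_i}(\pc) = \rho_{s_i}(\pc)$, I would do a case analysis on $p(\rho_{s_i}(\pc))$, exactly paralleling the case split in \Cref{app:sec:ass}. For each instruction form, the abstract speculative semantics in \Cref{fig:tsemantics:spec:ab} gives a rule whose premises are fully determined by $\abs_i$ and $p$ (the expressions $\exprEval{e}{\abrho}$, $\exprEval{e}{\abmu}$, the load/store on $\vmem$, $\tmem$, etc., are all functional in $\abs_i$), so the successor $\abs_{i+1}$ and observation $\abd{o_i}$ are uniquely determined — this gives uniqueness. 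The one subtlety is $\jzKywd$: the abstract speculative semantics has two rules, \Cref{tr:app:tsemantic:spec:branch-true} and \Cref{tr:app:tsemantic:spec:branch-false}, setting $\pc$ to $l$ or to $\rho_{\abs_i}(\pc)+1$ respectively. Here I would use clause (2) of \Cref{def:cor-abcontrace}: the concrete trace already records $\rho_{s_{i+1}}(\pc)$, which is either $l$ or $\rho_{s_i}(\pc)+1$, and that value selects which of the two abstract branch rules we must apply. Note this selection is the same regardless of whether the concrete step used $\DirStep$ or $\DirForce$, since both concrete branch rules land on one of those two $\pc$ values and the abstract semantics does not track the speculative flag. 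The $\fenceKywd$ case is analogous: \Cref{tr:app:tsemantic:spec:fence} either advances $\pc$ or sets it to $\bot$, and the concrete $\pc_{s_{i+1}}$ disambiguates.

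For existence I would check that the abstract rule selected above always applies, i.e., its premises are satisfiable: they are, because the abstract domains are total (every expression evaluates, every memory access returns a value, possibly $\topof{V}$) and the abstract memory model from \Cref{sec:memory-model} is defined on all inputs. I would also verify that the constructed trace satisfies $\abrho_{\abs_{i+1}}(\pc) = \rho_{s_{i+1}}(\pc)$ for the next induction step: for non-branch, non-fence instructions both concrete and abstract rules increment $\pc$ by one (or, for $\jmpKywd$, set it to $l$); for branches and fences this was precisely arranged by the disambiguation argument above. Stringing these steps together yields a unique abstract trace $\abd{\tau}$ meeting \Cref{def:cor-abcontrace}.

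I expect the main obstacle to be purely bookkeeping rather than conceptual: one must carefully confirm that for \emph{every} instruction form the abstract successor is a function of the current abstract state and the program (so that no choice is left open beyond the $\pc$-disambiguation for $\jzKywd$ and $\fenceKywd$), and that the concrete trace's recorded $\pc$ sequence is always consistent with exactly one abstract rule at each branch/fence. A minor point worth stating explicitly is why the presence of $\DirForce$ directives in $\tau$ causes no ambiguity on the abstract side — because abstract speculative semantics collapses the step/force distinction into the two branch rules, and $\rho_{s_{i+1}}(\pc)$ already encodes which path was taken. Once that is noted, the induction goes through mechanically.
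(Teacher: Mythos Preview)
Your proposal is correct and follows essentially the same inductive construction as the paper: fix $\abs_1$ via \Cref{lemma:cor-init-conf}, then step along $\tau$ using the abstract speculative rules, disambiguating at $\jzKywd$ by matching $\abrho_{\abs_{i+1}}(\pc)$ to the concrete $\rho_{s_{i+1}}(\pc)$. Your treatment is in fact slightly more careful than the paper's, which lumps $\fenceKywd$ in with the ``unique rule'' case even though the abstract speculative semantics has two fence rules (\Cref{tr:app:tsemantic:spec:fence} and the blocking variant); your explicit $\pc$-disambiguation for $\fenceKywd$ and your remark that the $\DirStep$/$\DirForce$ distinction collapses on the abstract side are both worth keeping.
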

\begin{proof}
    Let \conctrace be a concrete state trace.
    We can construct  an abstract state trace using mathematical induction.
    Let $\abs_1$ be the corresponding initial abstract state (such abstract state can be uniquely determined by \Cref{lemma:cor-init-conf}), confirming that the conclusion holds for $n=1$.
    Suppose the second requirement holds for $n=k$, we discuss the classification based on the value of pc for the case of $n=k+1$.

    If $p(\rho_{s_{k}}(\pc))$ is not a $\jzKywd$ instruction, then there is a unique rule for $p(\rho_{s_{k}}(\pc))$ in both concrete semantics (\Cref{fig:tsemantics}) and abstract speculative semantics (\Cref{fig:tsemantics:spec:ab}).
    By applying the corresponding rules, we obtain $\abs_{k+1}$.

    If $p(\rho_{s_{k}}(\pc))$ is a $\jzKywd$ instruction. 
    Suppose $p(\rho_{s_{k}}(\pc)) = \pbranch{x}{l}$, then $\rho_{s_{k+1}}(\pc)$ can be either $\rho_{s_{k}}(\pc)+1$ or $l$.
    When $\rho_{s_{k+1}}(\pc) = \rho_{s_{k}}(\pc)+1$, we apply \Cref{tr:app:tsemantic:spec:branch-false} to get $\abs_{k+1}$; otherwise, we apply \Cref{tr:app:tsemantic:spec:branch-true}.
    In both cases we have $\rho_{s_{k}}(\pc)=\abrho_{\abs_{k+1}}(\pc)$, which implies that the second requirement holds for $n=k+1$.

    As can be seen from the construction, $\abs_{k+1}$ is uniquely determined when $s_{k+1}$ and $\abs_{k}$ is given. 
    Therefore, we obtain the unique corresponding abstract state trace of $\tau$. 
\end{proof}

With \Cref{lemma:cor-trace-conf}, we can define the abstract function \abff{\aconf}{\tau} by requiring $\abff{\aconf}{\tau}(s_i)=\abs_i$, where \abconftrace is the corresponding speculative abstract state trace of \conctrace.

\begin{lemma}[Local Soundness of Interval]\label{lemma:ls:interval}
    Let $\cof{\interval}$ be the concretization function.
    Given $z_1,z_2\in \integer$ and $I_1,I_2\in\interval$ s.t. $z_1\in I_1$ and $z_2\in I_2$, then for  $\op\in\bop\cup\uop$ we have
    $$
    \bopondomain{\op}{\integer}{z_1}{z_2}\in \cof{\interval}(\bopondomain{\op}{\interval}{I_1}{I_2})
    $$
\end{lemma}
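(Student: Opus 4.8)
The plan is to prove the lemma by a straightforward case analysis on the operator $\op \in \bop \cup \uop$, working from the explicit interval rules collected in \Cref{app:sec:vd}. The single observation that makes every case routine is that $\cof{\interval}([a,b]) = \{z \in \integer \mid a \le z \le b\}$, so in each case it suffices to do one of two things: either exhibit that $\bopondomain{\op}{\interval}{I_1}{I_2} = [\minI,\maxI]$, in which case the claim is vacuous because $\cof{\interval}([\minI,\maxI])$ is all of $\integer$; or show that the concrete result $\bopondomain{\op}{\integer}{z_1}{z_2}$ lies between the two endpoints of $\bopondomain{\op}{\interval}{I_1}{I_2}$. Throughout, write $I_1 = [a_1,b_1]$ and $I_2 = [a_2,b_2]$, so that $a_1 \le z_1 \le b_1$ and $a_2 \le z_2 \le b_2$.

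The first group of cases costs nothing. For $\divop$ and $\xorop$ the abstract result is unconditionally $[\minI,\maxI]$, and the overflow branches in the rules for $\addop$, $\minusop$, $\mulop$, $\lshiftop$, and for the bitwise operators applied to intervals straddling or below zero likewise produce $[\minI,\maxI]$; since the overflow guard in each rule is exactly the condition under which the exact integer result could leave $[\minI,\maxI]$, and any $\consn$-bit result is trivially in $[\minI,\maxI]$, nothing needs checking in those branches. The second group is the monotone arithmetic core. For $\notop$ I use the two's-complement identity $\uopondomain{\notop}{\integer}{z} = -1 - z$, so $a_1 \le z_1 \le b_1$ yields $-1-b_1 \le \uopondomain{\notop}{\integer}{z_1} \le -1-a_1$. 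For $\addop$ and $\minusop$, on the non-overflow branch the $\consn$-bit result coincides with the exact sum/difference, and monotonicity gives $a_1+a_2 \le z_1+z_2 \le b_1+b_2$ and $a_1-b_2 \le z_1-z_2 \le b_1-a_2$. For $\mulop$, on the non-overflow branch I invoke the classical fact that the product of a value in $[a_1,b_1]$ with one in $[a_2,b_2]$ always lands in $[\min C,\max C]$ for $C=\{a_1a_2,a_1b_2,b_1a_2,b_1b_2\}$, a consequence of multiplication being monotone in each argument once the sign of the other is fixed. For $\lshiftop$ with both intervals nonnegative and no overflow, left shift is monotone in both operands, so $a_1 \ll a_2 \le z_1 \ll z_2 \le b_1 \ll b_2$.

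The remaining cases — the bitwise $\andop$, $\orop$ and the right shifts $\rshlop$, $\rshaop$ — carry the actual work and are where I expect the main obstacle. They require sign-based case splits plus a handful of elementary bit-level bounds: for nonnegative $x,y$ the bitwise-and lies between $0$ and $\min(x,y)$ (clearing bits only decreases the value) and the bitwise-or is at least $\max(x,y)$ and at most $\maxI$ (both sign bits are $0$), which dispatch the nonnegative branches of $\andop$ (result $[0,\min(b_1,b_2)]$) and $\orop$ (result $[\max(a_1,a_2),\maxI]$); the mixed-sign branch of $\andop$ that subtracts $\uopondomain{\notop}{\integer}{}$ of an endpoint is then discharged via the same $-1-z$ identity together with monotonicity. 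For $\rshaop$ I use that arithmetic right shift is monotone in the shifted operand, and I would verify that the $\min/\max$ over the four endpoint combinations $C$ in the rule indeed brackets all intermediate shift amounts; $\rshlop$ then follows from its stated reduction to $\rshaop$ on the bit-reversed operand. The one case I would scrutinize most carefully is $\modop$, whose abstract result is declared to be $I_1$: this does not follow from monotonicity, and its soundness is not self-evident — e.g.\ if $I_1$ lies strictly above zero the remainder can fall below $\min I_1$. I would therefore re-read the definition for an implicit hypothesis on $I_1$ (such as $0 \in I_1$, or nonnegativity) or reconcile it with the adopted $\consn$-bit remainder convention before closing that case. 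Once every operator case is discharged in this way, the lemma follows.
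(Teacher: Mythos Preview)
Your approach is essentially the paper's: a case split on $\op$, with the arithmetic and overflow cases dismissed as routine monotonicity and the real work confined to the bitwise operators. The paper is in fact terser than you are---it declares everything except $\andop$ and $\orop$ ``standard and straightforward'' and only spells out the sign-based case split for $\andop$ (the ``AND can only clear bits'' argument you also use), then says $\orop$ is similar.

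Your scrutiny of $\modop$ is well placed and goes beyond the paper. The paper's rule $\bopondomain{\modop}{\interval}{I_1}{I_2}=I_1$ is exactly what is written in \Cref{app:sec:vd}, and the paper's proof simply folds this case into the ``straightforward'' ones without comment. Your counterexample sketch (an $I_1$ lying strictly above $0$) shows that the rule as stated is not sound in general, so the lemma as written does not actually hold for that operator under the given definition. This is a genuine gap in the paper rather than in your proposal: either the intended rule is $[\minI,\maxI]$ (or some bound involving $I_2$), or there is an unstated hypothesis on the inputs. You are right not to close that case without resolving the definition.
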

\begin{proof}
    For $\op\in\uop\cup\bop/\{\orop,\andop\}$, the operation rules are standard and the proofs are straightforward.

    For $\op=\andop$, we only consider the case where $I_1$ does not contain negative integers (thus $z_1$ is a non-negative integer).
    \begin{itemize}
        \item If $I_2$ does not contain negative integers, $z_2$ is a non-negative integer.
        
        Considering that the \andop operation can turn certain $1$ bits in the operands to $0$ but never turn $0$ bits to $1$, we have $\bopondomain{\andop}{\integer}{z_1}{z_2} \le \min(z_1,z_2)$.
        This further implies $\bopondomain{\andop}{\integer}{z_1}{z_2}\in\cof{\interval}(\bopondomain{\andop}{\interval}{I_1}{I_2})$.
        \item If $I_2$ does not contain non-negative integers, $z_2$ is a negative integer.
         
         $\bopondomain{\andop}{\integer}{z_1}{z_2}$ shares the same sign bit with $z_1$.
         Similarly, zeros in $z_2$  will clear the corresponding ones in $z_1$, thus $\bopondomain{\andop}{\integer}{z_1}{z_2} \le z_1-z_2$.
         This further implies $\bopondomain{\andop}{\integer}{z_1}{z_2}\in\cof{\interval}(\bopondomain{\andop}{\interval}{I_1}{I_2})$.
         \item If $I_2$  contains both negative and  non-negative integers, the conclusion is straightforward.
    \end{itemize}

    The case of $\op=\orop$ can be proven using a similar approach.
\end{proof}

The local soundness of disjoint interval set can be derived from \Cref{lemma:ls:interval}.
\begin{lemma}[Local Soundness of Disjoint Interval Set]\label{lemma:ls:dis}
    Let $\cof{\di}$ be the concretization function.
    Given $z_1,z_2\in \integer$ and $d_1,d_2\in\interval$ s.t. $\indi{z_1}{d_1}$ and $\indi{z_2}{d_2}$, then for $\op\in\bop\cup\uop$ we have
    $$
    \bopondomain{\op}{\integer}{z_1}{z_2}\in\cof{\di}(\bopondomain{\op}{\di}{d_1}{d_2})
    $$
\end{lemma}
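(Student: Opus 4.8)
The plan is to reduce the statement to the already-established local soundness of the interval domain (\Cref{lemma:ls:interval}), exploiting the fact that, by \Cref{def:di}, every disjoint interval set is a finite union of intervals and that the operator rules on $\di$ are — with the single exception of multiplication — inherited interval-wise from $\interval$. (I note that the hypothesis should read $d_1,d_2\in\di$; this is a harmless typo in the statement.)

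First I would unfold the hypotheses: since $\indi{z_1}{d_1}$ there is an interval $I_1=[a_1,b_1]\in d_1$ with $z_1\in I_1$, and likewise $I_2=[a_2,b_2]\in d_2$ with $z_2\in I_2$. Then, for every $\op\in(\bop\cup\uop)\setminus\{\mulop\}$, I would invoke the definition of $\bopondomain{\op}{\di}{d_1}{d_2}$ as the least upper bound in $\di$ of the interval-wise results, which in particular gives $\di(\bopondomain{\op}{\interval}{I_1}{I_2})\order{\di}\bopondomain{\op}{\di}{d_1}{d_2}$. By \Cref{lemma:ls:interval}, $\bopondomain{\op}{\integer}{z_1}{z_2}\in\cof{\interval}(\bopondomain{\op}{\interval}{I_1}{I_2})$; since $\abf{\di}$ and $\cof{\di}$ form an isomorphism with $\powerset{\integer}$ (so $\cof{\di}\circ\di$ is the identity on integer sets and $\cof{\di}$ refines $\cof{\interval}$), and $\cof{\di}$ is monotone, this yields $\bopondomain{\op}{\integer}{z_1}{z_2}\in\cof{\di}(\bopondomain{\op}{\di}{d_1}{d_2})$, as required.

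The remaining — and only delicate — case is $\op=\mulop$, since its rule on $\di$ is asymmetric and carries the dedicated clause for when one operand is a singleton set $\{[l,l]\}$. I would split into three subcases: (i) if $d_2=\{[l,l]\}$, then $z_2=l$ and, by the explicit definition, $\bopondomain{\mulop}{\di}{d_1}{d_2}=\di(\{n\mid n\in\bopondomain{\mulop}{\interval}{[a_{1,i},b_{1,i}]}{[l,l]}\text{ for some }i\})$; picking the interval $[a_{1,i},b_{1,i}]\in d_1$ that contains $z_1$ and applying \Cref{lemma:ls:interval} gives $z_1\cdot l\in\cof{\di}(\bopondomain{\mulop}{\di}{d_1}{d_2})$; (ii) symmetrically when $d_1$ is such a singleton; (iii) otherwise the $\di$-result falls back to the interval rule applied to the convex hulls of $d_1$ and $d_2$, and since $z_1$ and $z_2$ lie in those hulls the conclusion again follows from \Cref{lemma:ls:interval} together with monotonicity of $\cof{\di}$.

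The main obstacle I anticipate is bookkeeping around the multiplication clause: matching the general (non-singleton) multiplication case exactly to the intended definition of $\bopondomain{\mulop}{\di}{\cdot}{\cdot}$ (convex-hull fallback versus pairwise lub) and verifying that $\cof{\di}\circ\di$ is the identity on integer sets, so that soundness of the interval computation transfers verbatim. Everything outside multiplication is a routine appeal to \Cref{lemma:ls:interval}, monotonicity of $\cof{\di}$, and the interval-wise definition of the $\di$-operators.
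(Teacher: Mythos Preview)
Your proposal is correct and follows the same approach as the paper, which simply states that the result ``can be derived from \Cref{lemma:ls:interval}'' without further detail. You have in fact filled in considerably more than the paper does---including the case split on $\mulop$ and the typo fix $d_1,d_2\in\di$---so there is nothing to add.
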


Furthermore, we have the local soundness of the abstract value domain.
\begin{lemma}[Local Soundness of Value Domain]\label{lemma:ls:vd}
    Let $\confvd$ be the concretization function.
    Given $z_1,z_2\in\integer$ and $\vd_1,\vd_2\in\VD$ s.t. $z_1\in\confvd(\vd_1)$ and $z_2\in\confvd(\vd_2)$
    $\op\in\bop\cup\uop$, we have $$\bopondomain{\op}{\integer}{z_1}{z_2}\in \confvd(\bopondomain{\op}{\VD}{\vd_1}{\vd_2})$$
\end{lemma}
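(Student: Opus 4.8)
The plan is to reduce the statement, in every case, to the already-established local soundness of disjoint interval sets (\Cref{lemma:ls:dis}), by carrying out exactly the case split used to define $\bopondomain{\op}{\VD}{\vd_1}{\vd_2}$: first on the operator $\op$, and then on whether $\vd_1$ and $\vd_2$ are abstract numbers. Two preliminary facts make most of the cases routine: (i) $\confvd(\topof{\VD}) = \integer$, because $\topof{\VD}$ maps $\emptysym$ to the full interval $[\minI,\maxI]$ and hence covers every $\consn$-bit integer; and (ii) if $\vd$ is an abstract number then $\confvd(\vd) = \{\, i \mid \indi{i}{\vd(\emptysym)}\,\}$, since $\vd(b)=\emptyset$ for all $b\in\base_p$. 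Fact (i) dispatches every branch whose output is $\topof{\VD}$ — in particular $\op\in\bop\cup\uop\setminus\{\addop,\minusop,\andop\}$ with at least one non-number operand, the ``neither operand is an abstract number'' branches of $\addop$ and $\minusop$, and the $0$-straddling branch of $\andop$.

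For the ``both abstract numbers'' branch of every operator, fact (ii) gives $\indi{z_1}{\vd_1(\emptysym)}$ and $\indi{z_2}{\vd_2(\emptysym)}$, and \Cref{lemma:ls:dis} then yields $\bopondomain{\op}{\integer}{z_1}{z_2}\in\gamma^{\di}\bigl(\bopondomain{\op}{\di}{\vd_1(\emptysym)}{\vd_2(\emptysym)}\bigr)=\gamma^{\di}(\vd(\emptysym))$, so fact (ii) applied to $\vd$ closes the case (this also covers the unary operator $\notop$). For the additive operators with exactly one abstract-number operand, say $\vd_2$, where $\vd(x)=\bopondomain{\addop}{\di}{\vd_1(x)}{\vd_2(\emptysym)}$ for every $x$, I would split on which disjunct of the definition of $\confvd$ witnesses $z_1\in\confvd(\vd_1)$: if $\indi{z_1}{\vd_1(\emptysym)}$ then \Cref{lemma:ls:dis} puts $z_1+z_2$ into $\gamma^{\di}(\vd(\emptysym))$; if $z_1=n+m$ with $b\in\base_p$, $n\in\basemap_\tau(b)$ and $\indi{m}{\vd_1(b)}$, then $z_1+z_2=n+(m+z_2)$ with $m+z_2\vdash\bopondomain{\addop}{\di}{\vd_1(b)}{\vd_2(\emptysym)}=\vd(b)$. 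Either way $z_1+z_2\in\confvd(\vd)$ by the definition of $\confvd$. Subtraction is symmetric, with the built-in restriction that only the subtrahend $\vd_2$ may be the abstract number.

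The remaining work, and the step I expect to be the main obstacle, is the $\andop$ branch in which $\vd_2$ is a non-negative abstract number while $\vd_1$ carries genuine base--offset pairs, where the output keeps only the ``$\vd_2$-side'' information. Here one cannot argue componentwise over the bases as in the additive case; instead the argument must use the bit-level monotonicity fact that $0\le x\mathbin{\andop}y\le y$ whenever $y\ge 0$, reconciled with the interval rule $\bopondomain{\andop}{\interval}{[a_1,b_1]}{[a_2,b_2]}=[0,\min(b_1,b_2)]$ (valid for $a_2\ge 0$), so that $z_1\mathbin{\andop}z_2$ provably lands in $\confvd(\vd)$. Care is also needed because $z_1$ itself may be witnessed either by $\vd_1(\emptysym)$ or by a base--offset pair, and in the latter case $z_1$ is only pinned down modulo the unknown concrete value of the base; getting this interaction exactly right — in particular verifying that the rule as stated for this branch is genuinely an over-approximation — is where the real content of the proof lies. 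Once that sub-case is settled, collecting the cases gives the lemma.
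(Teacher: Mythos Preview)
Your approach is precisely what the paper intends: it states the lemma without proof as an immediate consequence of \Cref{lemma:ls:dis}, and the case split along the definition of $\bopondomain{\op}{\VD}{\cdot}{\cdot}$ is the natural way to spell this out. The additive and ``both abstract numbers'' branches go through exactly as you describe, and the minor imprecision about the $0$-straddling $\andop$ branch (it does not literally return $\topof{\VD}$, but an abstract number with $\vd(\emptysym)=\topof{\di}$) is harmless since the concretization is still all of $\integer$ via the $\emptysym$ component.

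There is, however, a genuine gap: you never treat the third $\andop$ sub-branch, where $\vd_2(\emptysym)$ contains only negative integers and the rule sets $\vd(x)=\bopondomain{\andop}{\di}{\vd_1(x)}{\vd_2(\emptysym)}$ for every $x$. This looks structurally like the additive case, but it is not: if $z_1=n+m$ with $n\in\basemap_\tau(b)$ and $\indi{m}{\vd_1(b)}$, you would need $(n+m)\mathbin{\andop}z_2$ to be expressible as $n+m'$ with $\indi{m'}{\bopondomain{\andop}{\di}{\vd_1(b)}{\vd_2(\emptysym)}}$, and bitwise $\andop$ does not distribute over addition in this way. Your caution about the non-negative sub-branch is equally well-placed: from $0\le z_1\mathbin{\andop}z_2\le z_2$ you only get $z_1\mathbin{\andop}z_2\in[0,b]$ when $\vd_2(\emptysym)=\{[a,b]\}$, which does not land in $\confvd(\vd_2)$ if $a>0$. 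So the ``real content'' you anticipate is not just a matter of care---both of these $\andop$ sub-branches require either an additional hypothesis on the shape of $\vd_1$ or $\vd_2$, or an adjustment of the rule, before the lemma as stated can be closed.
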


By \Cref{lemma:ls:vd} and the definition of \confvd, we have
\begin{lemma}\label{lemma:expr-vd}
    For $\rho:\Reg\to\integer$ and $\abrho:\Reg\to\avd$, let \confvd be a concretization function.
    If $\rho(x)\in \confvd(\abrho(x))$ holds for any $x\in\Reg$, then for any expression $e$, 
    $\exprEval{e}{\rho}\in \confvd(\exprEval{e}{\abrho})$.
\end{lemma}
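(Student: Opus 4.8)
The plan is to prove \Cref{lemma:expr-vd} by structural induction on the expression $e$, using \Cref{lemma:ls:vd} (local soundness of the value domain) to discharge the operator cases. Throughout, the concretization function $\confvd = \conff{\VD}{\tau}$ is fixed (the trace $\tau$, hence $\basemap_\tau$, does not change), so there is no subtlety about which concretization is meant at each step.

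For the base cases: when $e = n$ is a constant, $\exprEval{n}{\rho} = n$ while $\exprEval{n}{\abrho}$ is the abstract value $\vd$ with $\vd(\emptysym) = \{[n,n]\}$ and $\vd(b) = \emptyset$ for every $b \in \base_p$; by the definition of $\conff{\VD}{\tau}$ we have $n \in \{ i \mid \indi{i}{\vd(\emptysym)} \} \subseteq \confvd(\vd)$, so $\exprEval{n}{\rho} \in \confvd(\exprEval{n}{\abrho})$. When $e = x$ is a register, $\exprEval{x}{\rho} = \rho(x)$ and $\exprEval{x}{\abrho} = \abrho(x)$, and the hypothesis $\rho(x) \in \confvd(\abrho(x))$ is exactly the claim.

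For the inductive step, suppose the claim holds for $e_1$ and $e_2$. If $e = \ominus e_1$, then $\exprEval{e}{\rho} = \bopondomain{\ominus}{\integer}{\exprEval{e_1}{\rho}}{}$ and $\exprEval{e}{\abrho} = \bopondomain{\ominus}{\VD}{\exprEval{e_1}{\abrho}}{}$; by the induction hypothesis $\exprEval{e_1}{\rho} \in \confvd(\exprEval{e_1}{\abrho})$, so \Cref{lemma:ls:vd} (in its unary form) gives $\exprEval{e}{\rho} \in \confvd(\exprEval{e}{\abrho})$. The binary case $e = e_1 \otimes e_2$ is identical: the induction hypothesis yields $\exprEval{e_1}{\rho} \in \confvd(\exprEval{e_1}{\abrho})$ and $\exprEval{e_2}{\rho} \in \confvd(\exprEval{e_2}{\abrho})$, and \Cref{lemma:ls:vd} applied to $\otimes$ with these two memberships gives $\bopondomain{\otimes}{\integer}{\exprEval{e_1}{\rho}}{\exprEval{e_2}{\rho}} \in \confvd(\bopondomain{\otimes}{\VD}{\exprEval{e_1}{\abrho}}{\exprEval{e_2}{\abrho}})$, which is the desired conclusion. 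This closes the induction.

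The proof is essentially routine once \Cref{lemma:ls:vd} is in hand, so there is no real obstacle at this level; the only point requiring care is that the abstract evaluation $\exprEval{e}{\abrho}$ is defined exactly through the abstract operators $\bopondomain{\op}{\VD}{\cdot}{\cdot}$ for which local soundness was proved, and that the constant case is handled by the $\emptysym$-component of the abstract value rather than by an operator rule. The genuinely delicate work all lives in \Cref{lemma:ls:vd} itself (especially the multiplication and bitwise-\andop rules and the handling of abstract numbers versus base-offset pairs), which we take as already established.
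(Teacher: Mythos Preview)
Your proposal is correct and matches the paper's approach: the paper simply states that the lemma follows ``by mathematical induction to the length of $e$'' after noting it is a consequence of \Cref{lemma:ls:vd} and the definition of $\confvd$. Your structural induction with explicit base cases (constants and registers) and inductive steps (unary and binary operators via \Cref{lemma:ls:vd}) is exactly the argument the paper leaves implicit.
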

The proof can be derived by using mathematical induction to the length of $e$.

Similarly, by \Cref{lemma:ls:td} we have
\begin{lemma}\label{lemma:expr-td}
    For $\mu:\Reg\to\tdn$ and $\abmu:\Reg\to\atd$, let $\cof{\atd}$ be a concretization function.
    If $\mu(x)\in \cof{\atd}(\abmu(x))$ holds for any $x\in\Reg$, then for any expression $e$, 
    $\exprEval{e}{\mu}\in \cof{\atd}(\exprEval{e}{\abmu})$.
\end{lemma}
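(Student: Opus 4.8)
The statement is the taint-domain counterpart of \Cref{lemma:expr-vd}, and, as for that lemma, the natural route is structural induction on the expression $e$ (following the grammar $e \coloneqq n \mid x \mid \ominus e \mid e_1 \otimes e_2$). First I would dispatch the two base cases. If $e$ is a literal $n$, then $\exprEval{n}{\mu}$ and $\exprEval{n}{\abmu}$ are, by definition, the taint vectors whose entries are the concrete labels $\tzero,\tone$ (respectively $\abd{\tzero},\abd{\tone}$) determined by the binary representation of $n$; since $\tzero\in\abd{\tzero}$ and $\tone\in\abd{\tone}$, the containment $\exprEval{n}{\mu}\in\cof{\atd}(\exprEval{n}{\abmu})$ is immediate. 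If $e=x$ with $x\in\Reg$, then $\exprEval{x}{\mu}=\mu(x)$ and $\exprEval{x}{\abmu}=\abmu(x)$, so the claim is exactly the hypothesis.

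For the inductive step, consider $e=e_1\otimes e_2$ (the unary case $e=\ominus e'$ is the same argument with a single operand). By the induction hypothesis, $\exprEval{e_i}{\mu}\in\cof{\atd}(\exprEval{e_i}{\abmu})$ for $i=1,2$, so it suffices to prove a pointwise local-soundness statement for each operator $\op\in\uop\cup\bop$: if $t_i\in\cof{\atd}(\abt_i)$ then $\bopondomain{\op}{\tdn}{t_1}{t_2}\in\cof{\atd}(\bopondomain{\op}{\atd}{\abt_1}{\abt_2})$. I would obtain this by specializing \Cref{lemma:ls:td}: put $\ctde_i=\cof{\atd}(\abt_i)$; since the operator on \ctd{} is the elementwise lift of the one on \tdn{}, we have $\bopondomain{\op}{\tdn}{t_1}{t_2}\in\bopondomain{\op}{\ctd}{\ctde_1}{\ctde_2}$; then combining \Cref{lemma:gc:td} with \Cref{lemma:ls:td}, monotonicity of $\cof{\atd}$ and $\abf{\atd}$, and the fact that $\abf{\atd}\circ\cof{\atd}$ is the identity on $\atd$, one gets $\bopondomain{\op}{\ctd}{\ctde_1}{\ctde_2}\order{\ctd}\cof{\atd}(\bopondomain{\op}{\atd}{\abt_1}{\abt_2})$. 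Instantiating with $t_i=\exprEval{e_i}{\mu}$ and $\abt_i=\exprEval{e_i}{\abmu}$ completes the step, and hence the induction.

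The bulk of the argument is bookkeeping; the only point that needs care is the passage from the ``collecting'' form of local soundness in \Cref{lemma:ls:td} to the pointwise form used above, which is exactly the short step handled via \Cref{lemma:gc:td} in the previous paragraph. A fully self-contained derivation of the pointwise statement is also available: because \lattice{T} and $\abd{\lattice{T}}$ are isomorphic and $\cof{\atd}$ acts entrywise through that isomorphism, pointwise local soundness can be checked directly against the operator rules of \Cref{app:sec:taint-operation}, essentially re-running the well-definedness reasoning of \Cref{thm:well-defined} with sets of labels in place of single labels.
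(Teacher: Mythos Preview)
Your proposal is correct and follows essentially the same approach as the paper: structural induction on $e$ with the inductive step driven by the local soundness lemma (\Cref{lemma:ls:td}). The paper merely states ``similarly, by \Cref{lemma:ls:td}'' and defers to the induction scheme of \Cref{lemma:expr-vd}; your write-up simply makes the base cases and the collecting-to-pointwise conversion via \Cref{lemma:gc:td} explicit.
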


Combining \Cref{lemma:ls:td} and \Cref{lemma:ls:vd}, we have the local soundness of abstract state's speculative transitions.

\begin{lemma}[Local Soundness of Abstract State's Speculative Transition]\label{lemma:ls:spec}
    Let \conctrace be a concrete state trace and \abconftrace be its corresponding speculative abstract state trace.
    For $1\le i \le (n-1)$, we have $$s_i\in\confs(\abs_i) \implies s_{i+1}\in\confs(\abs_{i+1})$$
\end{lemma}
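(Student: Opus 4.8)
The plan is to prove the lemma by a case analysis on the instruction $p(\rho_{s_i}(\pc))$ executed at step $i$. Since $\abd{\tau}$ is the \emph{corresponding} speculative abstract state trace of $\tau$ (\Cref{def:cor-abcontrace}), we have $\abrho_{\abs_i}(\pc)=\rho_{s_i}(\pc)$, so the concrete step $s_i\to s_{i+1}$ and the abstract step $\abs_i\to\abs_{i+1}$ are governed by the matching rules for the \emph{same} instruction in \Cref{fig:tsemantics} and \Cref{fig:tsemantics:spec:ab} (for $\jzKywd$, by the branch rule selected in the proof of \Cref{lemma:cor-trace-conf}, which keeps the $\pc$ values in agreement). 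Assuming $s_i\in\confs(\abs_i)$, in each case I must verify that $\rho_{s_{i+1}}$, $\mu_{s_{i+1}}$ and the two concrete memory maps lie, register by register and cell by cell, in the concretizations of $\abrho_{\abs_{i+1}}$, $\abmu_{\abs_{i+1}}$, $\vmem_{\abs_{i+1}}$ and $\tmem_{\abs_{i+1}}$. Two facts will be used throughout: $\confvd$ and $\cof{\atd}$ are monotone; and because $\tau$ is fixed, $\basemap_\tau$ and $\abaddr{\tau}$ are fixed, so the concretization functions attached to $\abs_i$ and $\abs_{i+1}$ are literally the same, and no pre/post mismatch can arise.

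For the bookkeeping instructions the checks are routine. For $\passign{x}{e}$, the new value $\exprEval{e}{\rho_{s_i}}\in\confvd(\exprEval{e}{\abrho_{\abs_i}})$ by \Cref{lemma:expr-vd} and the new taint $\exprEval{e}{\mu_{s_i}}\in\cof{\atd}(\exprEval{e}{\abmu_{\abs_i}})$ by \Cref{lemma:expr-td}, nothing else changing. For $\pjmp{l}$, $\pfence$, and $\jzKywd$, only $\pc$ changes in both worlds and $\pc$ agrees by construction, so the conclusion is immediate (the speculative branch rules refine no register, so there is nothing to lose). For $\pcondassign{x}{e}{e'}$, the abstract rule assigns $\abrho_{\abs_i}(x)\sqcup\exprEval{e}{\abrho_{\abs_i}}$ and $\abmu_{\abs_i}(x)\sqcup\exprEval{e}{\abmu_{\abs_i}}$ (or $\vec{\thigh}$), which by the two expression lemmas and the join over-approximates both the ``assigned'' and the ``unchanged'' concrete outcomes. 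For $\palloc{x}{n}$, the concrete rule sets $\rho_{s_{i+1}}(x)=\rho_{s_i}(\mem)$, and by the definitions of $\basemap_\tau$ and $\abaddr{\tau}$ this address is the base recorded for the symbol $\abrho_{\abs_i}(\pc)\in\base_p$, so $\rho_{s_i}(\mem)\in\confvd\bigl(\botof{\VD}[\abrho_{\abs_i}(\pc)\mapsto\{[0]\}]\bigr)=\confvd(\abrho_{\abs_{i+1}}(x))$; the taint $\vec{\tlow}$ and the update of $\mem$ coincide exactly. For $\pload{x}{e}$, set $n=\exprEval{e}{\rho_{s_i}}$; by \Cref{lemma:expr-vd}, $n\in\confvd(\exprEval{e}{\abrho_{\abs_i}})$, so \Cref{lemma:sound:memory} gives $\rho_{s_{i+1}}(x)=\rho_{s_i}(n)\in\confvd\bigl(\abload{\vmem}(\exprEval{e}{\abrho_{\abs_i}})\bigr)=\confvd(\abrho_{\abs_{i+1}}(x))$, and the loaded taint is handled by $\vec{\thigh}$ (whose concretization is everything) when $\thigh\in\exprEval{e}{\abmu_{\abs_i}}$, or else by the analogue of \Cref{lemma:sound:memory} for $\tmem$.

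The main obstacle is $\pstore{x}{e}$, which I would handle last. The concrete rule writes $\rho_{s_i}(x)$ to the concrete address $n=\exprEval{e}{\rho_{s_i}}$ and its taint to $\mu_{s_i}(n)$, while the abstract store either (i) sends all of $\vmem$ and $\tmem$ to $\topof{\VD}$ and $\topof{\atd}$ when $\exprEval{e}{\abrho_{\abs_i}}(\emptysym)\neq\emptyset$ or when that abstract pointer is out of bounds for some base --- in which case every concrete cell trivially concretizes and soundness is preserved --- or (ii) updates each in-bounds cell $(b,m)$ by the \emph{join} $\vmem_{\abs_i}(b,m)\sqcup\abrho_{\abs_i}(x)$ (and likewise for $\tmem$ with $\abmu_{\abs_i}(x)$ or $\vec{\thigh}$). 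In case (ii) the two points to check are: first, $\abaddr{\tau}(n)$ is among the updated cells, because $n\in\confvd(\exprEval{e}{\abrho_{\abs_i}})$ forces $\abaddr{\tau}(n)=(b,m)$ with $m$ in bounds for $\exprEval{e}{\abrho_{\abs_i}}(b)$; hence, using $\rho_{s_i}(x)\in\confvd(\abrho_{\abs_i}(x))$, the new concrete contents of that cell lie below $\vmem_{\abs_i}(b,m)\sqcup\abrho_{\abs_i}(x)=\vmem_{\abs_{i+1}}(\abaddr{\tau}(n))$; second, every \emph{other} concrete cell still concretizes because a join only enlarges an abstract cell. The taint component is identical in shape, mirroring exactly the concrete update of $\mu_{s_i}$. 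Collecting the cases yields $s_{i+1}\in\confs(\abs_{i+1})$.
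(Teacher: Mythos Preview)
Your proposal is correct and follows essentially the same approach as the paper: a case analysis on the instruction at $\rho_{s_i}(\pc)$, invoking \Cref{lemma:expr-vd}, \Cref{lemma:expr-td}, and \Cref{lemma:sound:memory} for the nontrivial cases. Your treatment of $\storeKywd$ (splitting on whether $\abstore{}$ sends the whole memory to $\top$ versus updating cells by a join) and of $\allocKywd$ (arguing that the returned concrete base lies in $\confvd(\botof{\VD}[\abrho_{\abs_i}(\pc)\mapsto\{[0]\}])$) is in fact slightly more explicit than the paper's own write-up, but the underlying argument is the same.
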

\begin{proof}
    We proceed by case distinction on the transition rules defined in \Cref{fig:tsemantics} and \Cref{fig:tsemantics:spec:ab}.
    Since the values of $\pc$ and $\mem$ are already determined by \Cref{def:cor-abcontrace}, the conclusion holds naturally for \Cref{tr:app:tsemantic:branch-force}, \Cref{tr:semantic:branch-step}, \Cref{tr:app:tsemantic:jump} and \Cref{tr:app:tsemantic:fence}.
    For the remaining cases, 
\begin{enumerate}
    \renewcommand{\labelenumi}{\textbf{RULE}}
    \item \Cref{tr:app:tsemantic:assign}. Suppose $p(\rho(\pc)) = \passign{x}{e} $.  
    Then the transition \spectran{\absi}{\absione}{\absoi} is derived by applying \Cref{tr:app:tsemantic:spec:assign}.
    We have 
    \begin{align*}
        \rho_{s_{i+1}}(x) & = \exprEval{e}{\rho} \tag*{(By \Cref{tr:app:tsemantic:assign})}\\
                    & \in \confvd(\exprEval{e}{\abrho_{s_i}}) \tag*{(By $s_i\in\confs(\abs_i)$ and \Cref{lemma:expr-vd})}\\
                    & = \confvd(\abrho_{s_{i+1}}(x)) \tag*{(By \Cref{tr:app:tsemantic:spec:assign})}
    \end{align*}
    
    Similarly,  $\mu_{s_{i+1}}(x)\in \cof{\atd}(\abmu_{s_{i+1}}(x))$ holds.
    Thus $s_{i+1}\in\confs(\abs_{i+1})$.
    \item \Cref{tr:app:tsemantic:load}. Suppose $p(\rho(\pc)) = \pload{x}{e}$. Then the transition \spectran{\absi}{\absione}{\absoi} is derived by applying \Cref{tr:app:tsemantic:spec:load}. 
    We have
    \begin{align*}
        \rho_{s_{i+1}}(x) & =  \rho_{s_{i+1}}(\exprEval{e}{\rho_{s_i}})  \tag*{(By \Cref{tr:app:tsemantic:load})}\\
                        & \in \{\rho_{s_{i+1}}(n)\mid n \in \confs(\exprEval{e}{\abrho_{s_i}})\} \tag*{(By \statepremise and \Cref{lemma:expr-vd})}\\
                        & \subseteq \bigcup_{n \in \confvd(\exprEval{e}{\abrho_{s_i}})} \confvd(\vmem_{\abs_{i}}(\abaddr{\tau}(n))) \tag*{(By \statepremise)}\\
                        & \subseteq \confvd(\abload{\vmem_{\abs_{i}}}(\exprEval{e}{\abrho_{s_i}})) \tag*{(By \Cref{lemma:sound:memory})}\\
                        & = \confvd(\abrho_{\abs_{i+1}}(x)) \tag*{(By \Cref{tr:app:tsemantic:spec:load})}
    \end{align*}
    For taint tracking, if $\vec{\thigh}\in \exprEval{e}{\mu_{s_i}}$, we have $\abvechigh\in\exprEval{e}{\abmu_{s_i}}$.
    Then, $\mu_{s_{i+1}}(x)\in \cof{\atd}(\abvechigh) = \cof{\atd}(\abmu_{\abs_{i+1}}(x))$.

    If $\vec{\thigh}\notin \exprEval{e}{\mu_{s_i}}$, similar to value domain, we have 
    \begin{align*}
        \mu_{s_{i+1}}(x)  &=  \mu_{s_{i+1}}(\exprEval{e}{\rho_{s_i}})  
                         \in \{\mu_{s_{i+1}}(n)\mid n \in \confs(\exprEval{e}{\abrho_{s_i}})\} 
                         \\&\subseteq \bigcup_{n \in \confvd(\exprEval{e}{\abrho_{s_i}})} \cof{\atd}(\tmem{\abs_{i}}(\abaddr{\tau}(n)))
                         \\&\subseteq \cof{\atd}(\abload{\tmem_{\abs_{i}}}(\exprEval{e}{\abrho_{s_i}})) 
                         = \confvd(\abrho_{\abs_{i+1}}(x)) 
    \end{align*}
    Thus $s_{i+1}\in\confs(\abs_{i+1})$.
    \item \Cref{tr:app:tsemantic:store}. Suppose $p(\rho(\pc)) = \pstore{x}{e}$. Then the transition \spectran{\absi}{\absione}{\absoi} is derived by applying \Cref{tr:app:tsemantic:spec:store}. 
    
    Let $k = \exprEval{e}{\rho_{s_i}}$, and $(b,z)=\abaddr{\tau}$ where $b\in \base_p$ and $z\in \integer$.
    For $k'\in\Nat$ where $k' \neq k$, we have
    $$
    \rho_{s_{i+1}}(k')=\rho_{s_{i}}(k')\in \conff{\VD}{\tau}(\vmem_{\abs_{i}}(\abaddr{\tau}(k))) \subseteq \conff{\VD}{\tau}(\vmem_{\abs_{i+1}}(\abaddr{\tau}(k)))
    $$

    For $\rho_{s_{i+1}}(k)$, if $z\not\vdash \abrho_{s_i}(x)(b)$, then $k$ is represented as an out-of-bounds base-offset pair in $\abrho_{s_i}(x)$.
    Therefore, $\abstore{\vmem_{\abs_{i}}}(\exprEval{e}{\abrho_{s_i}}, \abrho_{s_i}(x))$ will set $\vmem_{\abs_{i+1}}(b,z)$ to \topof{\VD}. Then we have $\rho_{s_{i+1}}(k)\in \conff{\VD}{\tau}(\vmem_{\abs_{i+1}}(\abaddr{\tau}(k)))$.
    
    If $z\vdash \abrho_{s_i}(x)(b)$, We have
    \begin{align*}
        \rho_{s_{i+1}}(k) & =  \rho_{s_i}(x) \tag*{(By \Cref{tr:app:tsemantic:store})}\\
                        & \in \confvd(\abrho_{s_i}(x)) \tag*{(By \statepremise )}\\
                        & \subseteq \confvd(\abrho_{s_i}(x) \ubound{\VD} \vmem_{\abs_{i}}(b,z)) \tag*{(By the definition of \ubound{\VD} and the monotonicity of \confvd)}\\
                        & \subseteq \confvd(\vmem_{\abs_{i+1}}(b,z)) \tag*{(By the definition of \abstore{\vmem_{\abs_{i+1}}} and  \Cref{tr:app:tsemantic:spec:store})}
    \end{align*}
    Similarly,  $\mu_{s_{i+1}}(k)\in \cof{\atd}(\vmem_{\abs_{i+1}}(\abaddr{\tau}(k)))$ holds.
    Thus $s_{i+1}\in\confs(\abs_{i+1})$.
    \item \Cref{tr:semantic:conditional-assign}. Suppose $p(\rho(\pc)) = \pcondassign{x}{e}{e'}$. Then the transition \spectran{\absi}{\absione}{\absoi} is derived by applying \Cref{tr:app:tsemantic:spec:conditional-assign}.  We have
    \begin{align*}
    \rho_{s_{i+1}}(x)& \in \{\rho_{s_{i}}(x), \exprEval{e}{\rho_{s_i}}\} \tag*{(By \Cref{tr:semantic:conditional-assign})}\\
                    & \subseteq \confvd(\abrho_{\abs_i}(x))\cup \confvd(\exprEval{e}{\abrho_{\abs_i}}) \tag*{(By \statepremise and \Cref{lemma:expr-vd})}\\
                    & \subseteq \confvd(\abrho_{\abs_i}(x) \ubound{\VD} \exprEval{e}{\abrho_{\abs_i}}) \tag*{(By the monotonicity of \confvd)}\\
                    & = \confvd(\abrho_{\abs_{i+1}}(x)) \tag*{(By \Cref{tr:app:tsemantic:spec:conditional-assign})}
    \end{align*}
    Similarly,  $\mu_{s_{i+1}}(x)\in \cof{\abt}(\abmu{s_{i+1}}(x))$ holds.
    Thus $s_{i+1}\in\confs(\abs_{i+1})$.
    \item \Cref{tr:app:tsemantic:alloc}. Suppose $p(\rho(\pc)) = \pcondassign{x}{e}{e'}$. Then the transition \spectran{\absi}{\absione}{\absoi} is derived by applying \Cref{tr:app:tsemantic:spec:alloc}.  We have
    $$
        \mu_{s_{i+1}}(x) = \vec{\tlow} \in \cof{\atd}(\abveclow) = \cof{\atd}(\abrho_{\abs_{i+1}}(x))
    $$
    Thus $s_{i+1}\in\confs(\abs_{i+1})$.
\end{enumerate}
\end{proof}

\begin{theorem}[Global Soundness of Abstract State's Speculative Transition]\label{thm:gs:acst}
    Let \conctrace be a state trace and \abconftrace be the corresponding  speculative abstract state trace.
    For $1\le i \le n$, we have $s_i\in\confs(\abs_i)$.
\end{theorem}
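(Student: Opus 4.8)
The plan is to prove \Cref{thm:gs:acst} by a straightforward induction on the trace index $i$, with the base case handled by the construction of the corresponding initial abstract state and the inductive step discharged by \Cref{lemma:ls:spec}.

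\emph{Base case ($i=1$).} By \Cref{def:cor-abcontrace}, $\abs_1$ is the corresponding initial abstract state of $s_1$ in the sense of \Cref{def:cor-abconstate}, which exists and is unique by \Cref{lemma:cor-init-conf}. I would verify $s_1\in\confs(\abs_1)$ directly against the definition of $\conff{\aconf}{\tau}$. The values of $\pc$ and $\mem$ in $s_1$ are $0$ and agree with $\abrho_{\abs_1}$; for every register $x$ the taint mapping satisfies $\abmu_{\abs_1}(x)=\mu_{s_1}(x)$ (via the isomorphism between \prolattice{T}{\consn} and \atd), so $\mu_{s_1}(x)\in\cof{\atd}(\abmu_{\abs_1}(x))$, and likewise $\mu_{s_1}(n)\in\cof{\atd}(\tmem_{\abs_1}(\abaddr{\tau}(n)))$ for $n\in\initmem$; and for every $x\in\Reg/\{\pc,\mem\}$ we have $\abrho_{\abs_1}(x)=\allinterger$ and $\vmem_{\abs_1}(\abaddr{\tau}(n))=\allinterger$ for $n\in\initmem$, whose concretization $\confvd$ is all of $\integer$, so the concrete values are covered. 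The one point needing care is memory addresses $n\notin\initmem$: by construction $\abaddr{\tau}(n)$ is an out-of-bounds base--offset pair, so by the memory-model convention of \Cref{sec:memory-model} both $\vmem_{\abs_1}(\abaddr{\tau}(n))$ and $\tmem_{\abs_1}(\abaddr{\tau}(n))$ equal the corresponding top element, whose concretization contains every value and every taint vector. Hence $s_1\in\confs(\abs_1)$.

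\emph{Inductive step.} Assume $s_i\in\confs(\abs_i)$ for some $1\le i\le n-1$. Since $\abconftrace$ is \emph{the} corresponding speculative abstract state trace of $\tau$ (unique by \Cref{lemma:cor-trace-conf}), the step $\spectran{\absi}{\absione}{\absoi}$ is exactly the abstract transition produced alongside the concrete step $s_i\to s_{i+1}$ in that construction, so \Cref{lemma:ls:spec} applies verbatim and yields $s_{i+1}\in\confs(\abs_{i+1})$. Chaining these implications from $i=1$ to $i=n$ establishes the theorem.

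The induction itself is mechanical; the only real work is in the base case, where one must check that $\conff{\aconf}{\tau}$ evaluated at the corresponding initial abstract state actually contains $s_1$ at \emph{every} memory address. This hinges on two facts established earlier: that out-of-bounds accesses in the memory model return the top element of the relevant domain, and that the base--offset representation $\abaddr{\tau}$ is well-defined, i.e.\ once $\tau$ is fixed it assigns a unique symbol in $\base_p$ and a unique offset to each address, including unallocated ones. Given those, no further obstacle arises, since the local-soundness work has already been carried out in \Cref{lemma:ls:spec}.
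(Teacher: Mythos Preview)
Your proposal is correct and follows essentially the same approach as the paper's proof: induction on $i$, with the base case coming from \Cref{def:cor-abconstate} and the inductive step delegated to \Cref{lemma:ls:spec}. The paper's own proof is terser---it simply asserts $s_1\in\confs(\abs_1)$ from \Cref{def:cor-abconstate} and then invokes induction---whereas you spell out the base-case verification against the clauses of $\conff{\aconf}{\tau}$ (including the out-of-bounds memory addresses), but the underlying argument is identical.
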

\begin{proof}
    By \Cref{def:cor-abconstate}, we have $s_1\in\confs(\abs_1)$.

    By the mathematical induction, we further have for $1\le i \le n$,  $s_i\in\confs(\abs_i)$.
\end{proof}

We have now obtained the global soundness of the abstract state trace. 
Based on this, we can further obtain the global soundness of the abstract configuration.
Similarly to the abstract state domain, we can define the concretization function of abstract configuration as:
$$
\conff{\astate}{\tau}(\abS)=\bigcup_{i\in p} \conff{\aconf}{\tau}(\abS(i))
$$

We define the predecessors of a location $i$ of the program $p$ as: $$\pred_p(i) = \{ j \in p\mid \text{$j = i - 1$ or $p(j)=\pbranch{x}{i}$ or $p(j)=\pjmp{i}$}\}$$ 
Note that $p(0)$ has no predecessors as it is the entry point of the program.
 
Then we define the speculative transition of an abstract configuration $\abS_1$ as $\spectran{\abS_1}{\abS_2}{}$, where $\abS_2$ is given by:
$$
    \abS_2(i)=\begin{cases}
        \abS_1(0)& i=0\\
        \bigsqcup\limits_{j\in\pred_p(i)}\{\abs_2\mid\spectran{\abS_1(j)}{\abs_2}{\abo}, \abrho_{\abs_2}(\pc)=i\}&i\neq 0
    \end{cases}
$$
This formula calculates the abstract states that are reached from each $p(i)$'s predecessor's abstract state, and computes the least upper bound of these abstract states as $p(i)$'s new abstract state.
It is obvious that $\spectran{}{}{}$ is a monotonic operator on \astate and $\abS_2$ is uniquely determined by $\abS_1$.

Then for a speculative trace of abstract configurations \abstatetrace where $\abS_1$ is an initial abstract configuration, since $\abS_1\order{\astate}\abS_2$,  $\{\abS_i\}$ is a monotonic list of abstract configurations.
Given that \astate is a finite domain, there exists an $i>0$ such that $\abS_i=\abS_j$ holds for all $j\ge i$, meaning that $\abS_i$ is a fixpoint of the operator $\spectran{}{}{}$.
We denote the fixpoint of \abd{\Pi} by $\fix^{\text{spec}}(\abS_1)$.

And we can define the \textit{corresponding initial abstract configuration} of an initial concrete state $s$.

\begin{definition}[Corresponding Initial Abstract Configuration]\label{def:cor-abstate}
    Let $s$ be an initial concrete state of trace $\tau$. 
    We call an initial abstract configuration $\abS$ a corresponding initial abstract configuration of $s$ when $\abfs(s)\order{\aconf}\abS(0)$.
\end{definition}

Now we have the soundness of our abstract speculative semantics.

\begin{theorem}[Soundness of Abstract Speculative Semantics]\label{thm:gs:spec}
    Let \conctrace be a speculative abstract state trace, $\abS$ be a corresponding initial abstract configuration of $s_1$, and $\conff{\astate}{\tau}$ be the concretization function. 
    We have 
    $$
    s_i\in \conff{\astate}{\tau}(\fixspec(\abS))
    $$
\end{theorem}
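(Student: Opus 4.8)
# Proof Proposal for Theorem (Soundness of Abstract Speculative Semantics)

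The plan is to lift the per-step and per-trace soundness results already established (\Cref{lemma:ls:spec}, \Cref{thm:gs:acst}) up to the level of abstract configurations, and then exploit the fixpoint property of the configuration-level transition operator \spectran{}{}{}. The key observation is that $\fixspec(\abS)$ is, by construction, a post-fixpoint of \spectran{}{}{} that dominates $\abS(0)$ at location $0$; I want to show that every concrete state $s_i$ occurring in the trace $\tau$ lands in $\conff{\aconf}{\tau}(\fixspec(\abS)(\rho_{s_i}(\pc)))$, which immediately gives the claim since $\conff{\astate}{\tau}$ is defined as the union over all locations of $\conff{\aconf}{\tau}$ of the per-location abstract states.

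First I would fix the concrete trace \conctrace and invoke \Cref{lemma:cor-trace-conf} to obtain its unique corresponding speculative abstract state trace \abconftrace, with $\abs_1 = \abfs(s_1)$. By \Cref{thm:gs:acst} we already have $s_i \in \confs(\abs_i)$ for all $1 \le i \le n$, so it suffices to prove, for each $i$, that $\abs_i \order{\aconf} \fixspec(\abS)(\rho_{s_i}(\pc))$; monotonicity of $\conff{\aconf}{\tau}$ then upgrades $s_i \in \conff{\aconf}{\tau}(\abs_i)$ to $s_i \in \conff{\aconf}{\tau}(\fixspec(\abS)(\rho_{s_i}(\pc)))$, and hence $s_i \in \conff{\astate}{\tau}(\fixspec(\abS))$. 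I would prove $\abs_i \order{\aconf} \fixspec(\abS)(\rho_{s_i}(\pc))$ by induction on $i$. For the base case $i = 1$: $\rho_{s_1}(\pc) = 0$ by \Cref{def:cor-abconstate}, $\abs_1 = \abfs(s_1) \order{\aconf} \abS(0)$ by \Cref{def:cor-abstate}, and $\abS(0) = \fixspec(\abS)(0)$ since the configuration transition leaves location $0$ unchanged and $\fixspec$ is obtained by iterating it. For the inductive step, suppose $\abs_i \order{\aconf} \fixspec(\abS)(\rho_{s_i}(\pc))$. Let $j = \rho_{s_i}(\pc)$ and $k = \rho_{s_{i+1}}(\pc)$. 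The one-step concrete transition $(p,s_i)\xrightarrow[d_i]{o_i}(p,s_{i+1})$ has a matching abstract-state transition \spectran{\abs_i}{\abs_{i+1}}{\abo_i} (this is exactly how the corresponding abstract trace was built in the proof of \Cref{lemma:cor-trace-conf}), and by construction $j \in \pred_p(k)$. By monotonicity of the abstract speculative transition rules in \Cref{fig:tsemantics:spec:ab} (each rule applies a monotone abstract operator — \exprEval{e}{\cdot}, \abload{}{}, \abstore{}{}, \sqcup — to the incoming abstract state), and using $\abs_i \order{\aconf} \fixspec(\abS)(j)$, there is an abstract state $\abs'$ with \spectran{\fixspec(\abS)(j)}{\abs'}{\abo'}, $\abrho_{\abs'}(\pc) = k$, and $\abs_{i+1} \order{\aconf} \abs'$. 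Since the configuration transition at location $k$ takes the least upper bound over all such transitions from all predecessors of $k$, we get $\abs' \order{\aconf} \spectran{\fixspec(\abS)}{}{}(k) = \fixspec(\abS)(k)$ because $\fixspec(\abS)$ is a fixpoint. Chaining, $\abs_{i+1} \order{\aconf} \fixspec(\abS)(k) = \fixspec(\abS)(\rho_{s_{i+1}}(\pc))$, closing the induction.

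I expect the main obstacle to be the monotonicity argument used in the inductive step: namely, that for every instruction rule in the abstract speculative semantics, if $\abs_1 \order{\aconf} \abs_1'$ and \spectran{\abs_1}{\abs_2}{\abo}, then there is $\abs_2'$ with \spectran{\abs_1'}{\abs_2'}{\abo'} and $\abs_2 \order{\aconf} \abs_2'$ (and matching program counters, so they feed into the same \pred slot). This requires checking, rule by rule, that expression evaluation \exprEval{e}{\abrho}, the memory operators \abload{\vmem}{}, \abstore{\vmem}{}, \abload{\tmem}{}, \abstore{\tmem}{}, and the lattice join \sqcup are all monotone in their arguments — most of this follows from the local-soundness/monotonicity facts already recorded (\Cref{lemma:ls:td}, \Cref{lemma:ls:vd}, \Cref{lemma:expr-vd}, \Cref{lemma:expr-td}, \Cref{lemma:sound:memory}) together with monotonicity of \sqcup, but the branch rules \Cref{tr:app:tsemantic:spec:branch-true} and \Cref{tr:app:tsemantic:spec:branch-false} need a little care because a single abstract state has two outgoing transitions, and one must make sure the induction picks the transition whose resulting \pc matches $\rho_{s_{i+1}}(\pc)$. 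A secondary subtlety is the role of the finiteness of \astate: it is only needed to guarantee that $\fixspec(\abS)$ exists as an actual fixpoint (so that the equality $\abs' \order{\aconf} \fixspec(\abS)(k)$ goes through); this is already argued in the paragraph preceding the theorem, so I would simply cite it. The remaining bookkeeping — that $\conff{\aconf}{\tau}$ and $\conff{\astate}{\tau}$ are monotone, and that the union defining $\conff{\astate}{\tau}$ absorbs the single location $\rho_{s_i}(\pc)$ — is routine.
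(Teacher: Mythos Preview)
Your proposal is correct and follows essentially the same approach as the paper: obtain the corresponding abstract state trace, use \Cref{thm:gs:acst} for $s_i\in\confs(\abs_i)$, and then prove by induction on $i$ that $\abs_i$ is dominated by the abstract configuration at location $\rho_{s_i}(\pc)$, relying on monotonicity of the abstract speculative transition. The only cosmetic difference is that the paper compares $\abs_i$ to the $i$-th iterate $\abS_i(\abrho_{\abs_i}(\pc))$ and then uses monotonicity of the sequence $\{\abS_i\}$ to bound by $\fixspec(\abS)$, whereas you compare directly to $\fixspec(\abS)(\rho_{s_i}(\pc))$ and invoke the fixpoint equation in the inductive step; both arguments unwind to the same monotonicity check you identify as the main obstacle.
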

\begin{proof}
    Let \abconftrace be the corresponding abstract state trace of $\tau$, and \abstatetrace be the trace derived from $\abS_1$.
    We first proof that $\abs_i\order{\aconf}\abS_i(\abrho_{\abs_i}(\pc))$ holds for all $i\ge 0$.
    
    The proof can be done by the mathematical induction.
    For $i=1$, $\abs_1\order{\aconf}\abS_1(\abrho_{\abs_1}(\pc))$ is given by the fact that $\abS$ is a corresponding initial abstract configuration of $s_1$.
    Suppose for $1\le i\le k$,  $\abs_i\order{\aconf}\abS_i(\abrho_{\abs_i}(\pc))$ holds.
    Then for $i=k+1$, we have, 
    \begin{align*}
        \abs_{k+1} & \order{\aconf} \bigsqcup\{\abs\mid \spectran{\abs_k}{\abs}{\abo}\}\tag*{(By  $\abs_{k+1}\in \{\abs\mid \spectran{\abs_k}{\abs}{\abo}\}$) }\\
        & \order{\aconf} \bigsqcup\{\abs\mid \spectran{\abS_k(\abrho_{\abs_k}(\pc))}{\abs}{\abo}\}\tag*{(by the inductive hypothesis and the monotonicity of \spectran{}{}{})}\\
        & \order{\aconf} \bigsqcup_{j\in\pred_p(\abrho_{\abs_{k+1}}(\pc))}\{\abs\mid \spectran{\abS_k(j)}{\abs}{\abo}\}\tag*{(By $\abrho_{\abs_{k}}(\pc)\in\pred_p(\abrho_{\abs_{k+1}}(\pc))$)}\\
        & = \abS_{k+1}(\abrho_{\abs_{k+1}}(\pc))
    \end{align*}
    Therefore, $\abs_i\order{\aconf}\abS_i(\abrho_{\abs_i}(\pc))$ holds for all $i\ge 0$.

    Then by \Cref{thm:gs:acst} and the monotonicity of $\{\abS_i\}$, we have
    $$
    s_i\in \conff{\astate}{\tau}(\abs_i) \subseteq \conff{\astate}{\tau}(\abS_i(\abrho_{\abs_i}(\pc)))\subseteq \conff{\astate}{\tau}(\fixspec(\abS))
    $$
\end{proof}

Similarly, we can define the sequential transition of a abstract configuration $\abS_1$ as $\seqtran{\abS_1}{\abS_2}{}$. 
$\fix^{\text{seq}}(\abS_1)$ denotes the fixpoint of a sequential trace of abstract configurations starting from $\abS_1$.
Finally, we can also establish the soundness of abstract speculative semantics.

\begin{theorem}[Soundness of Abstract Sequential Semantics]\label{thm:gs:seq}
    Let \conctraceseq be a sequential abstract state trace, $\abS$ be a corresponding initial abstract configuration of $s_1$, and $\conff{\astate}{\tau}$ be the concretization function. 
    We have 
    $$
    s_i\in \conff{\astate}{\tau}(\fixseq(\abS))
    $$
\end{theorem}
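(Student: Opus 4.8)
The plan is to reproduce, \emph{mutatis mutandis}, the development that culminates in \Cref{thm:gs:spec}, replacing the speculative abstract semantics of \Cref{fig:tsemantics:spec:ab} by the sequential abstract semantics of \Cref{fig:tsemantics:seq:ab}. Concretely, I would first introduce the sequential analogue of \Cref{def:cor-abcontrace} --- the \emph{corresponding sequential abstract state trace} of a concrete sequential trace $\tau = (p,s_1)\xrightarrow{o_1}\cdots\xrightarrow{o_{n-1}}(p,s_n)$ --- and prove the analogue of \Cref{lemma:cor-trace-conf}, namely that every such $\tau$ has a unique corresponding sequential abstract trace $\abd{\tau} = (p,\abs_1)\xrightarrow{\abd{o_1}}\cdots\xrightarrow{\abd{o_{n-1}}}(p,\abs_n)$. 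The construction is by induction on trace length: $\abs_1$ is the corresponding initial abstract state of $s_1$ (\Cref{def:cor-abconstate}), and at each step the instruction $p(\rho_{s_i}(\pc))$ selects a unique sequential abstract rule, except for $\jzKywd$, where the value of $\rho_{s_{i+1}}(\pc)$ in the concrete trace disambiguates between \Cref{tr:app:tsemantic:seq:branch-true} and \Cref{tr:app:tsemantic:seq:branch-false}. This yields the abstract function $\abfs$ on sequential traces exactly as before, together with the concretization $\conff{\astate}{\tau}$.

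Next I would prove the sequential counterpart of \Cref{lemma:ls:spec} (local soundness): if $\abd{\tau}$ is the corresponding sequential abstract trace of $\tau$ and $s_i \in \confs(\abs_i)$, then $s_{i+1} \in \confs(\abs_{i+1})$. For every instruction other than $\jzKywd$ the sequential abstract rule is identical to the speculative one (the $\fenceKywd$ case being the one exception, but it only rewrites $\pc$, which is pinned by the construction of the corresponding trace), so the case analysis is word-for-word that of \Cref{lemma:ls:spec}, invoking \Cref{lemma:expr-vd}, \Cref{lemma:expr-td}, and \Cref{lemma:ls:vd}. The genuinely new case is $p(\rho_{s_i}(\pc)) = \pbranch{x}{l}$. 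The concrete transition (\Cref{tr:app:tsemantic:branch-step}) fixes $\rho_{s_{i+1}}(\pc)$ and leaves every register and memory cell unchanged, while the abstract rule additionally \emph{constrains} $\abrho_{\abs_i}(x)$ by meeting it with $\{[0,0]\}$ on the taken branch, or with $\{[-\infty,-1],[1,\infty]\}$ on the fall-through. Soundness then reduces to the following fact about the value domain: if $z \in \confvd(\vd)$ and $z \in \confvd(\vd')$ then $z \in \confvd(\vd \sqcap \vd')$ whenever $\vd'$ is an abstract number. I would establish this by lifting the corresponding exactness of intervals, $\cof{\interval}(I_1) \cap \cof{\interval}(I_2) = \cof{\interval}(I_1 \sqcap I_2)$, and of disjoint interval sets, through the componentwise definition of $\sqcap$ on $\VD$; the taint components $\abmu$ are untouched by the branch rules, so no taint reasoning is required, and when $z$ is represented in $\vd$ via a base--offset pair the same numeric inequality constrains the offset, so the argument goes through unchanged.

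With local soundness in hand, global soundness of the sequential abstract state transition (the analogue of \Cref{thm:gs:acst}) follows by induction on $i$, the base case being \Cref{def:cor-abconstate}. The remainder is then literally the proof of \Cref{thm:gs:spec} with $\spectran{\cdot}{\cdot}{}$ replaced by $\seqtran{\cdot}{\cdot}{}$ and $\fixspec$ by $\fixseq$: the abstract-configuration transition $\seqtran{\cdot}{\cdot}{}$ is monotone, so the sequence of abstract configurations generated from the initial one is increasing in the finite lattice $\astate$ and stabilises at $\fixseq(\abS)$; an inner induction shows $\abs_i \order{\aconf} \abS_i(\abrho_{\abs_i}(\pc))$ for all $i$; and combining this with global soundness and the monotonicity of $\conff{\astate}{\tau}$ gives $s_i \in \conff{\aconf}{\tau}(\abs_i) \subseteq \conff{\aconf}{\tau}(\abS_i(\abrho_{\abs_i}(\pc))) \subseteq \conff{\astate}{\tau}(\fixseq(\abS))$, as required.

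I expect the main obstacle to be exactly the $\jzKywd$ case of local soundness, i.e.\ proving that the branch-condition constraint never discards the concrete value being tracked. In general $\cof{}$ of a meet can be strictly contained in the intersection of the $\cof{}$'s, so this step genuinely relies on a ``meet-exact'' property of the interval and disjoint-interval-set abstractions, together with the fact that the constraint sets used in \Cref{tr:app:tsemantic:seq:branch-true} and \Cref{tr:app:tsemantic:seq:branch-false} are abstract numbers with empty offset on every allocation base; verifying that the componentwise meet on $\VD$ inherits this exactness --- including its interaction with base--offset pairs --- is the only part not already covered by the speculative development. Everything else (the construction of the corresponding trace, the non-branch cases, and the fixpoint argument) is a routine transcription of material already proved earlier in the paper.
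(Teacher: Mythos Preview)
Your proposal is correct and matches the paper's approach: the paper itself offers no explicit proof of \Cref{thm:gs:seq}, merely stating it after the sentence ``Similarly, we can define the sequential transition\ldots'' following the full development for the speculative case. Your plan to transcribe the speculative argument and handle the one genuinely new $\jzKywd$ case via meet-exactness of the interval/disjoint-interval-set abstractions is exactly the intended route, and in fact spells out more detail than the paper provides.
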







\section{\LightSLH}\label{app:LightSLH}

LightSLH  operates in three phases. 
For the first phase, LightSLH performs abstract interpretation using abstract sequential semantics.
We use a mapping $\seqfinal(i)=\conf{\abrho,\abmu}$ to denote the maximum abstract configuration the program $p$ can be after executing $p(i)$.

For the second phase, we present rules for ``utilizing the result of the first phase'' in \Cref{fig:switching}.

\begin{figure*}[h]
    \begin{center}
        \typerule{Ld-Switch}{
        n = \abrho(\pc) & p(n) = \pload{x}{e}& \vd = \exprEval{e}{\abrho_{\seqfinal(n)}} & \abt = \exprEval{e}{\abmu_{\seqfinal(n)}}\\
        \abrho_1=\abrho[x\mapsto \abrho_{\seqfinal(n+1)}(x), \pc\mapsto n+1] &
        \abmu_1=\abmu[x\mapsto \abmu_{\seqfinal(n+1)}(x)]
    }{
       \switchtran{(p,\abconf{})}{(p,\conf{\abrho_1,\abmu_1,\vmem,\tmem})}{\Obsload{\vd:\abt}{a}{b}}
    }{app:switch:load}
    \typerule{St-Switch}{
        n = \abrho(\pc) & p(\abrho(\pc)) = \pstore{x}{e} & \vd = \exprEval{e}{\abrho_{\seqfinal(n)}} & \abt = \exprEval{e}{\abmu_{\seqfinal(n)}} \\ 
        \abt_1 = {\begin{cases}
            \abvechigh & \text{if $\abhigh \in \abt$}\\
            \abmu(x)      &\text{if $\abhigh \notin \abt$}
            \end{cases}} & {\begin{aligned}
                \vmem_1 = \abstore{\vmem}(\vd, \abrho(x))\\
                \tmem_1 = \abstore{\tmem}(\vd, \abt_1)\\
                \abrho_1=\rho[\pc\mapsto\abrho(\pc)+1]
            \end{aligned}}
    }{
       \switchtran{(p,\abconf{})}{(p,\conf{\abrho_1,\abmu,\vmem_1,\tmem_1})}{\Obsstore{\vd:\abt}{a}{b}}
    }{app:switch:store}
    \end{center}
    \caption{Rules for Utilizing the Result of Sequential Abstract Interpretation}
    \label{fig:switching}
\end{figure*}

We define a transition operator $\trans: (\astate \times \powerset{\Nat} \times \astate)\to\astate$ to denote the computation of LightSLH's second phase.
$\trans$ takes an abstract configuration, a set of integers representing the program locations to be hardened, and the result of the first phase's analysis as arguments, and returns the next abstract configuration. 
Specifically, for  a program $p$ and $i\in p$, let $\abS'=\trans(\abS,\finalharden,\seqfinal)$, then $\abS'$ is given by
\begin{itemize}
    \item $\abS'(0)=\abS(0)$.
    \item $\abS'(i)=\bigsqcup\limits_{\substack{j\in\pred_p(i)}}\{\abs_2\mid\spectran{\abS_1(j)}{\abs_2}{\abo}, \abrho_{\abs_2}(\pc)=i, j\notin \hac\} \cup \{\abs_2\mid\switchtran{\abS_1(j)}{\abs_2}{\abo}, \abrho_{\abs_2}(\pc)=i, j\in \hac\}$ for $i\neq 0$.
\end{itemize}
where $$\hac = \{k\mid \text{$k\in \finalharden$ and $p(k)$ is a load or store instruction}\}$$.

We define the hardening set of an abstract configuration $\abS$ by $\hardenset(\abS)$, where
$$
\hardenset(\abS)=\{i\mid \text{There exists $\abs\in\aconf$ s.t. \spectran{\abS(i)}{\abs}{\abo} and $\abd{\thigh}\in\abo$}\}
$$
The algorithm of LightSLH's second phase is presented in \Cref{algo:ai-hk}.


        

The convergence of \Cref{algo:ai-hk} is given by (1) for fixed \finalharden and \seqfinal, $\trans(\currentstate, \finalharden, \seqfinal)$ is monotonic, and (2) \finalharden is a monotonically increasing set with an upper bound, thus  stops changing after a finite number of steps.

In the third phase, we use the approach similar to that in \Cref{lst:slh} to harden the instructions identified as requiring hardening in the second place.
Specifically, we utilize a flag to indicate whether the program is in misspeculative execution.
The flag is set to -1 during misspeculative execution and 0 otherwise.
For a program $p$, and a set of locations (denoted by $\finalharden$) where the instructions are marked as requiring hardening, for $i\in \finalharden$, we harden $p(i)$ using the following rules:
\begin{itemize}
    \item If $p(i)=\pload{x}{e}$, then $p(i)$ is hardened to $\pload{x}{e\ \orop\ \texttt{flag}}$.
    \item If $p(i)=\pstore{x}{e}$, then $p(i)$ is hardened to $\pstore{x}{e\ \orop\ \texttt{flag}}$.
    \item If $p(i)=\pbranch{x}{l}$, then $p(i)$ is hardened to $\pbranch{x\ \orop\ \texttt{flag}}{l}$.
\end{itemize}
For brevity, we allow branch instructions to take an expression (i.e., $x\ \orop\ \texttt{flag}$) as the register operand.

Given a policy $P$, we denote the hardened program by $\hardened{p}{P}$.
To facilitate the subsequent discussion, we disregard the instructions in $\hardened{p}{P}$ that compute the speculative flag when numbering the instructions in program $p$.
Consequently, for $i\in p$, the instruction types of $p(i)$ and $\hardenp(i)$ become identical.

To introduce the following theorem, we make  reasonable additions to the semantics in \Cref{fig:tsemantics}: a value loaded from an invalid address (e.g., -1 is considered as an invalid address) is represented by an empty value $\epsilon$ (and a taint vector $\vec{\bot}$), and stores to an invalid address will not change the memory (since such stores never take place).
We further assume any computation with an $\epsilon$ value as the operand results in an $\epsilon$ value.
In particular, we let $\epsilon \in A$ holds for any set $A$ (in other words, we let $\epsilon$ be the symbol representing the concretization of $\emptyset$).

\lightslh*
\begin{proof}
    Let $\tau = (\hardenp,s_1) \xrightarrow[d_1]{o_1} (\hardenp,s_2) \xrightarrow[d_2]{o_2}\cdots$ be a sequential trace of \hardenp.
    Let $\abS_1$ be a corresponding abstract configuration of $s_1$, $\seqfinal$ be the fixpoint of sequential abstract interpretation, and \finalharden be the output set of \Cref{algo:LightSLH} which takes $\abS_1$ and  $\seqfinal$ as inputs.

    Given $\abS_1$,  there is a trace $\abS_1,\abS_2,\cdots$ such that for $i\ge 1$, $\abS_{i+1}=\trans(\abS_i,H,\seqfinal)$.

    We proof that for $i\ge 1$, we have $s_i\in \conff{\aconf}{\tau}(\abS_i(\rho_{s_i}(\pc)))$.\hfill (*)

    (*) holds for $i=1$ naturally. 
    Suppose (*) holds for $i=k$, then we discuss the case for $k+1$. 




    Let $n=\rho_{s_k}(\pc)$ and $n'=\rho_{s_{k+1}}(\pc)$. 
    If $n\notin \hac$, then for $\spectran{s_{k}}{s_{k+1}}{o_k}$, there exists $\abs\in\aconf$ such that $\spectran{\abS_k(n)}{\abs}{\abo}$ and $\abrho_{\abs}(\pc)=n'$.
    By \Cref{lemma:ls:spec}, we have $s_{k+1}\in\confs(\abs)$. Therefore,
    $$
    \begin{aligned}
        s_{k+1}&\in\confs(\abs) \\&\subseteq  \bigcup\limits_{\substack{j\in\pred_p(n')\\j\notin \hac}}  \{s\in\conff{\aconf}{\tau}(\abs)\mid \spectran{\abS_k(j)}{\abs}{\abo},\abrho_{\abs}(\pc)=n'\} \\&\subseteq \conff{\aconf}{\tau}(\abS_{k+1}(n'))
    \end{aligned}
    $$

    If $n(\pc)\in \hac$, let $\abs=\abS_{k}(n)$ and  \switchtran{\abs}{\absprime}{\abd{o'}}.
    \begin{itemize}
        \item If $p(n)=\pload{x}{e}$.
        \begin{itemize}
            \item If $f_{s_k}=\bot$, then $(\hardenp,s_1)\cdots(\hardenp,s_k)$ is a prefix of a sequential trace.   
            By \Cref{thm:gs:seq} and \Cref{tr:app:switch:load} we have
            $
                \rho_{s_{k+1}}(x)\in \confvd(\abrho_{\seqfinal(n)}(x)) = \confvd(\abrho_{\absprime}(x))
            $
            Similarly we have $\mu_{s_{k+1}}(x) \in \cof{\atd}(\abmu_{\absprime}(x))$.
            Thus, combining induction hypothesis, $s_{k+1}\in\confs(\absprime)$ holds.
            \item  If $f_{s_k}=\top$, since $p(n)$ is being hardened and $e$ is evaluated to an invalid address (i.e., $-1$), we have $\rho_{s_{k+1}}(x)= \epsilon \in \confvd(\abrho_{\absprime}(x))$ and $\mu_{s_{k+1}}(x)= \vec{\bot} \in \cof{\atd}(\abrho_{\absprime}(x))$.
            Therefore,  $s_{k+1}\in\confs(\absprime)$ holds.
        \end{itemize}
        \item If $p(n)=\pstore{x}{e}$.
         \begin{itemize}
            \item If $f_{s_k}=\bot$ then $(\hardenp,s_1)\cdots(\hardenp,s_k)$ is a prefix of a sequential trace.
            Let $m=\exprEval{e}{\rho_{s_k}}$.
            Then  
            \begin{align*}
                \rho_{s_{k+1}}(m) & = \rho_{s_k}(x)\tag*{(By \Cref{tr:app:tsemantic:seq:store})}\\
                &\in \confvd({\abrho_{\abs}(x)}) \tag*{(By induction hypothesis)}\\
            &\subseteq \confvd(\abstore{\vmem_{\abs}}(\abaddr{\tau}(\exprEval{e}{\rho_{s_k}}), \abrho_{\abs}(x))(\abaddr{\tau}(m))) \tag*{(By  the definition of \abaddr{\tau} and \abstore{\vmem})}\\
            &\subseteq \confvd(\abstore{\vmem_{\abs}}(\exprEval{e}{\abrho_{\seqfinal(n)}}, \abrho_{\abs}(x))(\abaddr{\tau}(m)))  \tag*{(By \Cref{thm:gs:seq})}\\
            & = \confvd(\vmem_{\absprime}(\abaddr{\tau}(m))) \tag*{(By \Cref{tr:app:switch:store})}
            \end{align*}
            For $l\neq m$ and $l\in \Nat$, we have
            $$
            \begin{aligned}
                \rho_{s_{k+1}}(l) &= \rho_{s_{k+1}}(l) \in\confvd(\vmem_{\abs}(\abaddr{\tau}(l)))\\&\subseteq\confvd(\vmem_{\absprime}(\abaddr{\tau}(l)))
            \end{aligned}
            $$
            Therefore, for any $l\in \Nat$, we have $\rho_{s_{k+1}}(l)\in \confvd(\vmem_{\absprime}(\abaddr{\tau}(l)))$.
            Similarly, $\mu_{s_{k+1}}(l)\in \confvd(\tmem_{\absprime}(\abaddr{\tau}(l)))$ holds for any $l\in\Nat$.
            Consequently, we have $s_{k+1}\in\confs(\absprime)$.
            \item If $f_{s_k}=\top$. Then for $l\in \Nat$, we have
            $$
            \begin{aligned}
                \rho_{s_{k+1}}(l)&=\rho_{s_{k}}(l)\in \confvd(\vmem_{\abs}(\abaddr{\tau}(l))) \\&\subseteq \confvd(\abstore{\vmem_{\abs}}(\exprEval{e}{\abrho_{\seqfinal(n)}}, \abrho_{\abs}(x))(\abaddr{\tau}(l)))\\& = \confvd(\vmem_{\absprime}(\abaddr{\tau}(l)))
            \end{aligned}
            $$
            Similarly, for  $l\in \Nat$, $\rho_{s_{k+1}}(l)\in \confvd(\tmem_{\absprime}(\abaddr{\tau}(l)))$.
            Consequently, we have $s_{k+1}\in\confs(\absprime)$.
        \end{itemize}
    \end{itemize}
    Thus we have $s_{k+1}\in\confs(\absprime)$ in all the cases.
    Considering $\absprime \subseteq \abS_{k+1}(n')$, we have $s_{k+1}\in \confs(\abS_{k+1}(n'))$.
    So (*) holds for all $i\ge 1$.

    Let $\abS'_i$ be a list where $\abS'_1=\abS_1$ and $\abS'_i$ is the value of $\currentstate$ when entering the loop in \Cref{algo:LightSLH} for the i-th time.
    Suppose that the value of \hardenlist remains constant (i.e., equals to \finalharden) starting from the $k$-th iteration of the loop.
    Then for $i\ge k$, we have $\abS'_{i+1}=\trans(\abS'_{i},H,\seqfinal)$.

    Given $\abS_1(0)=\abS'_1(0)=\abS'_k(0)$ and for $j\neq0$, $\abS_1(j)=\botof{\aconf}\order{\aconf}\abS'_k(j)$, we have $\abS_1\order{\astate}\abS'_k$.
    By the monotonicity of $\trans$ (when \seqfinal and \finalharden is fixed), we have 
    $$\abS_i = \trans^i(\abS_1,\finalharden,\seqfinal)  \order{\astate} \trans^i(\abS'_k,\finalharden,\seqfinal) \order{\astate} \fix^\trans(\abS'_k) $$

    Thus, considering the monotonicity of $\hardenset(\abS)$, we have
    $\hardenset(\abS_i)\subseteq \hardenset(\fix^\trans(\abS'_k)) = \finalharden$.\hfill (**)

    Then for $l$ and $ (\hardened{p}{P},s_l)\xrightarrow[d_l]{o_l} (\hardened{p}{P},s_{l+1})$, if $\rho_{s_l}(\pc)\notin\finalharden$, by (*), (**) and the definition of $\hardenset(\abS)$, we have $\thigh\notin t(o_l)$.
    If $\rho_{s_l}(\pc)\in\finalharden$, indicating that $\hardened{p}{P}(\rho_{s_l}(\pc))$ has been hardened, according to our hardening methods, when $f_{s_l}=\top$, we have $\thigh\notin\vec{\tone}=t(o_l)$. 
    Consequently,
    $$\forall i\ge 1, f_{s_i} = \top \Rightarrow\thigh\notin t(o_i) $$

    Proof done.
\end{proof}

\end{document}
